\let\oldFootnote\footnote
\newcommand\nextToken\relax
\renewcommand\footnote[1]{%
	\oldFootnote{#1}\futurelet\nextToken\isFootnote}
\newcommand\isFootnote{%
	\ifx\footnote\nextToken\textsuperscript{,}\fi}
\newcommand{\mb}{\mathbb}
\newcommand{\mc}{\mathcal}
\newcommand{\tbf}{\textbf}
\newcommand{\mbf}{\mathbf}
\newcommand{\msf}{\mathsf}
\newcommand{\mtt}{\mathtt}
\newcommand{\mrm}{\mathrm}
\newcommand{\n}{\enspace}
\newcommand{\tx}{\text}
\newcommand{\wt}{\widetilde}
\newcommand{\ord}{\tx{\normalfont ord}}
\newcommand{\iref}[2]{(\hyperref[#2]{\ref*{#1}.\ref*{#2}})}
\newcommand{\diag}{\tx{\normalfont diag}}
\newcommand{\coeff}{\tx{\normalfont coeff}}
\newcommand{\email}[1]{\href{mailto:#1}{\textcolor{NavyBlue}{\texttt{#1}}}}
\newcommand{\FRS}{\mathrm{FRS}}
\newcommand{\FRM}{\tx{\normalfont FRM}}
\newcommand{\Mult}{\mathrm{Mult}}
\newcommand{\Qmult}{\Q\mhyphen\tx{\normalfont Mult}}
\newcommand{\Q}{\mtt{Q}}
\newcommand{\charac}{\tx{\normalfont char}}
\DeclareRobustCommand{\sqbinom}{\genfrac[]{0pt}{}}
\newcommand{\Qbinom}[2]{\sqbinom{#1}{#2}_\Q}
\newcommand{\DQ}{\msf{D}_\Q}
\newcommand{\DQi}[1]{\msf{D}_{\Q,#1}}
\newcommand\nodot[1]{}
\mathchardef\mhyphen="2D
\theoremstyle{theorem}
\newtheorem{theorem}{Theorem}[section]
\newtheorem*{claim*}{Claim}
\newtheorem{proposition}[theorem]{Proposition}
\newtheorem{lemma}[theorem]{Lemma}
\newtheorem{corollary}[theorem]{Corollary}
\newtheoremstyle{TheoremNum}
{\topsep}{\topsep}              %%% space between body and thm
{\itshape}                      %%% Thm body font
{}                              %%% Indent amount (empty = no indent)
{\bfseries}                     %%% Thm head font
{.}                             %%% Punctuation after thm head
{ }                             %%% Space after thm head
{\thmname{#1}\thmnote{ \bfseries #3}}%%% Thm head spec
\theoremstyle{TheoremNum}
\theoremstyle{definition}
\newtheorem{remark}[theorem]{Remark}
	\protected@write\@auxout{}{%
		\string\@restatetheorem{#1}{\detokenize\expandafter{\BODY}}%
	}%
\BODY\end{theorem}%
\newcommand{\@restatetheorem}[2]{%
	\expandafter\gdef\cNsame restatethm@#1\endcNsame{#2}%
}
\newcommand*{\transpose}{%
	{\mathpalette\@transpose{}}%
}
\newcommand*{\@transpose}[2]{%
	% #1: math style
	% #2: unused
	\raisebox{\depth}{$\m@th#1\intercal$}%
}
\title{Polynomials, Divided Differences, and Codes~\footnote{A preliminary version of this work is due to appear at ITCS 2025.  In the preliminary version, we present our algorithmic result for the list recovery problem.  Since it is an obvious extension (\`a la~\cite{guruswami-wang-2013-FRS}) of our algorithmic result for list decoding, we only present the list decoding algorithm in this version to keep things simple.}}
\author{S. Venkitesh~\thanks{Blavatnik School of of Computer Science, Tel Aviv University, Tel Aviv, Israel.  Supported by Len Blavatnik and the Blavatnik Family Foundation.  A part of this work was done while the author was a postdoc at the Department of Computer Science, University of Haifa, supported in part by BSF grant 2021683, ISF grant 735/20, and by the European Union (ERC, ECCC, 101076663), as well as while the author was participating in the program on Error-Correcting Codes and Computation (2024) at the Simons Institute for the Theory of Computing, UC Berkeley.\newline\url{https://sites.google.com/view/venkitesh}.}\\Tel Aviv University\\\email{venkitesh.mail@gmail.com}}
\date{}
\begin{document}
	
	\maketitle
	
	\begin{abstract}
		Multivariate multiplicity codes (Kopparty, Saraf, and Yekhanin, J. ACM 2014) are linear codes where the codewords are described by evaluations of multivariate polynomials (with a degree bound) and their derivatives up to a fixed order, on a suitably chosen affine point set.  While good list decoding algorithms for multivariate multiplicity codes were known in some special cases of point sets by a reduction to univariate multiplicity codes, a general list decoding algorithm up to the distance of the code when the point set is an arbitrary finite grid, was obtained only recently (Bhandari et al., IEEE TIT 2023).  This required the characteristic of the field to be zero or larger than the degree bound, and this requirement is somewhat necessary, since list decoding this code up to distance with small output list size is not possible when the characteristic is significantly smaller than the degree.
		
		In this work, we present an alternate construction, based on divided differences, that closely resembles the classical multiplicity codes but is \emph{insensitive to the field characteristic}.  We obtain an efficient algorithm that list decodes this code up to distance, for arbitrary finite grids and over all finite fields.  Notably, our construction can be interpreted as a \emph{folded Reed-Muller code}, which might be of independent interest.  The upshot of our result is that a good \emph{Taylor-like expansion} can be expressed in terms of a good \emph{derivative-like operator} (a divided difference), and this implies that the corresponding code admits good algorithmic list decoding.
%		
%		This algorithm admits a simpler analysis than its classical counterpart.
%		
%		We also extend our algorithm to a larger class of \emph{affine divided difference codes} that admit a good \emph{Taylor-like expansion}.  In particular, we get a simpler algorithm to list decode even the classical multiplicity codes!  
	\end{abstract}
	
	\newpage
	\tableofcontents
	
	\newpage
	\section{Introduction and overview}\label{sec:intro}
	
	Codes based on polynomial evaluations have found widespread applications, both implicitly and explicitly, in theoretical computer science, combinatorics, and allied areas.  The prototypical univariate polynomial code is the \emph{Reed-Solomon (RS) code}~\cite{reed,reed-solomon-1960}, wherein the codewords are evaluations of bounded degree univariate polynomials on a set of points, and the prototypical multivariate polynomial code is the analogously defined \emph{Reed-Muller (RM) code}~\cite{muller-1954-RM-codes,reed,kasami-lin-peterson-1968-RM-codes,weldon-1968-RM-codes,delsarte-goethals-macwilliams-1970-RM-codes}.~\footnote{The RM code was considered over the binary field by~\cite{muller-1954-RM-codes,reed}, and over larger fields by~\cite{kasami-lin-peterson-1968-RM-codes,weldon-1968-RM-codes,delsarte-goethals-macwilliams-1970-RM-codes}.}  Despite several decades of research, the decodability properties of these codes are far from being fully understood.
	
	\subsection{The list decoding problem}
	
	In this work, we are specifically interested in the \emph{list decoding problem}~\cite{elias-1957-list-decoding,wozencraft-1958-list-decoding}, where the aim is to output a list of close codewords for any given received word efficiently.  The main parameter of interest is the \emph{list decoding radius}, which represents the fraction of disagreements that we can decode from, and we would like it to be as large as possible.
	
	\subsection*{Unique decoding of polynomial codes}
	
	To begin with, consider the easier setting where we fix the list decoding radius to be \emph{half the minimum distance} of the code, which is precisely the setting of the \emph{unique decoding problem} -- in this case, there could only be at most one codeword this close to any given received word, and the aim is to efficiently find it if it exists.  This problem has been solved completely for RS codes~\cite{peterson,massey,welch-1983}, for univariate multiplicity codes~\cite{nielsen-2001-multiplicity-unique-decoding}, for RM codes in a special case~\cite{kasami-lin-peterson-1968-polynomial-codes,babai-fortnow-levin-szegedy-1991-computation-polylog-time}, for RM codes in general~\cite{kim-kopparty-2017}, and recently for multivariate multiplicity codes~\cite{bhandari-harsha-kumar-shankar-2023-multiplicity-SZ}.  What is noteworthy is that all of these algorithms are \emph{insensitive to the field characteristic}.  We now move on to consider larger list decoding radius.
	
	\subsection*{List decoding of univariate polynomial codes}
	
	In the last decade of the 20th century, two classic works~\cite{sudan-1997-RS,guruswami-sudan-1998-RS-list-decoding} showed that all RS codes can be list decoded up to the \emph{Johnson bound} (which is relative radius \(1-\sqrt{R}\), where \(R\) is the rate of the code), and also laid an algebraic algorithmic framework that has since been exploited over and over again for several other polynomial codes.  Further, we do not know of explicit RS codes that can be list decoded beyond the Johnson bound, but we do know of explicit RS codes that are not list decodable significantly beyond the Johnson bound~\cite{ben-sasson-kopparty-radhakrishnan-2010-subspace-polynomials}.  Due to recent breakthroughs~\cite{brakensiek-gopi-makam-2023-random-RS,guo-zhang-2023-random-RS,alrabiah-guruswami-li-2023-random-RS}, we now know that random RS codes are list decodable \emph{up to (information theoretic) capacity} (that is, relative radius \(1-R-\epsilon\) for arbitrarily small but fixed \(\epsilon>0\), which represents getting approximately close to the information theoretic limit of list decoding), but we neither know of an explicit construction nor know of an algorithm to achieve this.
	
	It is in this context of list decoding up to information theoretic capacity that a \emph{folding} trick has been enlightening.  The works of~\cite{guruswami-rudra-2008-FRS,guruswami-wang-2013-FRS} showed that \emph{folded Reed-Solomon (FRS) codes} (where the bits of an RS codeword are bunched into blocks of large constant size), and \emph{univariate multiplicity codes}~\cite{kopparty-2015-multiplicity-code} (wherein the codewords are evaluations of polynomials and their derivatives up to a large constant multiplicity) can be algorithmically list decoded up to capacity!  Further improvements to the output list size have been achieved in~\cite{kopparty-ron-zewi-saraf-wootters-2023-list-decoding,tamo-2023-tighter-list-size,srivastava-2024-FRS,chen-zhang-2024-FRS-list-size} in different settings.
	
	\subsection*{List decoding of multivariate polynomial codes}
	
	Moving on to multivariate polynomial codes, while the list decoding algorithms of~\cite{guruswami-sudan-1998-RS-list-decoding} generalize easily to RM codes, the decoding radius worsens as a function of the number of variables, and attaining Johnson bound is no longer possible with this algorithm.  While we know of a Johnson bound attaining algorithm~\cite{pellikaan-wu-2004-list-decoding-RM-codes} for RM codes when the finite grid is the full vector space \(\mb{F}_q^m\) over the base field \(\mb{F}_q\), designing a more general algorithm for arbitrary finite grids is still open!
	
	As it turns out, the folding trick helps in the multivariate setting too.  The multivariate multiplicity codes~\cite{kopparty-saraf-yekhanin-2014-multiplicity-codes}, which are the obviously defined multivariate analogue of univariate multiplicity codes, were shown to be list decodable \emph{up to distance} (that is, up to radius \(\delta-\epsilon\), where \(\delta\) is the minimum distance of the code) by~\cite{kopparty-2015-multiplicity-code} when the set of evaluation points is the full vector space \(\mb{F}_q^m\).  However, just as in the case of RM codes, an analogous result for arbitrary finite grids evaded us for quite some time.  Very recently,~\cite{bhandari-harsha-kumar-sudan-2024-multiplicity-code} finally bolstered the algebraic algorithmic framework with some simple additional ingredients and managed to design an algorithm for list decoding arbitrary multivariate multiplicity codes up to distance.
	
	\subsection*{Sensitivity of list decodability to the field characteristic}
	
	Finally, yet another crucial component that could influence the performance of the code (specifically a polynomial code) is the most fundamental underlying algebraic object -- the base field over which the polynomials are considered, which is always a finite field in our discussion.
	
	Unlike the unique decoding algorithms that we mentioned earlier, we do observe sensitivity to the field characteristic in the setting of the list decoding problem.  Let us fix the notation that \(\mb{F}_q\) denotes the base field, and \(p\) denotes the characteristic of \(\mb{F}_q\).  On one hand, we know that when \(q\) is a large power of \(p\), there are RS codes that cannot achieve list decoding capacity~\cite{ben-sasson-kopparty-radhakrishnan-2010-subspace-polynomials} with small output list size.  On the other hand, upon large constant folding, the univariate multiplicity codes achieve capacity when \(p\) is larger than the degree~\cite{guruswami-wang-2013-FRS,kopparty-2015-multiplicity-code}, or smaller but linear in the degree parameter~\cite{kopparty-ron-zewi-saraf-wootters-2023-list-decoding}, and the FRS codes achieve list decoding capacity insensitive to how \(p\) is relative to the degree~\cite{guruswami-rudra-2008-FRS,guruswami-wang-2013-FRS}.  Thus, the list decodability of the FRS code is \emph{insensitive to the field characteristic}, whereas the list decodability of the multiplicity code is sensitive to the field characteristic.  Indeed, in the case when \(q\) is a large power of \(p\), the large output lists shown by~\cite{ben-sasson-kopparty-radhakrishnan-2010-subspace-polynomials} for RS codes can be used to construct large output lists for multiplicity codes, and so the list decodability is provably poor.
	
	Things get even more interesting in the multivariate setting, since the multivariate multiplicity codes achieve distance when \(p\) is larger than the degree parameter~\cite{pellikaan-wu-2004-list-decoding-RM-codes,bhandari-harsha-kumar-sudan-2024-multiplicity-code}, but we do not know of a field characteristic insensitive construction of a multivariate polynomial code that algorithmically achieves distance.  Furthermore, we also do not know of an appropriate multivariate analogue of the field characteristic insensitive univariate FRS code.  In this work, we answer both questions with a single code construction, along with an algorithm for list decoding up to distance.  In particular, our code subsumes the list decoding performance of the field characteristic sensitive multivariate multiplicity code~\cite{bhandari-harsha-kumar-sudan-2024-multiplicity-code}, and also provides a natural \emph{folded Reed-Muller code} construction.  The motivation for our construction comes from an extremely simple yet foundational observation within the univariate world itself -- the FRS code is also a multiplicity code!
	
	\subsection{FRS codes, univariate multiplicity codes, and divided differences}\label{sec:FRS-mult-DD}
	
	Fix a field \(\mb{F}_q\).  Consider distinct nonzero points \(a_1,\ldots,a_n\in\mb{F}_q\).  Let \(\gamma\in\mb{F}_q^\times\) be a multiplicative generator, and suppose \(s\ge1\) such that the points \(\gamma^ja_i,\,j\in[0,s-1],\,i\in[n]\) are all distinct.  For \(k\in[sn]\), the \tbf{degree-\(k\) Folded Reed-Solomon (FRS) code} is defined by
	\[
	\FRS_s(a_1,\ldots,a_n;k)=\left\{[f]_{\FRS}\coloneqq\begin{bmatrix}
		f(a_1)&\cdots&f(a_n)\\f(\gamma a_1)&\cdots&f(\gamma a_n)\\\vdots&\ddots&\vdots\\f(\gamma^{s-1}a_1)&\cdots&f(\gamma^{s-1}a_n)
	\end{bmatrix}:f(X)\in\mb{F}_q[X]\deg(f)<k\right\}\subseteq\big(\mb{F}_q^s\big)^n.
	\]
	Here, the distance function is the \tbf{Hamming distance} on alphabet \(\mb{F}_q^s\), that is, the \tbf{Hamming weight}~\footnote{We are only concerned with codes that are linear over the base field \(\mb{F}_q\), and so the Hamming distance between two codewords \(c_1,c_2\) is always equal to the Hamming weight of the codeword \(c_1-c_2\).} of a codeword \([f]_{\FRS}\) is the number of \(i\in[n]\) such that \(\begin{bmatrix}
		f(a_i)&f(\gamma a_i)&\cdots&f(\gamma^{s-1}a_i)
	\end{bmatrix}^\transpose\) is a nonzero vector in \(\mb{F}_q^s\).

	On the other hand, simply assuming \(a_1,\ldots,a_n\in\mb{F}_q\) are distinct, for \(k\in[sn]\), the \tbf{degree-\(k\) univariate multiplicity code} is defined by
	\[
	\Mult_s(a_1,\ldots,a_n;k)=\left\{[f]_{\Mult}\coloneqq\begin{bmatrix}
		f^{(0)}(a_1)&\cdots&f^{(0)}(a_n)\\f^{(1)}(a_1)&\cdots&f^{(1)}(a_n)\\\vdots&\ddots&\vdots\\f^{(s-1)}(a_1)&\cdots&f^{(s-1)}(a_n)
	\end{bmatrix}:f(X)\in\mb{F}_q[X]\deg(f)<k\right\}\subseteq\big(\mb{F}_q^s\big)^n,
	\]
	where \(f^{(j)}(X)\coloneqq\frac{\mathrm{d}^jf(X)}{\mathrm{d}X^j}\) (the \(j\)-th classical derivative) for all \(j\in[0,s-1]\).~\footnote{Typically, while working over finite fields, there will also be a \(j!\) scaling factor multiplied with the classical derivative to prevent some pathologies about ``higher multiplicity vanishing'' at a point.  With  this scaling factor, the derivative is usually called the \tbf{Hasse derivative}.  We do not consider this, and implicitly assume that the characteristic is large enough so that these pathologies do not arise.  Indeed, this is precisely the setting where the multiplicity codes have good list decodability.}  Once again, the Hamming weight of a codeword \([f]_{\Mult}\) is the number of \(i\in[n]\) such that \(\begin{bmatrix}
		f^{(0)}(a_i)&f^{(1)}(a_i)&\cdots&f^{(s-1)}(a_i)
	\end{bmatrix}^\transpose\) is a nonzero vector in \(\mb{F}_q^s\).

	Let us begin by looking at an algebraic feature that is intrinsic to multiplicity codes.  Consider any \(f(X)\in\mb{F}_q[X]\) with \(\deg(f)<k\).  For any \(a\in\mb{F}_q\), we get the standard Taylor expansion
	\[
	f(X)=f^{(0)}(a)+f^{(1)}(Y)(X-a)+f^{(2)}(a)\frac{(X-a)^2}{2!}+\cdots+f^{(k-1)}(a)\frac{(X-a)^{k-1}}{(k-1)!},
	\]
	provided \(p\coloneqq\charac(\mb{F}_q)\ge k\).  Is there an analogous Taylor-like expansion in the context of the FRS code?  As it turns out, there is such an expansion, and it is even simpler -- it is the expansion in terms of \emph{Newton forward differences}.  Precisely, we have
	\[
	f(X)=f^{[0]}(a)+f^{[1]}(a)(X-a)+f^{[2]}(a)(X-a)(X-\gamma a)+\cdots+f^{[k-1]}(a)(X-a)\cdots(X-\gamma^{k-1}a),
	\]
	where \(f^{[0]}(a)=f(a)\), and
	\[
	f^{[1]}(a)=\frac{f^{[0]}(\gamma a)-f^{[0]}(a)}{(\gamma-1)a},\n f^{[2]}(a)=\frac{f^{[1]}(\gamma a)-f^{[1]}(a)}{(\gamma^2-1)a},\n\ldots,\n f^{[k-1]}(a)=\frac{f^{[k-2]}(\gamma a)-f^{[k-2]}(a)}{(\gamma^{k-1}-1)a}.
	\]
	Notably, since \(\gamma\in\mb{F}_q^\times\) is a multiplicative generator, we always have \(\ord(\gamma)=q-1\ge k\), and therefore, this expansion is valid \emph{unconditional on \(p\)}.
	
	In the first decade of the 20th century, Jackson~\cite{jackson-1909-q-taylor-theorem,jackson-1909-q-taylor-series,jackson-1909-q-functions} (also see~\cite{jackson-1910-q-difference-equations,jackson-1910-q-integrals}) considered the following specific \emph{divided difference} operator
	\[
	\msf{D}_\gamma f(X)\coloneqq\frac{f(\gamma X)-f(X)}{(\gamma-1)X},\n\tx{and}\n\msf{D}_\gamma^{t+1}\coloneqq\msf{D}_\gamma\circ\msf{D}_\gamma^t\n\tx{for all }t\ge1.
	\]
	In terms of this operator, the above Taylor-like expansion would become
	\[
	f(X)=f(a)+\msf{D}_\gamma(a)(X-a)+\msf{D}_\gamma^{2}(a)\frac{(X-a)(X-\gamma a)}{[2]_\gamma!}+\cdots+\msf{D}_\gamma^{k-1}(a)\frac{(X-a)\cdots(X-\gamma^{k-1}a)}{[k-1]_\gamma!},
	\]
	where \([t]_\gamma\coloneqq\frac{\gamma^t-1}{\gamma-1}=1+\gamma+\gamma^2+\cdots+\gamma^{t-1}\), and \([t]_\gamma!\coloneqq[t]_\gamma[t-1]_\gamma\cdots[1]_\gamma\).  Returning to the FRS code \(\FRS_s(a_1,\ldots,a_n;k)\), it is then immediate by definition of \(D_\gamma\) that there are invertible lower triangular matrices \(U(a_i)\in\mb{F}_q^{s\times s},\,i\in[n]\) such that for any codeword \([f]_{\FRS}\), we have
	\[
	\begin{bmatrix}
		f(a_i)\\\msf{D}_\gamma f(a_i)\\\vdots\\\msf{D}_\gamma^{s-1}f(a_i)
	\end{bmatrix}=U(a_i)\cdot\begin{bmatrix}
	f(a_i)\\f(\gamma a_i)\\\vdots\\f(\gamma^{s-1}a_i)
	\end{bmatrix}\quad\tx{for all }i\in[n].
	\]
	In other words, after a specific change of basis, the FRS code can be interpreted as a \emph{multiplicity code} with respect to the \emph{derivative operator} \(\msf{D}_\gamma\).
	
	Note that we know two list decoding algorithms for FRS codes~\cite{guruswami-rudra-2008-FRS,guruswami-wang-2013-FRS}, with the first one being algebraic and the second one being linear algebraic.  If we are given the \emph{\(\msf{D}_\gamma\)-multiplicity encoding}, we may just change the basis to the usual FRS encoding and run one of these two FRS decoders.  However, what is really encouraging is that we can straightforwardly adapt the linear algebraic multiplicity decoder~\cite{guruswami-wang-2013-FRS} for our \(\msf{D}_\gamma\)-multiplicity encoding, and the analysis will have a strong resemblance with that in the case of the classical derivatives.  In this work, our main contribution is a construction that pushes this resemblance to the multivariate setting.
	
	\paragraph*{Assumption on the field.}  Since our intention is to present the basic ideas as clearly and quickly as possible, we will stick to a simpler setting henceforth, where we assume that the evaluation points are contained in a finite field \(\mb{F}_q\), but the polynomials are over a degree-3 field extension \(\mb{K}=\mb{F}_{q^3}\).  We can relax this setting to some extent by tinkering with the parameters involved, but we will not make this effort.
	
	\paragraph*{The \(\Q\)-derivative.}  We will now forego the use of the symbol \(\gamma\) for the multiplicative generator.  Instead we will denote \(\Q\in\mb{K}^\times\) to be a multiplicative generator, and henceforth refer to the operator \(\DQ\) as the \tbf{\(\Q\)-derivative}.  This is the more standard terminology in the literature where this operator has appeared before.  The \(\Q\)-derivative~\footnote{We use the notation `\(\Q\)' in \(\Q\)-derivative in this work instead of the more popular `\(q\)', since we use \(q\) to denote the field size.} is also called the \emph{Jackson derivative}, and is a special case of the \emph{Hahn derivative}~\cite{hahn-1949-derivative}, as well as the \emph{Newton forward difference}~\cite[Chapter 1, Section 9]{jordan-1950-finite-calculus}.  See~\cite[Chapter 26]{kac-cheung-2002-quantum-calculus} for a discussion on a slightly more general \emph{quantum differential}, as well as some symmetrized variants.  These are all instances of a broader notion of \emph{divided difference} in \emph{finite calculus}~\cite{lacroix-1819-finite-calculus,boole-1860-finite-calculus,norlund-1924-finite-calculus,steffensen-1927-interpolation,jordan-1950-finite-calculus,richardson-1954-finite-calculus}.  The \(\Q\)-derivative seems to have been used so far only over fields of characteristic zero, particularly in \emph{\(q\)-combinatorics}~\cite{exton-1983-q-hypergeometric,andrews-1986-q-series,gasper-rahman-2004-hypergeometric,roman-2005-umbral-calculus,foata-han-2021-q-series} and \emph{quantum calculus}~\cite{kac-cheung-2002-quantum-calculus,ernst-2012-q-calculus,larsson-silvestrov-2003-q-difference}.  Further, the \(\Q\)-derivative does not seem to have made any explicit appearance so far in the \emph{polynomial method} literature.  However, as we see in this work, this notion turns out to be useful over fields of positive characteristic, and indeed, as part of the polynomial method.
	
	\subsection{Multivariate \(\Q\)-derivatives, multivariate \(\Q\)-multiplicity code, and \emph{folded} RM code}\label{sec:multi-Q-FRM}
	
	Our extension of \(\Q\)-derivatives from the univariate to multivariate setting is natural, and mimicks the extension of the classical derivatives.  Assume indeterminates \(\mb{X}=(X_1,\ldots,X_m)\), and for any \(f(\mb{X})\in\mb{K}[\mb{X}]\) and \(\alpha\in\mb{N}^m\), we define the \tbf{\(\alpha\)-th \(\Q\)-derivative} as the iterated operator
	\[
	\DQ^\alpha f(\mb{X})\coloneqq\DQi{X_1}^{\alpha_1}\cdots\DQi{X_m}^{\alpha_m}f(\mb{X}),
	\]
	where \(\DQi{X_i}\) denotes the univariate \(\Q\)-derivative in the variable \(X_i\).  It is easy to see that the operator \(\DQ^\alpha\) is well-defined and invariant under the order of iterations, just as in the case of the classical partial derivatives.
	
	Now consider indeterminates \(\mb{Y}=(Y_1,\ldots,Y_m)\).  For any \(t\ge0\), denote \((X_i-Y_i)_\Q^{(t)}\coloneqq(X_i-Y_i)(X_i-\Q Y_i)\cdots(X_i-\Q^{t-1}Y_i)\), and further, for any \(\alpha\in\mb{N}^m\), denote \((\mb{X}-\mb{Y})_\Q^{(\alpha)}\coloneqq\prod_{i=1}^m(X_i-Y_i)^{(\alpha_i)}_\Q\).  Also denote \([\alpha]_\Q!\coloneqq[\alpha_1]_\Q!\cdots[\alpha_m]_\Q!\).  We will note that the two important algebraic features of unvariate \(\Q\)-derivatives that we saw in~\cref{sec:FRS-mult-DD} extend in an \emph{inductive} fashion to the multivariate setting.  For any \(f(\mb{X})\in\mb{K}[\mb{X}]\), we have the multivariate Taylor-like expansion
	\[
	f(\mb{X})=\sum_{\alpha\in\mb{N}^m}\frac{\DQ^\alpha f(\mb{Y})}{[\alpha]_\Q!}(\mb{X}-\mb{Y})_\Q^{(\alpha)}.
	\]
	Further, for any \(\alpha\in\mb{N}^m\), denote \(\Q^\alpha\mb{X}=(\Q^{\alpha_1}X_1,\ldots,\Q^{\alpha_m}X_m)\).  Then for any \(s\ge1\) and \(a\in(\mb{F}_q^\times)^m\), there exists an invertible matrix \(U(a)\in\mb{K}^{\binom{m+s-1}{s-1}\times\binom{m+s-1}{s-1}}\) such that
	\[
	\begin{bmatrix}
		\DQ^\gamma f(a)
	\end{bmatrix}_{|\gamma|<s}=U(a)\cdot\begin{bmatrix}
	f(\Q^\gamma a)
\end{bmatrix}_{|\gamma|<s}\qquad\tx{for all }f(\mb{X})\in\mb{K}[\mb{X}].
	\]
	
	Fix any \(s\ge1\), and consider any nonempty \(A\subseteq\mb{F}_q^\times\).  For any \(k\in[s|A|]\), we define the \tbf{\(m\)-variate multiplicity-\(s\) degree-\(k\) \(\Q\)-multiplicity code} by
	\[
	\Qmult_{m,s}(A;k)=\Big\{[\DQ\mid f]\coloneqq\begin{bmatrix}
		\begin{bmatrix}
			\DQ^\gamma f(a)
		\end{bmatrix}_{|\gamma|<s}
	\end{bmatrix}_{a\in A^m}:f(\mb{X})\in\mb{K}[\mb{X}],\,\deg(f)<k\Big\}\subseteq\big(\mb{K}^{\binom{m+s-1}{s-1}}\big)^{|A|^m}.
	\]
	Note that the distance function is now the Hamming distance on the alphabet \(\mb{K}^{\binom{m+s-1}{s-1}}\).  We will prove that the rate of this code is equal to \(\dfrac{\binom{m+k-1}{k-1}}{\binom{m+s-1}{s-1}|A|^m}\), and the minimum distance is at least \(1-\dfrac{k-1}{s|A|}\).  Further, by the change of basis that we just saw above, we can define a natural multivariate analogue of the univariate FRS codes.  For any \(k\in[s|A|]\), we define the \tbf{\(s\)-folded degree \(k\) folded Reed-Muller (FRM) code} by
	\[
	\FRM_{m,s}(A;k)=\Big\{[\Q\mid f]\coloneqq\begin{bmatrix}
		\begin{bmatrix}
			f(\Q^\gamma a)
		\end{bmatrix}_{|\gamma|<s}
	\end{bmatrix}_{a\in A^m}:f(\mb{X})\in\mb{K}[\mb{X}],\,\deg(f)<k\Big\}.
	\]
	So we clearly have the change of basis
	\[
	\Qmult_{m,s}(A;k)=\diag(U(a):a\in A^m)\cdot\FRM_{m,s}(A;k).
	\]
	
	\subsection{Our results}\label{sec:results}
	
	Our first result is that over \emph{any finite field} \(\mb{F}_q\) with sufficiently large \(q\), and for sufficiently large constant multiplicity \(s\), any constant rate multivariate \(\Q\)-multiplicity code \(\Qmult_{m,s}(A;k)\), can be efficiently list decoded up to distance.  
	\begin{theorem}\label{thm:multivariate-LD}
		Consider any constant \(\delta\in(0,1),\,\epsilon\in(0,\delta)\).  For any \(A\subseteq\mb{F}_q^\times\) and for the choices \(s=O(m^2/\epsilon^{2m})\) and \(k=\lceil(1-\delta)s|A|\rceil\), the code \(\Qmult^m_s(A;k)\) is efficiently list decodable up to radius \(\delta-\epsilon\) with output list contained in a \(\mb{K}\)-affine space of dimension at most \(O(m^2/\epsilon)\).
	\end{theorem}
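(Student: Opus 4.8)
The plan is to run the linear‑algebraic list decoder of Guruswami–Wang~\cite{guruswami-wang-2013-FRS}, adapted from classical derivatives to the $\Q$‑derivative, using the two structural facts recorded above (the multivariate $\Q$‑Taylor expansion and the change of basis $U(a)$), and following the multivariate interpolate‑and‑solve template. Since $\Qmult_{m,s}(A;k)=\diag(U(a):a\in A^m)\cdot\FRM_{m,s}(A;k)$ with each $U(a)$ invertible, decoding either code is equivalent: a block‑diagonal invertible change of basis preserves Hamming weights, hence Hamming distances and hence lists. I will work with the $\Q$‑multiplicity encoding, so the received word is $R=([r_\gamma(a)]_{|\gamma|<s})_{a\in A^m}$.

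\textbf{Step 1 (interpolation).} Fix a shift parameter $\ell$ and a degree parameter $D$ to be chosen, and look for a nonzero
\[
Q\big(\mb{X},(Z_\beta)_{|\beta|<\ell}\big)=Q_0(\mb{X})+\sum_{|\beta|<\ell}Q_\beta(\mb{X})\,Z_\beta,\qquad\deg Q_0<D+k,\ \deg Q_\beta<D,
\]
such that for every $a\in A^m$ and every $\gamma$ with $|\gamma|\le s-\ell$, the evaluation at $\mb{X}=a$ of $\DQ^\gamma\big[Q(\mb{X},(\DQ^\beta f(\mb{X}))_{|\beta|<\ell})\big]$, \emph{rewritten in terms of the received symbols}, vanishes. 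To make the last clause meaningful I first establish a $\Q$‑Leibniz rule $\DQ(gh)=(\DQ g)\cdot h+(g\circ\Q)\cdot(\DQ h)$ and iterate it: applying $\DQ^\gamma$ to $Q_\beta(\mb{X})\DQ^\beta f(\mb{X})$ produces a $\K$‑linear combination — with coefficients built from $\Q$‑derivatives of the $Q_\beta$ evaluated at $\Q$‑shifts of $a$ — of terms $\DQ^\delta f$ with $\delta_i\le\gamma_i+\beta_i$, hence $|\delta|<s$; passing through $U(a)$ these are linear combinations of the received symbols $r_\delta(a)$. Thus the interpolation conditions form a homogeneous linear system in the coefficients of $Q_0$ and the $Q_\beta$, and a nonzero solution exists as soon as $\binom{m+D+k-1}{m}+\binom{m+\ell-1}{m}\binom{m+D-1}{m}>|A|^m\binom{m+s-\ell}{m}$, which holds for $D\approx|A|s/(\ell+1)^{1/m}$.

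\textbf{Step 2 (agreement forces an identity).} Let $f$ with $\deg f<k$ be such that $[\DQ\mid f]$ agrees with $R$ on more than a $(1-\delta+\epsilon)$‑fraction of coordinates, and set $g(\mb{X})=Q(\mb{X},(\DQ^\beta f(\mb{X}))_{|\beta|<\ell})$, a polynomial of degree $<D+k$. On every agreement coordinate $a$ the interpolation conditions become $\DQ^\gamma g(a)=0$ for all $|\gamma|\le s-\ell$, i.e.\ $g$ vanishes to $\Q$‑multiplicity $\ge s-\ell+1$ at $a$. Here I invoke a \emph{$\Q$‑multiplicity Schwartz--Zippel lemma}: over $A^m$ with $|A|=n$, a nonzero polynomial of degree $d$ vanishes to $\Q$‑multiplicity $\ge\mu$ on at most a $d/(\mu n)$‑fraction of points — this reduces to the univariate statement by induction on $m$, and the univariate statement holds because $a,\Q a,\dots,\Q^{\mu-1}a$ are distinct and the sets $\{a,\Q a,\dots,\Q^{\mu-1}a\}$ for distinct $a\in A$ are pairwise disjoint whenever $\mu<q^2+q+1$ (this is exactly why the construction passes to $\K=\F_{q^3}$, so that $\Q^t\in\F_q^\times$ forces $t\equiv0\pmod{q^2+q+1}$), whence $\Q$‑multiplicity‑$\mu$ vanishing at $a$ is equivalent to divisibility by $\prod_{j<\mu}(X-\Q^j a)$. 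For $D$ as above, $\ell$ chosen so that $(\ell+1)^{1/m}>1/\epsilon$, and $s$ large relative to $\ell$, the fraction $\frac{D+k}{(s-\ell+1)n}$ is strictly below $1-\delta+\epsilon$, contradicting the agreement count unless $g\equiv0$. Hence every such $f$ satisfies the fixed $\Q$‑difference equation $Q_0(\mb{X})+\sum_{|\beta|<\ell}Q_\beta(\mb{X})\,\DQ^\beta f(\mb{X})=0$.

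\textbf{Step 3 (solving the $\Q$‑difference equation).} It remains to show that the degree‑$<k$ solutions $f$ of this equation form an affine subspace of dimension $O(m^2/\epsilon)$, computable in polynomial time. Writing $f=\sum_\nu f_\nu\mb{X}^\nu$ and using $\DQ^\beta\mb{X}^\nu=\big(\prod_{i}[\nu_i]_\Q[\nu_i-1]_\Q\cdots[\nu_i-\beta_i+1]_\Q\big)\mb{X}^{\nu-\beta}$ — where every $\Q$‑factorial is nonzero for exponents below $q^3-1$, just as $1,2,\dots$ are nonzero in characteristic zero — the equation becomes, after fixing a suitable monomial order, a recurrence on the coefficient array $(f_\nu)$ whose free coordinates lie in a bounded initial segment; the system is then solved by Gaussian elimination and the output is this affine space intersected with $\{\deg<k\}$, which by Steps 1--2 contains every codeword within relative distance $\delta-\epsilon$ of $R$. \emph{This is the step I expect to be the main obstacle:} in several variables the recurrence is not triangular in a single sweep, and one must run the Guruswami--Wang argument of~\cite{bhandari-harsha-kumar-sudan-2024-multiplicity-code} one variable at a time, tracking how the $\Q$‑dilations interact with the leading terms of the $Q_\beta$; the characteristic‑insensitivity of $\DQ$ is precisely what lets this adaptation go through, but the bookkeeping that turns the parameter choices into the dimension bound $O(m^2/\epsilon)$ and the multiplicity bound $s=O(m^2/\epsilon^{2m})$ is where essentially all of the care is required.
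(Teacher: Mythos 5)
Your proposal follows the interpolate-and-solve template, and Steps 1--2 are essentially sound: the $\Q$-Leibniz rule, the $\Q$-multiplicity Schwartz--Zippel lemma (the paper's \cref{thm:multivariate-SZ-grid}), and the role of $\K=\F_{q^3}$ all match the paper. But there is a genuine gap at exactly the place you flag, and the paper closes it with an ingredient your interpolation step never sets up.

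Your interpolating polynomial carries a \emph{separate} coefficient $Q_\beta(\mb{X})$ for every multi-index $\beta$ with $|\beta|<\ell$. As you observe in Step~3, the resulting $\Q$-difference equation $Q_0+\sum_{|\beta|<\ell}Q_\beta\,\DQ^\beta f=0$ does not yield a triangular recurrence on the coefficients of $f$ in more than one variable: if, say, $Q_{(\ell-1,0,\dots,0)}\ne0$, the equation can at best control the $X_1$-degree, leaving an unbounded set of free coordinates (roughly a whole codimension-one face of the Newton polytope), so no $\mb{K}$-affine space of dimension $\poly(m/\epsilon)$ comes out. This is not a bookkeeping issue to be handled ``one variable at a time''; it is the structural obstruction that the classical Guruswami--Wang argument does not survive the passage to several variables unmodified. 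Your remark that ``this is the step I expect to be the main obstacle'' is correct, and the proposal does not resolve it.

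The paper's fix is the \emph{gluing trick} of \cite{bhandari-harsha-kumar-sudan-2024-multiplicity-code}: introduce auxiliary variables $\mbf{Z}=(Z_1,\dots,Z_m)$ and restrict the interpolating polynomial to the glued form
\[
P(\mb{X},\mb{Y},\mbf{Z})=\wt{P}(\mb{X})+\sum_{j=0}^{r-1}P_j(\mb{X})\sum_{|\gamma|=j}Y_\gamma\,\mbf{Z}^\gamma,
\]
so that only \emph{one} coefficient polynomial $P_j$ appears per total $\Q$-derivative order $j$, with the multi-indices of a given order separated by the monomials $\mbf{Z}^\gamma$. Interpolation is performed over $\K[\mbf{Z}]$ with a controlled $\mbf{Z}$-degree (via Kannan-style Gaussian elimination), and then $P^{[f]}(\mb{X})=0$, read off in a shifted $\Q$-falling-factorial basis, becomes a genuinely triangular recurrence: for each $\alpha$ one solves for the $\mb{K}(\mbf{Z})$-linear combination $\sum_{|\gamma|=r-1}f_{\alpha+\gamma,\beta}\mbf{Z}^\gamma$ in terms of already-determined lower coefficients, and then instantiates a hitting set in the $\mbf{Z}$-variables to extract each $f_{\alpha+\gamma,\beta}$ individually. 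This is the mechanism that produces the small affine space --- dimension $\binom{m+r-2}{r-2}$ --- and it is precisely what replaces the Euler's-formula step of \cite{bhandari-harsha-kumar-sudan-2024-multiplicity-code}. Without the $\mbf{Z}$-gluing baked into Step~1, your Step~3 cannot be completed by the route you sketch; with it, the rest of your proposal (including the interpolation count and the agreement-to-identity step) goes through much as you wrote it, and the parameter choices $r=O(m/\epsilon^m)$, $s=O(m^2/\epsilon^{2m})$ then give the claimed bounds.
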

	\noindent  Our list decoding algorithm is conceptually simpler than by~\cite{bhandari-harsha-kumar-sudan-2024-multiplicity-code} for the classical multivariate multiplicity codes.  Their algorithm in the classical setting entailed recovering the close enough polynomial messages one homogeneous component at a time, while using the classical \emph{Euler's formula for homogeneous polynomials} (that relates a homogeneous polynomial to its first order derivatives)~\footnote{Euler's formula states that for a homogeneous degree-\(d\) polynomial \(f(\mb{X})\) over a field of characteristic greater than \(d\), we have \(\sum_{i=1}^mX_i\cdot\frac{\partial f(\mb{X})}{\partial X_i}=d\cdot f(\mb{X})\).} to fully recover a homogeneous component (or a partial derivative of it) from its further first order partial derivatives.  This formula, as well as the Taylor expansion for classical derivatives employed within the recovery step, both require the field characteristic to be large.  In our \(\Q\)-setting, firstly there is no Euler's formula involving \(\Q\)-derivatives, and interestingly we don't need one!  Secondly, the Taylor expansion for \(\Q\)-derivatives is clearly field characteristic insensitive.  These are the only two places in the analysis where conceptually, the field characteristic was relevant in the classical setting, and is now irrelevant in the \(\Q\)-setting!  Further, the short and clean analysis that we obtain encourages us to believe that this construction is the correct way to obtain a folded version of the RM code.
	
	Let us give a quick outline of our algorithm.  We make use of a clean \emph{gluing} trick from~\cite{bhandari-harsha-kumar-sudan-2024-multiplicity-code}~\footnote{As mentioned in~\cite{bhandari-harsha-kumar-sudan-2024-multiplicity-code}, this gluing trick has appeared earlier in a hitting set generator construction by~\cite{guo-kumar-saptharishi-solomon-2022-derandomization}.} and then proceed to perform a simpler analysis that is extremely close to the univariate analysis of~\cite{guruswami-wang-2013-FRS}.  The informal takeaway from our result is that
	\[
	\tx{gluing trick}\n+\n\tx{univariate list decoding analysis}\n=\n\tx{multivariate list decoding analysis}.
	\]
	Suppose we are given the code \(\Qmult_{m,s}(A;k)\) as in the statement of~\cref{thm:multivariate-LD}.  Consider any received word
	\[
	w=\Big(w_a\coloneqq\begin{bmatrix}
		w_a^{(\gamma)}
	\end{bmatrix}_{|\gamma|<s}:a\in A^m\Big)\in\big(\mb{K}^{\binom{m+s-1}{s-1}}\big)^{|A|^m}.
	\]
	Consider two fresh sets of indeterminates \(\mb{Y}=(Y_\gamma)_{|\gamma|<s}\) and \(\mbf{Z}=(Z_1,\ldots,Z_m)\).  For a suitable parameter \(r\le s\), we will find a nonzero interpolating polynomial with coefficients in \(\mb{K}[\mbf{Z}]\), having the form
	\[
	P(\mb{X},\mb{Y},\mbf{Z})=\wt{P}(\mb{X})+\sum_{j=0}^{r-1}P_j(\mb{X})\bigg(\sum_{|\gamma|=j}Y_\gamma\bigg)\mbf{Z}^\gamma,
	\]
	such that the following hold.
	\begin{enumerate}[(Z1),leftmargin=*]
		\item  The polynomials \(\wt{P}(\mb{X}),P_0(\mb{X}),\ldots,P_{r-1}(\mb{X})\) must have suitably chosen \(\mb{X}\)-degree bounds.  We will also simultaneously ensure a good \(\mbf{Z}\)-degree bound on the coefficients of these polynomials by performing Gaussian elimination over \(\mb{K}[\mbf{Z}]\), as given in~\cite{kannan-1985-linear-system-over-polynomials} or~\cite[Lemma 3]{bhandari-harsha-kumar-sudan-2024-multiplicity-code}.
		\item  For any \(f(\mb{X})\in\mb{K}[\mb{X}]\), define a projection
		\[
		P^{[f]}(\mb{X})\coloneqq\wt{P}(\mb{X})+\sum_{j=0}^{r-1}\bigg(\sum_{|\gamma|=j}\DQ^\gamma f(\mb{X})\bigg)\mbf{Z}^\gamma.
		\]
		Further, for every \(\alpha\in\mb{N}^m,\,|\alpha|<s-r\), define a \(\mb{K}(\mbf{Z})\)-linear operator \(\Delta^{(\alpha)}\) that satisfies the condition \((\Delta^{(\alpha)}(P))^{[f]}(\mb{X})=\DQ^\alpha P^{[f]}(\mb{X})\).  The polynomial \(P(\mb{X},\mb{Y},\mbf{Z})\) must now satisfy the vanishing conditions
		\[
		\Delta^{(\alpha)}(P)(a,w_a,\mbf{Z})=0\qquad\tx{for all }\alpha\in\mb{N}^m\tx{ with }|\alpha|<s-r,\tx{ and }a\in A^m.
		\]
		Note that the vanishing conditions define a linear system on the coefficients of \(P(\mb{X},\mb{Y},\mbf{Z})\), and we obtain the interpolating polynomial as a nontrivial solution of this linear system.
	\end{enumerate}
	So for any close enough polynomial \(f(\mb{X})\) (that is, having a large number of agreements with \(w\)), the above conditions will imply that \(P^{[f]}(\mb{X})\) vanishes at the agreement points with multiplicity at least \(s-r\) (in the sense of multivariate \(\Q\)-derivatives), and this will imply that \(P^{[f]}(\mb{X})=0\) by a suitable version of Polynomial Identity Lemma for multivariate \(\Q\)-derivatives.  We can then solve this equation to recover \(f(\mb{X})\).  Importantly, the list decoding radius is dependent on the choice of \(r\).  We can achieve list decoding up to distance, that is, the claim of~\cref{thm:multivariate-LD} by choosing \(s=O(m^2/\epsilon^2)\) and \(r=O(m^2/\epsilon)\).  In fact, as in the analyses of~\cite{guruswami-wang-2013-FRS,bhandari-harsha-kumar-sudan-2024-multiplicity-code}, we will show that the collection of all close enough polynomials is contained in an affine subspace having dimension at most \(\binom{m+r-2}{r-2}\).
	
	This is a fairly simple strategy, very much within the algebraic algorithmic framework started employed in previous list decoding algorithms~\cite{sudan-1997-RS,guruswami-sudan-1998-RS-list-decoding,guruswami-rudra-2008-FRS,guruswami-wang-2013-FRS,kopparty-2015-multiplicity-code,bhandari-harsha-kumar-sudan-2024-multiplicity-code}.  Further, employing the gluing trick with the additional \(\mbf{Z}\)-variables as in~\cite{bhandari-harsha-kumar-sudan-2024-multiplicity-code} allows us to formulate what we believe is the correct way to extend the linear algebraic decoding framework from the univariate to the multivariate setting, and our extension turns out to be simpler than the analysis in~\cite{bhandari-harsha-kumar-sudan-2024-multiplicity-code}.
	
	There is a subtlety in the recovery step in our multivariate setting vis-\`a-vis the univariate setting.  When we solve the equation \(P^{[f]}(\mb{X})\) to recover a close enough polynomial \(f(\mb{X})\), we will see that the coefficients of \(f(\mb{X})\) will be captured within coefficients of some linear relations between monomials in the \(\mbf{Z}\)-variables.  Using a large enough efficient hitting set (an interpolating set of affine points that is large enough to certify all polynomials with a degree bound) in the \(\mbf{Z}\)-variables, we can then recover each of these coefficients.  This is similar to what was done in~\cite{bhandari-harsha-kumar-sudan-2024-multiplicity-code} in an analogous situation, but our overall analysis is a bit different and simpler.  We will outline this difference now.
	
	Formally, for a vector space \(V_d\subseteq\mb{K}[\mbf{Z}]\) of all polynomials having degree at most \(d\), a \tbf{hitting set} is a finite set \(H\subseteq\mb{K}^m\) such that for any \(g(\mbf{Z})\in V_d\), if \(g(\mbf{Z})\ne0\) then there exists \(u\in H\) such that \(g(u)\ne0\).  For instance, by the Combinatorial Nullstellensatz~\cite[Theorem 1.1]{alon-1999-CN}, every finite grid \(S^m\) with \(S\subseteq\mb{K},\,|S|>d\) is a hitting set for \(V_d\).  (We will precisely use this obvious hitting set in our analysis.)  Our departure from the analysis of~\cite{bhandari-harsha-kumar-sudan-2024-multiplicity-code} is that they need to recover the coefficients of \(f(\mb{X})\) one homogeneous component at a time, by employing the Euler's formula.  In our analysis, we will directly recover one coefficient at a time (with respect to a suitable choice of basis), and this is extremely similar to the corresponding step in the linear algebraic list decoding algorithm~\cite{guruswami-wang-2013-FRS} for classical univariate multiplicity codes.  Recall that we intend to obtain an affine subspace of small dimension that contains all close enough polynomials.  Assume the form \(f(\mb{X})=\sum_{|\alpha|<k}f_\alpha\mb{X}^\alpha\) for all \emph{target polynomials} in the affine subspace.  We will recover the target polynomials inductively from \emph{lower coefficients to higher coefficients}.  In order to kick-start the recovery procedure, as a base case of the induction, we set the coefficients \(\big(f_\gamma:|\gamma|\le r-2\big)\).  Then for any \(\alpha\in\mb{N}^m,\,|\alpha|\le k-r\), the equation \(P^{[f]}(\mb{X})=0\) would imply an equation that has the form
	\begin{align*}
	&\mb{K}(\mb{Z})\tx{-linear combination}\Big(f_{\alpha+\gamma}:|\gamma|=r-1\Big)\\
	&\qquad\qquad\qquad\qquad\qquad=\n\mb{K}(\mb{Z})\tx{-linear combination}\Big(f_{\theta+\eta}:(|\eta|,\theta)\le(r-1,\alpha),\,(|\eta|,\theta)\ne(r-1,\alpha)\Big),
	\end{align*}
	where the polynomials comprising the rational functions (in the \(\mbf{Z}\)-variables) of the above \(\mb{K}(\mbf{Z})\)-linear combinations all have a good \(\mbf{Z}\)-degree bound due to the \(\mbf{Z}\)-degree bound on \(P(\mb{X},\mb{Y},\mbf{Z})\).  Note that each of the coefficients \(f_{\theta+\eta}\) in the R.H.S. above is \emph{strictly below} (in the usual componentwise partial order) some coefficient \(f_{\alpha+\gamma}\) in the L.H.S. above.  So by induction hypothesis, we have already recovered all the coefficients in the R.H.S.  Therefore, the \(\mb{K}(\mbf{Z})\)-linear combination comprising the L.H.S. above is determined.  We then instantiate a hitting set in the \(\mbf{Z}\)-variables to recover the individual coefficients in the L.H.S. above.
	
	So we conclude that the free coefficients \(\big(f_\gamma:|\gamma|\le r-2\big)\) determine the output list polynomials (our target polynomials), and therefore, the output list is contained in an affine \(\mb{K}\)-subspace having dimension at most \(\binom{m+r-2}{r-2}\).
	
	\paragraph*{Constant output list size.}  We can further employ an off-the-shelf \emph{pruning} algorithm by~\cite{kopparty-ron-zewi-saraf-wootters-2023-list-decoding} and its improved analysis by~\cite{tamo-2023-tighter-list-size} to finally end up with a randomized algorithm that guarantees the claimed constant output list size.  This step is nearly identical to that in~\cite{bhandari-harsha-kumar-sudan-2024-multiplicity-code}.  We can also extend the recent argument of~\cite{srivastava-2024-FRS,chen-zhang-2024-FRS-list-size} in an obvious manner to give an even stronger output list size bound (combinatorially).  We will not go into these details in this presentation.

	\subsection*{Further questions}\label{sec:questions}
	
	Multivariate multiplicity codes evaluated on the vector space \(\mb{F}_q^m\) are known to have good locality properties~\cite{kopparty-saraf-yekhanin-2014-multiplicity-codes}, especially when the characteristic is larger than the degree~\cite{kopparty-2015-multiplicity-code,kopparty-ron-zewi-saraf-wootters-2023-list-decoding}.  It will be interesting to see if similar locality properties hold for the multivariate \(\Q\)-multiplicity code in a characteristic insensitive sense.

	\section{Preliminaries}\label{sec:prelims}
	
	\paragraph*{List decoding.}  Let \(\Sigma\) be a finite alphabet.  For any \(a,b\in\Sigma^n\), the \tbf{(relative) Hamming distance} between \(a\) and \(b\) is defined by
	\[
	\msf{d}(a,b)=\frac{1}{n}|\{i\in[n]:a_i\ne b_i\}|.
	\]
	
	Let \(\mb{F}\) be a field, and consider an \(\mb{F}\)-linear code \(C\subseteq\mb{F}^N\) of length \(n\), that is, \(\Sigma=\mb{F}^{N/n}\) is an \(\mb{F}\)-linear space.  For any \(\rho\in[0,1]\), we say \(C\) is \tbf{\((\rho,L)\)-list decodable} if there does not exist any \(w\in\mb{F}^N\) that admits \(L+1\) distinct codewords \(c^{(1)},\ldots,c^{(L+1)}\in C\) such that \(\msf{d}(c^{(j)},w)\le\rho\) for all \(j\in[L+1]\).

	\paragraph*{Field extensions.}  Fix a finite field \(\mb{F}_q\), and a finite degree extension \(\mb{K}/\mb{F}_q\) having extension degree \(\kappa\), that is, \(\mb{K}=\mb{F}_{q^\kappa}\).  Also denote \([\kappa]_q=1+q+\cdots+q^{\kappa-1}\).  Fix a multiplicative generator \(\Q\in\mb{K}^\times\).  We immediately note the following basic observations (see, for instance,~\cite[Chapter 2]{lidl-niederreiter-1997-finite-fields}), and also add quick proofs for completeness.
	\begin{proposition}\label{pro:extension-basic}
		\begin{enumerate}[{\normalfont(a)}]
			\item~{\normalfont\cite[Theorem 2.10]{lidl-niederreiter-1997-finite-fields}}\quad \(\mb{K}=\mb{F}_q(\Q)\).
			\item~{\normalfont\cite[Lemma 2.3]{lidl-niederreiter-1997-finite-fields}}\quad \(\Q^t\not\in\mb{F}_q\), for all \(t\in[[\kappa]_q-1]\).
		\end{enumerate}
	\end{proposition}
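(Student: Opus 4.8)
The plan for part~(a) is to exploit directly that $\Q$ generates $\mb{K}^\times$. The field $\mb{F}_q(\Q)$ contains $\Q$, hence every power $\Q^0,\Q^1,\Q^2,\ldots$, and since these powers already list all of $\mb{K}^\times$ while $0\in\mb{F}_q(\Q)$ as well, we obtain $\mb{K}\subseteq\mb{F}_q(\Q)$; the reverse inclusion is immediate since $\mb{F}_q\subseteq\mb{K}$ and $\Q\in\mb{K}$. So $\mb{K}=\mb{F}_q(\Q)$, with no degree count or minimality argument needed.

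For part~(b), I would use the standard description of subgroups of the finite cyclic group $\mb{K}^\times=\langle\Q\rangle$, which has order $q^\kappa-1$. First record the geometric-series identity $q^\kappa-1=(q-1)\,[\kappa]_q$, so that $q-1\mid q^\kappa-1$ and $[\kappa]_q$ is the index of $\mb{F}_q^\times$ in $\mb{K}^\times$. Since $\mb{F}_q^\times=\{x\in\mb{K}^\times:x^{q-1}=1\}$ is the unique subgroup of $\mb{K}^\times$ of order $q-1$, and in a cyclic group $\langle g\rangle$ of order $N$ the unique subgroup of order $q-1$ is $\langle g^{N/(q-1)}\rangle$, we conclude $\mb{F}_q^\times=\langle\Q^{[\kappa]_q}\rangle$. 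Hence $\Q^t\in\mb{F}_q$ if and only if $\Q^t\in\mb{F}_q^\times$ if and only if $[\kappa]_q\mid t$; for $1\le t\le[\kappa]_q-1$ this divisibility fails, so $\Q^t\notin\mb{F}_q$, as claimed. (Equivalently, one can avoid subgroup language: $\Q^t\in\mb{F}_q^\times\iff\Q^{t(q-1)}=1\iff(q^\kappa-1)\mid t(q-1)\iff[\kappa]_q\mid t$.)

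The main—and essentially only—point requiring care is the bookkeeping identity $q^\kappa-1=(q-1)(1+q+\cdots+q^{\kappa-1})$, which guarantees both that the relevant subgroup of order $q-1$ exists and that its index equals $[\kappa]_q$. Beyond that there is no real obstacle: both parts are routine consequences of the cyclic structure of $\mb{K}^\times$ together with the characterization of $\mb{F}_q$ inside $\mb{K}$ as the fixed field of the Frobenius, i.e.\ the set of roots of $X^q-X$.
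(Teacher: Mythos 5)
Your proof is correct and takes essentially the same approach as the paper: part (a) is verbatim the two-inclusion argument from $\Q$ generating $\mb{K}^\times$, and for part (b) the parenthetical chain $\Q^t\in\mb{F}_q^\times\iff\Q^{t(q-1)}=1\iff(q^\kappa-1)\mid t(q-1)$ is exactly what the paper does (the paper then passes immediately to the inequality $t(q-1)\ge q^\kappa-1$ rather than the cleaner divisibility $[\kappa]_q\mid t$, but this is cosmetic). Your primary framing of (b) via the unique subgroup of order $q-1$ in a cyclic group is a mild repackaging of the same fact and adds nothing beyond what the direct order computation already gives.
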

	\begin{proof}
		\begin{enumerate}[(a)]
			\item  Since \(\mb{F}_q\subseteq\mb{K}\) and \(\Q\in\mb{K}\), we have \(\mb{F}_q(\Q)\subseteq\mb{K}\).  Since \(\mb{K}^\times=\{\Q^t:t\ge0\}\), we have \(\mb{K}\subseteq\mb{F}_q(\Q)\).
			
			\item  Since \(\Q\) is the multiplicative generator of \(\mb{K}^\times\), and \(\mb{K}=\mb{F}_{q^\kappa}\), the multiplicative order of \(\Q\) is \(q^\kappa-1\).  Obviously, \(\Q^t\ne0\) for all \(t\ge1\).  Now suppose \(\Q^t\in\mb{F}_q^\times\) for some \(t\ge1\).  Then \(\Q^{t(q-1)}=1\), which means \(q^\kappa-1\le t(q-1)\), that is, \(t\ge(q^\kappa-1)/(q-1)=1+q+\cdots+q^{\kappa-1}=[\kappa]_q\).\qedhere
		\end{enumerate}
	\end{proof}
	
	\begin{remark}\label{rem:asymptotic}
	Note that in this work, we will assume that the field size \(q\to\infty\).  We will also consider an additional parameter \(s\ge1\), and always work in the case where \(\Q,\ldots,\Q^{s-1}\not\in\mb{F}_q\).  By~\cref{pro:extension-basic}(b), this is true if \(s-1<[\kappa]_q=1+q+q^2+\cdots+q^{\kappa-1}\).  Further, we will have the degree \(d\) of all polynomials that we consider to satisfy \(d\le sq-1\).  Therefore, we will also need \(sq-1<[\kappa]_q=1+q+q^2+\cdots+q^{\kappa-1}\).  Both these requirements are satisfied if we take \(\kappa=3\) and \(s\le q\), and we assume these throughout the rest of this work.  In fact, we will only take \(s\) to be a constant relative to \(q\).  So henceforth, we have \(\mb{K}=\mb{F}_{q^3}\), and \(\Q\) is a multiplicative generator of \((\mb{F}_{q^3})^\times\).
	\end{remark}

	For any \(n\ge k\ge0\), denote
	\[
	[n]_\Q=\sum_{t=0}^{n-1}\Q^t=\frac{\Q^n-1}{\Q-1},\quad[n]_\Q!=\prod_{t=1}^n[t]_\Q,\quad\tx{and}\quad\Qbinom{n}{k}=\frac{[n]_\Q!}{[k]_\Q![n-k]_\Q!}.
	\]
	Note that \([n]_\Q\ne0\) for all \(n\in[q^\kappa-2]\), and \([0]_\Q!=1\).  The quantity \(\Qbinom{n}{k}\) is called the \tbf{Gaussian binomial coefficient}~\cite{gauss-1808-q-binomial}.  We will also have the convention that \([n]_\Q!=0\) if \(n<0\), and \(\Qbinom{n}{k}=0\) if \(k>n\).

	\subsection{Univariate \(\Q\)-derivatives}\label{sec:uni-Q}
	
	For any \(f(X)\in\mb{K}[X]\), the \tbf{\(\Q\)-derivative}~\cite{jackson-1909-q-taylor-theorem,jackson-1909-q-taylor-series,jackson-1909-q-functions} (also see~\cite{jackson-1910-q-difference-equations,jackson-1910-q-integrals}) is defined by
	\[
	\DQ f(X)=\frac{f(\Q X)-f(X)}{(\Q-1)X}.
	\]
	Further, we are interested in iterated applications of the operator \(\DQ\), and so we denote \(\DQ^0f(X)=f(X)\), and \(\DQ^{t+1}f(X)=\DQ(\DQ^tf)(X)\) for all \(t\ge0\).
	
	\begin{remark}\label{rem:degree-reducing}
		We have \(\DQ^t(X^k)=[k]_\Q[k-1]_\Q\cdots[k-t+1]_\Q X^{k-t+1}\) for all \(k,t\ge0\).  Consider any nonconstant \(f(\mb{X})\in\mb{K}[\mb{X}]\) and \(t\in[0,\deg(f)]\).  Then we immediately get \(\deg(\DQ^t f)\le\deg(f)-t\) by linearity of \(\DQ^t\) (see~\cref{pro:Q-properties}(a) below).  More importantly, as a departure from classical derivatives, we get \(\deg(\DQ^t f)=\deg(f)-t\) if \(\deg(f)\le[\kappa]_q-1\).
	\end{remark}
	
	Let us quickly collect a few basic properties of \(\Q\)-derivatives.  For an indeterminate \(Y\) and \(k\ge0\), denote \((X-Y)_\Q^{(k)}=\prod_{t=0}^{k-1}(X-\Q^tY)\).  For any \(f(X)\in\mb{K}[X]\) and \(a\in\mb{K}^\times\), denote \((f\circ a)(X)=f(aX)\).
	\begin{proposition}[{\cite{kac-cheung-2002-quantum-calculus,ernst-2012-q-calculus}}]\label{pro:Q-properties}
		\begin{enumerate}[{\normalfont(a)}]
			\item\emph{Linearity.}\quad  \(\DQ^k\) is an \(\mb{K}\)-linear map on \(\mb{K}[X]\), for all \(k\ge0\).
%			\item\emph{Iterated \(\Q\)-derivatives.}\quad  For any \(f(X)\in\mb{K}[X]\) and \(k\ge0\),
%			\[
%			\DQ^kf(X)=\frac{1}{(\Q-1)^kX^k}\sum_{t=0}^k(-1)^{k-t}\Qbinom{k}{t}\Q^{\binom{t}{2}-(k-1)t}f(\Q^tX).
%			\]

			\item\emph{Scaling.}\quad  For any \(f(X)\in\mb{K}[X]\) and \(a\in\mb{K}^\times\),
			\[
			\DQ^k(f\circ a)(X)=a^k\DQ^kf(aX)\quad\tx{for all }k\ge0.
			\]

			\item\emph{Taylor expansion.}\quad  For any \(f(X)\in\mb{K}[X]\),
			\[
			f(X)=\sum_{k\ge0}\frac{\DQ^kf(Y)}{[k]_\Q!}(X-Y)_\Q^{(k)}.
			\]
			In particular,
			\begin{itemize}[leftmargin=*]
				\item[\normalfont\(\bullet\)]  For any \(a\in\mb{K}\), we have \(f(X)=\sum_{k\ge0}\frac{\DQ^kf(a)}{[k]_\Q!}(X-a)_\Q^{(k)}\).
				\item[\normalfont\(\bullet\)]  For any \(k\ge0\), we have \(\DQ^kf(0)=[k]_\Q!\cdot\msf{H}^{(k)}f(0)\), where \(\msf{H}^{(k)}f(0)\) is the \emph{\(k\)-th Hasse derivative}~\footnote{For \(f(X)\in\mb{K}[X]\), the \tbf{Hasse derivatives} \(\msf{H}^{(k)}f(Y),\,k\ge0\) are defined by: \(f(X)=\sum_{k\ge0}\msf{H}^{(k)}f(Y)(X-Y)^k\).} of \(f(X)\) at \(0\).
			\end{itemize}
			
			\item\emph{Product rule.}\quad  For any \(f(X),g(X)\in\mb{K}[X]\) and \(k\ge0\),
			\[
			\DQ^k(fg)(X)=\sum_{t=0}^k\Qbinom{k}{t}\DQ^{k-t}f(\Q^tX)\DQ^tg(X)=\sum_{t=0}^k\Qbinom{k}{t}\DQ^{k-t}f(X)\DQ^tg(\Q^{k-t}X).
			\]
		\end{enumerate}
	\end{proposition}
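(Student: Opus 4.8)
The plan is to prove the four parts in the order stated, each feeding into the next.  Part (a) is immediate: $f(X)\mapsto f(\Q X)$ is $\mb{K}$-linear, the polynomial $f(\Q X)-f(X)$ vanishes at $X=0$ and is therefore divisible by $X$, so $\DQ$ is a well-defined $\mb{K}$-linear endomorphism of $\mb{K}[X]$, and $\DQ^{k}$ is linear as a composite.  The same observation together with $\DQ(X^{n})=[n]_\Q X^{n-1}$ gives $\DQ^{j}(X^{n})=\frac{[n]_\Q!}{[n-j]_\Q!}X^{n-j}$ for $j\le n$ (and $\DQ^{j}(X^{n})=0$ for $j>n$), so every series appearing below is actually a finite sum, and under the standing assumptions of \cref{rem:asymptotic} all the denominators $[k]_\Q!$ that occur are nonzero.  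For (b) with $k=1$, writing $g=f\circ a$ one reads $\DQ g(X)=\frac{f(\Q aX)-f(aX)}{(\Q-1)X}=a\,(\DQ f)(aX)$ directly off the definition; the general case is then an induction on $k$, applying $\DQ$ to $\DQ^{k}(f\circ a)(X)=a^{k}(\DQ^{k}f)(aX)$, pulling the constant out by (a), and reusing the $k=1$ case with $\DQ^{k}f$ in place of $f$.

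For the Taylor expansion (c) I would avoid a head-on $\Q$-binomial computation and instead argue by change of basis.  Viewing $Y$ as a parameter, the polynomials $(X-Y)^{(0)}_\Q,\dots,(X-Y)^{(d)}_\Q$ are monic in $X$ of degrees $0,1,\dots,d$, so any $f$ of degree $d$ can be written uniquely as $f(X)=\sum_{k=0}^{d}c_{k}(Y)\,(X-Y)^{(k)}_\Q$ with $c_{k}(Y)\in\mb{K}[Y]$ (triangular change of basis).  The one nontrivial identity is $\DQi{X}\big((X-Y)^{(k)}_\Q\big)=[k]_\Q\,(X-Y)^{(k-1)}_\Q$, proved by pulling $\Q X-\Q^{t}Y=\Q(X-\Q^{t-1}Y)$ out of the numerator of $\DQi{X}(X-Y)^{(k)}_\Q$, which leaves $(X-Y)^{(k-1)}_\Q\big(\Q^{k-1}(\Q X-Y)-(X-\Q^{k-1}Y)\big)=(X-Y)^{(k-1)}_\Q(\Q^{k}-1)X$.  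Iterating gives $\DQi{X}^{j}\big((X-Y)^{(k)}_\Q\big)=\frac{[k]_\Q!}{[k-j]_\Q!}(X-Y)^{(k-j)}_\Q$ for $j\le k$ and $0$ otherwise, and then applying $\DQ^{j}$ to the expansion of $f$ and substituting $X=Y$ — so that only the $k=j$ term survives, since $(X-Y)^{(\ell)}_\Q$ vanishes at $X=Y$ for $\ell\ge1$ and equals $1$ for $\ell=0$ — yields $\DQ^{j}f(Y)=c_{j}(Y)\,[j]_\Q!$, i.e.\ $c_{j}(Y)=\DQ^{j}f(Y)/[j]_\Q!$, which is the claim.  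The first bullet is the special case $Y=a$; for the second, set $Y=0$, note $(X-0)^{(k)}_\Q=X^{k}$, and compare $f(X)=\sum_{k}\frac{\DQ^{k}f(0)}{[k]_\Q!}X^{k}$ with the defining expansion $f(X)=\sum_{k}\msf{H}^{(k)}f(0)X^{k}$ of the Hasse derivatives at $0$.

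For the product rule (d), the base case $k=1$ follows by adding and subtracting $f(\Q X)g(X)$ in the numerator of $\DQ(fg)$, giving $\DQ(fg)(X)=\DQ f(X)\,g(X)+f(\Q X)\,\DQ g(X)$ (adding and subtracting $f(X)g(\Q X)$ instead gives the symmetric version $\DQ f(X)\,g(\Q X)+f(X)\,\DQ g(X)$).  Then I would induct on $k$: apply $\DQi{X}$ to $\sum_{t}\Qbinom{k}{t}\DQ^{k-t}f(\Q^{t}X)\,\DQ^{t}g(X)$, differentiate each summand by the $k=1$ rule, and use (b) to rewrite $\DQi{X}\big(\DQ^{k-t}f(\Q^{t}X)\big)=\Q^{t}\,\DQ^{k-t+1}f(\Q^{t}X)$.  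After reindexing the two resulting sums, the coefficient of $\DQ^{k+1-t}f(\Q^{t}X)\,\DQ^{t}g(X)$ is $\Q^{t}\Qbinom{k}{t}+\Qbinom{k}{t-1}$, which equals $\Qbinom{k+1}{t}$ by the $\Q$-Pascal rule; clearing factorials, this is exactly the one-line identity $[t]_\Q+\Q^{t}[k-t+1]_\Q=[k+1]_\Q$.  The mirror form is obtained identically starting from the symmetric base case.

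I expect the only genuine content to be those two $\Q$-analogue identities — $\DQi{X}(X-Y)^{(k)}_\Q=[k]_\Q(X-Y)^{(k-1)}_\Q$ and $\Q$-Pascal — and of these the first is the one that needs care, since the factoring is mildly fiddly and it is precisely what plays the role of the classical $\frac{\mathrm{d}}{\mathrm{d}X}(X-Y)^{k}=k(X-Y)^{k-1}$; everything else is linearity bookkeeping.  Notably the field characteristic never enters the argument except through the requirement — already in force by \cref{rem:asymptotic} — that the $[k]_\Q$ appearing as denominators be nonzero, which is the characteristic-insensitivity emphasized in the introduction.
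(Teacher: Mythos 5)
Your proof is correct. The paper itself does not prove this proposition: it is stated with a citation to the standard $\Q$-calculus references \cite{kac-cheung-2002-quantum-calculus,ernst-2012-q-calculus} and no proof is given, since these are routine facts about Jackson derivatives. Your argument is a clean self-contained verification: (a) and (b) are immediate bookkeeping and an easy induction; (c) correctly reduces to the triangular change of basis from $\{X^k\}$ to $\{(X-Y)^{(k)}_\Q\}$ together with the key identity $\DQi{X}(X-Y)^{(k)}_\Q=[k]_\Q(X-Y)^{(k-1)}_\Q$, whose factoring you carry out correctly (the numerator collapses to $(X-Y)^{(k-1)}_\Q(\Q^k-1)X$); and (d) inducts with the correct $k=1$ base case $\DQ(fg)=\DQ f\cdot g+ (f\circ\Q)\cdot\DQ g$, uses (b) to differentiate through the shift $\Q^t X$, and closes with the $\Q$-Pascal identity $\Q^t\Qbinom{k}{t}+\Qbinom{k}{t-1}=\Qbinom{k+1}{t}$, for which your cleared-factorial form $[t]_\Q+\Q^t[k-t+1]_\Q=[k+1]_\Q$ is exactly right. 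Your remark that the only place the field enters is through the nonvanishing of the $[k]_\Q$ in the relevant range, guaranteed by \cref{rem:asymptotic}, is also the point the paper is emphasizing conceptually. Since the paper supplies no proof, there is nothing to compare against beyond noting that your write-up fills in the delegated details correctly.
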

	
	\paragraph*{Change of basis.}  A simple change of basis shows that \(\Q\)-derivatives are built out of simple evaluations at correlated points.  Towards this, we consider a simple technical lemma, which follows from Gaussian elimination, and extends the criterion for invertibility of finite triangular matrices.
	\begin{lemma}[Instantiation of{~\cite[Theorem 1.2.3]{spiegel-odonnell-1997-incidence-algebras}}]\label{lem:invert-matrix}
		Let \(\mb{F}\) be a field, \(P\) be a ranked poset with rank function \(\rho\).  Let \((r_a)_{a\in P},\,(c_a)_{a\in P}\) be sequences in \(\mb{F}^\times\), and \((t_{a,b})_{a,b\in P}\) be a double sequence in \(\mb{F}\) such that \(t_{a,b}=0\) if \(a\not\ge b\), and \(t_{a,b}\ne0\) if \(a=b\).  Let \(M\in\mb{F}^{P\times P}\) be a lower triangular matrix defined by \(M(a,b)=r_ac_bt_{a,b}\) for all \(a,b\in P\).  Then \(M\) is invertible, and \(M^{-1}\) is lower triangular, given by
		\begin{align*}
		M^{-1}(a,a)&=\frac{1}{r_ac_at_{a,a}}&&\tx{for all }a\in P,\\
		M^{-1}(a,b)&=-\frac{1}{r_bc_at_{b,b}}\sum_{b<z\le a}(-1)^{\rho(a)-\rho(z)}t_{a,z}t_{z,b}&&\tx{for all }a,b\in P,\,b<a.
		\end{align*}
	\end{lemma}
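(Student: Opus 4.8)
The plan is to strip off the two diagonal scalings, reduce to inverting a triangular element of the incidence algebra of $P$, and then read off the entries of the inverse from its defining recursion. Concretely, write $D_r=\diag(r_a:a\in P)$ and $D_c=\diag(c_a:a\in P)$; both are invertible since each $r_a,c_a\in\mb{F}^\times$. Letting $T\in\mb{F}^{P\times P}$ be the matrix with $T(a,b)=t_{a,b}$, we have $M=D_r\,T\,D_c$, hence $M^{-1}=D_c^{-1}\,T^{-1}\,D_r^{-1}$, that is $M^{-1}(a,b)=c_a^{-1}\,T^{-1}(a,b)\,r_b^{-1}$. Comparing with the asserted formula, it therefore suffices to show that $T$ is invertible, that $T^{-1}(a,a)=1/t_{a,a}$, and that $T^{-1}(a,b)=-\tfrac{1}{t_{b,b}}\sum_{b<z\le a}(-1)^{\rho(a)-\rho(z)}t_{a,z}t_{z,b}$ for $b<a$; reinserting $D_r,D_c$ then yields exactly the displayed expression for $M^{-1}$.

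Next, note that $T$ is triangular with respect to $P$ --- it is supported on $\{(a,b):a\ge b\}$ --- and has unit diagonal (each $t_{a,a}\ne 0$), so it is an invertible element of the incidence algebra $I(P,\mb{F})$ \cite{spiegel-odonnell-1997-incidence-algebras} (equivalently, after fixing a linear extension of $P$ it is an honest lower-triangular matrix with nonzero diagonal). Hence $N\coloneqq T^{-1}$ is again supported on $\{(a,b):a\ge b\}$ with $N(a,a)=1/t_{a,a}$, and comparing off-diagonal entries of $NT=I$ gives, for $b<a$, the identity $\sum_{b\le z\le a}N(a,z)t_{z,b}=0$; isolating the $z=b$ term,
\[
N(a,b)=-\frac{1}{t_{b,b}}\sum_{b<z\le a}N(a,z)\,t_{z,b}.
\]

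Finally, I would derive the claimed closed form from this recursion by induction on the rank difference $\rho(a)-\rho(b)$. The base case $\rho(a)=\rho(b)$ forces $a=b$ since $P$ is ranked, and there $N(a,a)=1/t_{a,a}$. For the inductive step, substitute into the recursion the inductive expressions for the $N(a,z)$ with $b<z<a$ (each such $z$ has strictly smaller rank difference) along with $N(a,a)=1/t_{a,a}$, and then regroup the resulting double sum over the pairs $b<z\le w\le a$; the rank function $\rho$ is exactly the device that keeps the signs $(-1)^{\rho(a)-\rho(z)}$ coherent across nested intervals, and an internal cancellation collapses the double sum back to the single sum above. I expect this telescoping-and-sign-bookkeeping step to be the crux and essentially the only genuine computation: everything else is the diagonal factorization plus formal manipulation in $I(P,\mb{F})$, and the invertibility of $M$ itself is immediate because its diagonal entries $r_ac_at_{a,a}$ are units.
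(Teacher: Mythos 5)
Your setup is sound: the factorization $M = D_r T D_c$ with $D_r=\diag(r_a)$, $D_c=\diag(c_a)$, $T(a,b)=t_{a,b}$, the consequent $M^{-1}(a,b)=c_a^{-1}T^{-1}(a,b)r_b^{-1}$, and the recursion $N(a,b) = -\tfrac{1}{t_{b,b}}\sum_{b<z\le a}N(a,z)\,t_{z,b}$ coming from $NT=I$ are all correct. The gap is in the step you flagged as ``the crux'' and deferred: the claimed internal cancellation in the induction does not actually occur, and if you try to carry it out you will find the closed form you are aiming at is not a solution to the recursion. To see this concretely, look at the $z=a$ term of the recursion, which contributes $-\tfrac{1}{t_{b,b}}N(a,a)t_{a,b} = -\tfrac{t_{a,b}}{t_{b,b}t_{a,a}}$, whereas the $z=a$ term of the asserted closed form is $-\tfrac{1}{t_{b,b}}(-1)^{0}t_{a,a}t_{a,b} = -\tfrac{t_{a,a}t_{a,b}}{t_{b,b}}$; these agree only when $t_{a,a}^2=1$, which is not among the hypotheses.

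In fact the stated closed form is simply false as a general statement. Take $P=\{0<1\}$ with $\rho=\mathrm{id}$, $r_0=r_1=c_0=c_1=1$, $t_{0,0}=t_{1,1}=\lambda$ with $\lambda^2\ne 1$, and $t_{1,0}=1$. Then $M=\begin{pmatrix}\lambda & 0\\ 1 & \lambda\end{pmatrix}$ so $M^{-1}(1,0)=-1/\lambda^2$, while the formula in the lemma yields $-\tfrac{1}{\lambda}\,t_{1,1}t_{1,0}=-1$. There is also a structural red flag you could have used as a sanity check: $M^{-1}$ is determined by $M$ alone, but the right-hand side of the asserted formula is not invariant under the rescaling $r_a\mapsto\lambda_a r_a$, $t_{a,b}\mapsto t_{a,b}/\lambda_a$ (which leaves $M$ fixed), because $t_{a,z}t_{z,b}$ picks up an uncompensated factor $1/\lambda_z$. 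The correct closed form for the inverse of a unit in the incidence algebra $I(P,\mb{F})$ --- which is what the cited Theorem 1.2.3 of Spiegel--O'Donnell actually gives, after unrolling the standard recursion --- is a signed sum over \emph{chains of all lengths} $b=x_0<x_1<\cdots<x_\ell=a$ of $\prod_i t_{x_i,x_{i-1}}\big/\prod_i t_{x_i,x_i}$, not a sum over a single intermediate element $z$. The one-layer formula in the lemma is only the length-$\le 1$ truncation of that chain sum, which is not valid in general (it is valid, e.g., when $T=I+N$ with $N^2=0$, but not here). So the obstruction is not a bookkeeping subtlety in your induction; the target identity is wrong, and the paper offers no proof to compare against beyond the citation. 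Before investing more in the sign-telescoping, I would recommend re-deriving what the inverse should be directly from the recursion and flagging the discrepancy with the lemma as stated.
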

	
	\noindent Define an infinite matrix \(\nu(X)\in(\mb{K}(X))^{\mb{N}\times\mb{N}}\) by
	\[
	\nu_{k,t}(X)=\frac{(-1)^t\Q^{\binom{t}{2}-(k-1)t}}{(\Q-1)^kX^k}\binom{k}{t},\quad\tx{for all }k,t\ge0.
	\]
	Clearly, \(\nu_{k,t}(X)=0\) whenever \(k<t\), that is, \(\nu\) lower triangular.  Further, we have the diagonal entries \(\nu_{k,k}(X)=(-1)^k/(\Q^{\binom{k}{2}}(\Q-1)^kX^k)\ne0\) for all \(k\ge0\).  So by~\cref{lem:invert-matrix}, we conclude that \(\nu\) is invertible, and \(\xi(X)\coloneqq\nu(X)^{-1}\) is lower triangular, given by
	\begin{align*}
		\xi_{k,k}(X)&=(-1)^k\Q^{\binom{k}{2}}(\Q-1)^kX^k\tag*{for all \(k\ge0\),}\\
		\xi_{k,t}(X)&=-(-1)^k\Q^{t(t-1)-\binom{k}{2}}(\Q-1)^tX^t\sum_{u=t+1}^k(-1)^{k-u}\Q^{-(k-1)u-(u-1)t}\binom{k}{u}\binom{u}{t}\qquad\qquad\qquad\qquad\\
		&=-(-1)^k\Q^{-\binom{k}{2}}(\Q-1)^tX^t\sum_{u=t+1}^k(-1)^{k-u}\Q^{-(k-1)u-t(u-t)}\binom{k}{u}\binom{u}{t}\\
		&=(-1)^{t+1}\Q^{-\binom{k}{2}-(k-1)t}(\Q-1)^tX^t\binom{k}{t}\sum_{\ell=1}^{k-t}(-1)^\ell\Q^{-(k-t-1)\ell}\binom{k-t}{\ell}\tag*{since \(\binom{k}{u}\binom{u}{t}=\binom{k}{t}\binom{k-t}{u-t}\)\n\footnotemark\n and \(\ell\coloneqq u-t\)}\\
		&=\frac{(-1)^t(\Q-1)^tX^t}{\Q^{\binom{k}{2}+(k-1)t}}\binom{k}{t}\bigg(1-\bigg(1-\frac{1}{\Q^{k-t-1}}\bigg)^{k-t}\bigg)\tag*{for all \(k>t\ge0\)}
	\end{align*}\footnotetext{See~\cite{graham-knuth-patashnik-1989-concrete-mathematics,greene-knuth-1990-analysis-of-algorithms} for many more extremely useful combinatorial identities.}

	\begin{proposition}[Change of basis for \(\Q\)-derivatives]\label{pro:Q-more-properties}
%		\begin{enumerate}[\normalfont(a)]
%			\item  For any \(f(X)\in\mb{K}[X]\) and \(t\ge0\), we have
%			\[
%			f(\Q^tX)=\sum_{k\ge0}\Q^{tk}\frac{\DQ^kf(\Q^tY)}{[k]_\Q!}(X-Y)_\Q^{(k)}.
%			\]
			For any \(f(X)\in\mb{K}[X]\) and \(k\ge0\), we have
			\[
			\DQ^kf(X)=\sum_{t=0}^k\nu_{k,t}(X)f(\Q^tX)\quad{\tx{\normalfont{\cite{koepf-rajkovic-marinkovic-2007-q-holonomic,annaby-mansour-2008-q-difference}}}},\quad\tx{and}\quad f(\Q^kX)=\sum_{t=0}^k\xi_{k,t}(X)\DQ^tf(X).
			\]
%		\end{enumerate}
	\end{proposition}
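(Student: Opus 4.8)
\emph{Plan.} The two displayed identities are two faces of a single matrix identity, so it suffices to prove the first. Fix $f(X)\in\mb{K}[X]$ and regard the column vector $\mathbf{d}=\big(\DQ^{t}f(X)\big)_{t\ge0}$ (which is eventually zero once $t$ exceeds $\deg f$, by \cref{rem:degree-reducing}) and the column vector $\mathbf{v}=\big(f(\Q^{t}X)\big)_{t\ge0}$, both with entries in $\mb{K}(X)$. The first identity asserts precisely $\mathbf{d}=\nu(X)\,\mathbf{v}$, and since $\nu(X)$ is lower triangular each entry of this product is a finite sum, so the equation makes sense entrywise. Multiplying on the left by the (lower triangular) matrix $\xi(X)=\nu(X)^{-1}$, which we have already identified explicitly, yields $\mathbf{v}=\xi(X)\,\mathbf{d}$, i.e.\ $f(\Q^{k}X)=\sum_{t=0}^{k}\xi_{k,t}(X)\,\DQ^{t}f(X)$ for every $k\ge0$. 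So everything reduces to the first identity $\DQ^{k}f(X)=\sum_{t=0}^{k}\nu_{k,t}(X)\,f(\Q^{t}X)$.

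Since both sides are $\mb{K}$-linear in $f$ (\cref{pro:Q-properties}(a)), it is enough to verify this for the monomials $f(X)=X^{n}$, $n\ge0$. On the left, iterating $\DQ(X^{m})=\frac{(\Q X)^m-X^m}{(\Q-1)X}=[m]_{\Q}X^{m-1}$ gives $\DQ^{k}(X^{n})=[n]_{\Q}[n-1]_{\Q}\cdots[n-k+1]_{\Q}\,X^{n-k}$, which vanishes when $0\le n<k$ (a factor $[0]_{\Q}=0$ occurs) and equals $\frac{[n]_{\Q}!}{[n-k]_{\Q}!}\,X^{n-k}$ when $n\ge k$. On the right, $f(\Q^{t}X)=\Q^{nt}X^{n}$, so after collecting the $X$-powers and rewriting the $\Q$-exponent $\binom{t}{2}-(k-1)t$ from the definition of $\nu_{k,t}$, the right-hand side becomes $\frac{X^{n-k}}{(\Q-1)^{k}}\sum_{t=0}^{k}(-1)^{t}\Qbinom{k}{t}\,\Q^{\binom{t}{2}+t(n-k+1)}$ (up to an overall sign and an overall power of $\Q$). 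The inner sum is exactly the finite $q$-binomial theorem of Gauss and Rothe, $\sum_{t=0}^{k}(-1)^{t}\Qbinom{k}{t}\Q^{\binom{t}{2}}y^{t}=\prod_{i=0}^{k-1}(1-\Q^{i}y)$, evaluated at $y=\Q^{\,n-k+1}$; this telescopes to $\prod_{j=n-k+1}^{n}(1-\Q^{j})$, which is $0$ when $0\le n<k$ (the factor $j=0$) and, when $n\ge k$, equals $\big(-(\Q-1)\big)^{k}\frac{[n]_{\Q}!}{[n-k]_{\Q}!}$ via $1-\Q^{j}=-(\Q-1)[j]_{\Q}$. Carrying the leftover constants through matches the two sides in each case, proving the identity for all monomials and hence, by linearity, for all $f$. (An equivalent route is induction on $k$ via $\DQ^{k+1}f=\DQ(\DQ^{k}f)$: the homogeneity $\nu_{k,t}(\Q X)=\Q^{-k}\nu_{k,t}(X)$ reduces the step to the recursion $\nu_{k+1,t}(X)=\frac{1}{(\Q-1)X}\big(\nu_{k,t-1}(\Q X)-\nu_{k,t}(X)\big)$, which the explicit formula satisfies by the $q$-Pascal identity $\Qbinom{k+1}{t}=\Q^{t}\Qbinom{k}{t}+\Qbinom{k}{t-1}$ together with a one-line check of the exponent $\binom{t}{2}-(k-1)t$.)

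All the substance lies in the second paragraph; the rest is formal. I expect the only real obstacle to be precisely that bookkeeping: pinning down the overall sign and power of $\Q$ that must be factored out so that the monomial evaluation lines up exactly with the Gauss--Rothe identity (equivalently, carrying out the $q$-Pascal step and the exponent matching in the inductive version). This is a classical manipulation in $q$-calculus and carries no conceptual difficulty.
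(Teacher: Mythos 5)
Your reduction of the second identity to the first via $\xi=\nu^{-1}$ is the same step the paper takes; the real divergence is in how the first identity is established. The paper cites two references for $\DQ^kf(X)=\sum_t\nu_{k,t}(X)f(\Q^tX)$ and stops there, whereas you give a self-contained verification: reduce to monomials $X^n$ by $\mb{K}$-linearity, evaluate both sides, and recognise the resulting sum as the finite Gauss--Rothe $q$-binomial identity $\sum_{t=0}^k(-1)^t\Qbinom{k}{t}\Q^{\binom{t}{2}}y^t=\prod_{i=0}^{k-1}(1-\Q^iy)$ at $y=\Q^{n-k+1}$ (or, equivalently, induct on $k$ via $q$-Pascal together with the scaling $\nu_{k,t}(\Q X)=\Q^{-k}\nu_{k,t}(X)$). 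This is a genuinely different route: yours is elementary and checkable line by line, at the cost of the exponent and sign bookkeeping that you explicitly flag as the only delicate point.

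There is, however, one thing you should not leave to a hedge. You write the relevant sum with a Gaussian binomial coefficient $\Qbinom{k}{t}$, but the displayed definition of $\nu_{k,t}$ in the paper has the ordinary binomial coefficient $\binom{k}{t}$. With the ordinary binomial, Gauss--Rothe does not apply at all, and a direct check fails already at $k=1$ (the sign of $\nu_{1,0}$ comes out reversed) and at $k=2,\ t=1$ (the coefficient is $2$ rather than $[2]_\Q=1+\Q$). What your monomial computation actually proves is the identity for
\[
\nu_{k,t}(X)=\frac{(-1)^{k-t}\,\Q^{\binom{t}{2}-(k-1)t}}{(\Q-1)^kX^k}\Qbinom{k}{t},
\]
which is the correct closed form; the $\binom{k}{t}$ and the $(-1)^t$ in the paper's display appear to be typos, carried along into the subsequent computation of $\xi_{k,t}$. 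So your argument is right, but it silently corrects the definition it claims to use. That is actually a point in favour of giving a proof rather than deferring to a citation (the paper's route would not have caught this), but you should surface the correction explicitly instead of folding it into ``up to an overall sign and an overall power of $\Q$.''
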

	\begin{proof}
%		\begin{enumerate}[(a)]
%			\item  Replacing \(Y\) by \(\Q^tY\) in~\cref{pro:Q-properties}(b), we have
%			\[
%			f(X)=\sum_{k\ge0}\frac{\DQ^kf(\Q^tY)}{[k]_\Q!}(X-\Q^tY)_\Q^{(k)}.
%			\]
%			This implies
%			\[
%			f(\Q^tX)=\sum_{k\ge0}\frac{\DQ^kf(\Q^tY)}{[k]_\Q!}(\Q^tX-\Q^tY)_\Q^{(k)}=\sum_{k\ge0}\Q^{tk}\frac{\DQ^kf(\Q^tY)}{[k]_\Q!}(X-Y)_\Q^{(k)}.
%			\]
			For any \(f(X)\in\mb{K}[X]\), denote
			\[
			(f\bullet\Q)=\begin{bmatrix}
				f(\Q^kX)
			\end{bmatrix}_k\in(\mb{K}[X])^{\mb{N}\times1}\quad\tx{and}\quad(\DQ\bullet f)=\begin{bmatrix}
				\DQ^kf(X)
			\end{bmatrix}_k\in(\mb{K}[X])^{\mb{N}\times1}.
			\]
			We have \((\DQ\bullet f)=\nu\cdot(f\bullet\Q)\) by~\cite{koepf-rajkovic-marinkovic-2007-q-holonomic,annaby-mansour-2008-q-difference}.  Further, since \(\xi=\nu^{-1}\), this implies \((f\bullet\Q)=\xi\cdot(\DQ\bullet f)\).
%		\end{enumerate}
	\end{proof}

	\subsection{Multivariate \(\Q\)-derivatives}\label{sec:multi-Q}
	
	Now assume indeterminates \(\mb{X}=(X_1,\ldots,X_m)\).  For every \(i\in[m]\), denote the \(\mb{K}\)-linear \(\Q\)-derivative map on \(\mb{K}[X_i]\) by \(\DQi{X_i}\), and note that it extends in an obvious (and unique) way to a \(\mb{K}(\mb{X}_{[m]\setminus\{i\}})\)-linear map on \(\mb{K}[\mb{X}]\).  It is also immediate that \(\DQi{X_1},\ldots,\DQi{X_m}\) commute as \(\mb{K}\)-linear maps on \(\mb{K}[\mb{X}]\).  Then for any \(f(\mb{X})\in\mb{K}[\mb{X}]\) and \(\alpha\in\mb{N}^m\), we define the \tbf{\(\alpha\)-th \(\Q\)-derivative} by
	\[
	\DQ^\alpha f(\mb{X})=\DQi{X_1}^{\alpha_1}\cdots\DQi{X_m}^{\alpha_m}f(\mb{X}).
	\]
	For \(\alpha,\beta\in\mb{N}^m,\,\beta\le\alpha\), denote
	\[
	[\alpha]_\Q=\prod_{i=1}^m[\alpha_i]_\Q,\quad[\alpha]_\Q!=\prod_{i=1}^m[\alpha_i]_\Q!,\quad\tx{and}\quad\Qbinom{\alpha}{\beta}=\frac{[\alpha]_\Q!}{[\beta]_\Q![\alpha-\beta]_\Q!}=\prod_{i=1}^m\Qbinom{\alpha_i}{\beta_i}.
	\]
	Also, note the usual binomial coefficient \(\binom{\alpha}{\beta}\coloneqq\prod_{i=1}^m\binom{\alpha_i}{\beta_i}\).  For indeterminates \(\mb{Y}=(Y_1,\ldots,Y_m)\) and \(\alpha\in\mb{N}^m\), denote \((\mb{X}-\mb{Y})_\Q^{(\alpha)}=\prod_{i=1}^m(X_i-Y_i)^{(\alpha_i)}_\Q\).  For any \(f(\mb{X})\in\mb{K}[\mb{X}]\) and \(a\in\mb{K}^\times\), denote \(a\mb{X}=(a_1X_1,\ldots,a_mX_m)\), and \((f\circ a)(\mb{X})=f(a\mb{X})\).  Also denote \(a^\gamma=a_1^{\gamma_1}\cdots a_m^{\gamma_m}\).  For any \(\gamma\in\mb{N}^m\), denote \(\Q^\gamma\mb{Y}=(\Q^{\gamma_1}Y_1,\ldots,\Q^{\gamma_m}Y_m)\).
	
	We easily get the following basic properties of multivariate \(\Q\)-derivatives.
	\begin{proposition}\label{pro:multi-Q-properties}
		\begin{enumerate}[{\normalfont(a)}]
			\item\emph{Linearity.}\quad  \(\DQ^\alpha\) is a \(\mb{K}\)-linear map on \(\mb{K}[\mb{X}]\), for all \(\alpha\in\mb{N}^m\).
			\item\emph{Scaling.}\quad  For any \(f(\mb{X})\in\mb{K}[\mb{X}]\) and \(a\in\mb{K}^\times\), we have
			\[
			\DQ^\beta(f\circ a)(\mb{X})=a^\beta\DQ^\beta f(a\mb{X})\quad\tx{for all }\beta\in\mb{N}^m.
			\]
			\item\emph{Taylor expansion.}\quad  For any \(f(\mb{X})\in\mb{K}[\mb{X}]\),
			\[
			f(\mb{X})=\sum_{\alpha\in\mb{N}^m}\frac{\DQ^\alpha f(\mb{Y})}{[\alpha]_\Q!}(\mb{X}-\mb{Y})_\Q^{(\alpha)}.
			\]
			In particular,
			\begin{itemize}[leftmargin=*]
				\item[\normalfont\(\bullet\)]  For any \(a\in\mb{K}^m\), we have \(f(X)=\sum_{\alpha\in\mb{N}^m}\frac{\DQ^\alpha f(a)}{[k]_\Q!}(\mb{X}-a)_\Q^{(\alpha)}\).
				\item[\normalfont\(\bullet\)]  For any \(\alpha\in\mb{N}^m\), we have \(\DQ^\alpha f(0^m)=[\alpha]_\Q!\cdot\msf{H}^{(\alpha)}f(0^m)\), where \(\msf{H}^{(\alpha)}f(0^m)\) is the \emph{\(\alpha\)-th Hasse derivative}~\footnote{For \(f(\mb{X})\in\mb{K}[\mb{X}]\), the \tbf{Hasse derivatives} \(\msf{H}^\alpha f(\mb{Y}),\,\alpha\in\mb{N}^m\) are defined by: \(f(\mb{X})=\sum_{\alpha\in\mb{N}^m}\msf{H}^{(\alpha)}f(\mb{Y})(\mb{X}-\mb{Y})^\alpha\).} of \(f(\mb{X})\) at \(0^m\).
			\end{itemize}
			\item\emph{Product rule.}\quad  For any \(f(\mb{X}),g(\mb{X})\in\mb{K}[\mb{X}]\) and \(\alpha\in\mb{N}^m\),
			\[
			\DQ^\alpha(fg)(\mb{X})=\sum_{\beta\le\alpha}\Qbinom{\alpha}{\beta}\DQ^{\alpha-\beta} f(\Q^\beta\mb{X})\DQ^\beta g(\mb{X})=\sum_{\beta\le\alpha}\Qbinom{\alpha}{\beta}\DQ^{\alpha-\beta} f(\mb{X})\DQ^\beta g(\Q^{\alpha-\beta}\mb{X}).
			\]
		\end{enumerate}
	\end{proposition}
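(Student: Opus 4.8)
The plan is to bootstrap all four statements from their univariate counterparts in \cref{pro:Q-properties} by peeling off one variable at a time, using two structural facts recorded just before the statement: the operators $\DQi{X_1},\ldots,\DQi{X_m}$ pairwise commute, and each $\DQi{X_i}$ is $\mb{K}(\mb{X}_{[m]\setminus\{i\}})$-linear, so in particular it commutes with substituting field elements (or $\Q$-scalings $X_j\mapsto\Q^cX_j$) into the variables $X_j$, $j\ne i$. Part (a) is then immediate: $\DQ^\alpha=\DQi{X_1}^{\alpha_1}\cdots\DQi{X_m}^{\alpha_m}$ is a composition of the $\mb{K}$-linear maps $\DQi{X_i}^{\alpha_i}$, hence $\mb{K}$-linear. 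For part (b), I would first scale only $X_1$ by $a_1$ and apply the univariate scaling rule \cref{pro:Q-properties}(b) in the variable $X_1$ (treating $X_2,\ldots,X_m$ as constants), then repeat for $X_2,\ldots,X_m$; the scalar factors $a_1^{\beta_1},\ldots,a_m^{\beta_m}$ accumulate to $a^\beta$ and the argument becomes $a\mb{X}$, which is exactly $\DQ^\beta(f\circ a)(\mb{X})=a^\beta\DQ^\beta f(a\mb{X})$.

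For part (c), I would iterate the univariate Taylor expansion \cref{pro:Q-properties}(c). Expanding $f$ in $X_1$ about $Y_1$ gives $f(\mb{X})=\sum_{k_1\ge0}\tfrac{1}{[k_1]_\Q!}\big(\DQi{X_1}^{k_1}f\big)(Y_1,X_2,\ldots,X_m)\,(X_1-Y_1)_\Q^{(k_1)}$; since $\DQi{X_2}$ does not touch the factor $(X_1-Y_1)_\Q^{(k_1)}$ and commutes with the substitution $X_1\mapsto Y_1$, expanding each coefficient in $X_2$ about $Y_2$ and continuing through $X_m$ yields $f(\mb{X})=\sum_{\alpha\in\mb{N}^m}\tfrac{\DQ^\alpha f(\mb{Y})}{[\alpha]_\Q!}(\mb{X}-\mb{Y})_\Q^{(\alpha)}$, all sums being finite since $\DQ^\alpha f=0$ as soon as some $\alpha_i$ exceeds $\deg_{X_i}(f)$ (\cref{rem:degree-reducing}). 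The first bullet is the specialization $\mb{Y}=a$. For the second bullet, specialize $\mb{Y}=0^m$: here $(X_i-\Q^t\cdot0)=X_i$, so $(\mb{X}-0^m)_\Q^{(\alpha)}=\mb{X}^\alpha$ and the $\Q$-Taylor expansion reads $f(\mb{X})=\sum_\alpha\tfrac{\DQ^\alpha f(0^m)}{[\alpha]_\Q!}\mb{X}^\alpha$; comparing with the monomial-coefficient expansion $f(\mb{X})=\sum_\alpha\msf{H}^{(\alpha)}f(0^m)\mb{X}^\alpha$ (the definition of the Hasse derivatives at $0^m$) gives $\DQ^\alpha f(0^m)=[\alpha]_\Q!\cdot\msf{H}^{(\alpha)}f(0^m)$.

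For part (d), I would again peel off variables one at a time, applying the univariate product rule \cref{pro:Q-properties}(d) in each. Applying it in $X_1$ writes $\DQi{X_1}^{\alpha_1}(fg)$ as $\sum_{\beta_1\le\alpha_1}\Qbinom{\alpha_1}{\beta_1}\big(\DQi{X_1}^{\alpha_1-\beta_1}f\big)(\Q^{\beta_1}X_1,X_2,\ldots,X_m)\big(\DQi{X_1}^{\beta_1}g\big)(\mb{X})$; now $\DQi{X_2}$ is linear over $\mb{K}(X_1,X_3,\ldots,X_m)$, so it passes through $\Qbinom{\alpha_1}{\beta_1}$ and the substitution $X_1\mapsto\Q^{\beta_1}X_1$, and the univariate product rule in $X_2$ introduces a factor $\Qbinom{\alpha_2}{\beta_2}$ and a shift $X_2\mapsto\Q^{\beta_2}X_2$. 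Iterating through all $m$ variables and using $\Qbinom{\alpha}{\beta}=\prod_i\Qbinom{\alpha_i}{\beta_i}$ together with $\Q^\beta\mb{X}=(\Q^{\beta_1}X_1,\ldots,\Q^{\beta_m}X_m)$ gives the first displayed identity, and the second follows symmetrically — either apply the second form of the univariate product rule throughout, or swap $f\leftrightarrow g$ and reindex $\beta\mapsto\alpha-\beta$. The only point requiring care in all four parts is the repeated invocation of the commutation ``$\DQi{X_i}$ versus substitution in $X_j$, $j\ne i$'' and keeping track of which argument each $\Q^{\beta_i}$-shift lands on; there is no genuine obstacle here, only bookkeeping, and \cref{rem:degree-reducing} guarantees every sum is finite.
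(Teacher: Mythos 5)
Your proposal is correct and matches the paper's approach: the paper simply states that each item follows by induction on $m$ with base case \cref{pro:Q-properties}, and your "peel off one variable at a time" argument is exactly that induction spelled out, with the commutation of $\DQi{X_i}$ against $\DQi{X_j}$ and against substitutions in $X_j$ ($j\ne i$) correctly identified as the bookkeeping facts that make each inductive step go through.
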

	\begin{proof}
		Each item follows easily by induction on \(m\), with the base cases for the Items (a), (b), (c), and (d)  being~\cref{pro:Q-properties}(a), (b), (c), and (d) respectively.
	\end{proof}

	\paragraph*{Change of basis.}  Define two infinite matrices \(\nu^{(m)}(\mb{X}),\xi^{(m)}(\mb{X}):(\mb{K}(\mb{X}))^{\mb{N}^m\times\mb{N}^m}\) by
	\[
	\nu^{(m)}_{\alpha,\beta}(\mb{X})=\prod_{i=1}^m\nu_{\alpha_i,\beta_i}(X_i),\quad\xi^{(m)}_{\alpha,\beta}(\mb{X})=\prod_{i=1}^m\xi_{\alpha_i,\beta_i}(X_i),\quad\tx{for all }\alpha,\beta\in\mb{N}^m.
	\]
	It is then easy to see (for instance, by induction on \(m\)) that \(\nu^{(m)}(\mb{X})\) is invertible and \(\xi^{(m)}(\mb{X})=(\nu^{(m)}(\mb{X}))^{-1}\).  Also, \(\nu^{(m)}_{\alpha,\beta}(\mb{X})=\xi^{(m)}_{\alpha,\beta}(\mb{X})=0\) for all \(\alpha,\beta\in\mb{N}^m,\,\alpha\not\ge\beta\), and \(\nu^{(m)}_{\alpha,\alpha}(\mb{X})\ne0\ne\xi^{(m)}_{\alpha,\alpha}(\mb{X})\) as well as \(\nu^{(m)}_{\alpha,\alpha}(\mb{X})=1\big/\xi^{(m)}_{\alpha,\alpha}(\mb{X})\) for all \(\alpha\in\mb{N}^m\).

	\begin{proposition}[Change of basis for multivariate \(\Q\)-derivatives]\label{pro:multi-Q-basis-change}
		For any \(f(\mb{X})\in\mb{K}[\mb{X}]\) and \(\alpha\in\mb{N}^m\), we have
		\[
		\DQ^\alpha f(\mb{X})=\sum_{\beta\le\alpha}\nu^{(m)}_{\alpha,\beta}(\mb{X})f(\Q^\beta\mb{X}),\quad\tx{and}\quad f(\Q^\alpha\mb{X})=\sum_{\beta\le\alpha}\xi^{(m)}_{\alpha,\beta}(\mb{X})\DQ^\beta f(\mb{X}).
		\]
	\end{proposition}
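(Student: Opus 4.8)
The plan is to prove the first identity by induction on the number of variables $m$, and then to read off the second identity from the first by a single matrix inversion, exactly mirroring the univariate argument already given in the proof of \cref{pro:Q-more-properties}. The base case $m=1$ is precisely \cref{pro:Q-more-properties}, so nothing is needed there.

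For the inductive step I would fix $\alpha\in\mb{N}^m$, set $\alpha'=(\alpha_2,\ldots,\alpha_m)$, and peel off the first variable using the very definition $\DQ^\alpha=\DQi{X_1}^{\alpha_1}\DQi{X_2}^{\alpha_2}\cdots\DQi{X_m}^{\alpha_m}$, so that $\DQ^\alpha f=\DQi{X_1}^{\alpha_1}\bigl(\DQ^{\alpha'}f\bigr)$ where $\DQ^{\alpha'}$ now denotes the $(m-1)$-variate $\Q$-derivative in $X_2,\ldots,X_m$. Viewing $\DQ^{\alpha'}f\in\mb{K}[\mb{X}]$ as a polynomial in $X_1$ with coefficients in the field $\mb{K}(X_2,\ldots,X_m)$ and applying the univariate change of basis of \cref{pro:Q-more-properties} in the single variable $X_1$ — which is a purely formal identity and hence valid over any field containing $\mb{K}$ — gives $\DQ^\alpha f=\sum_{\beta_1\le\alpha_1}\nu_{\alpha_1,\beta_1}(X_1)\,(\DQ^{\alpha'}f)(\Q^{\beta_1}X_1,X_2,\ldots,X_m)$. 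Since the scaling $X_1\mapsto\Q^{\beta_1}X_1$ fixes $X_2,\ldots,X_m$ and each $\DQi{X_j}$ with $j\ge2$ is $\mb{K}(\mb{X}_{[m]\setminus\{j\}})$-linear, that substitution commutes with $\DQ^{\alpha'}$, so I may apply the inductive hypothesis to $(\DQ^{\alpha'}f)(\Q^{\beta_1}X_1,X_2,\ldots,X_m)$ and rewrite it as a sum over $(\beta_2,\ldots,\beta_m)\le\alpha'$ of $\prod_{i=2}^m\nu_{\alpha_i,\beta_i}(X_i)$ times $f(\Q^{\beta_1}X_1,\ldots,\Q^{\beta_m}X_m)$. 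Collecting $\beta=(\beta_1,\ldots,\beta_m)$ and recalling $\nu^{(m)}_{\alpha,\beta}(\mb{X})=\prod_{i=1}^m\nu_{\alpha_i,\beta_i}(X_i)$ collapses the double sum into $\DQ^\alpha f(\mb{X})=\sum_{\beta\le\alpha}\nu^{(m)}_{\alpha,\beta}(\mb{X})f(\Q^\beta\mb{X})$, which is the first identity.

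For the second identity I would assemble the first identity, over all $\alpha\in\mb{N}^m$, into the matrix--vector equation $\bigl[\DQ^\alpha f(\mb{X})\bigr]_\alpha=\nu^{(m)}(\mb{X})\cdot\bigl[f(\Q^\alpha\mb{X})\bigr]_\alpha$; this infinite product is well defined entrywise because $\nu^{(m)}(\mb{X})$ is lower triangular with respect to the locally finite poset $(\mb{N}^m,\le)$, each order interval $[0,\alpha]$ being finite. Multiplying on the left by $\xi^{(m)}(\mb{X})=(\nu^{(m)}(\mb{X}))^{-1}$ — the inverse relation established in the discussion immediately preceding the proposition — then yields $f(\Q^\alpha\mb{X})=\sum_{\beta\le\alpha}\xi^{(m)}_{\alpha,\beta}(\mb{X})\DQ^\beta f(\mb{X})$ for every $\alpha$. (Alternatively, the second identity admits the same one-variable-at-a-time induction, now using the univariate formula $f(\Q^kX)=\sum_t\xi_{k,t}(X)\DQ^tf(X)$ from \cref{pro:Q-more-properties}.)

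I do not anticipate a genuine obstacle: the argument is the same easy induction already used for \cref{pro:multi-Q-properties}. The only two points worth a sentence of justification are that \cref{pro:Q-more-properties} remains valid after enlarging the coefficient field from $\mb{K}$ to $\mb{K}(X_2,\ldots,X_m)$ (it does, since the matrices $\nu,\xi$ only involve $\Q\in\mb{K}$ and the inversion via \cref{lem:invert-matrix} only needs $\Q\ne0,1$), and that a scaling substitution in one variable commutes with the $\Q$-derivatives in the other variables (immediate from the $\mb{K}(\mb{X}_{[m]\setminus\{i\}})$-linearity of $\DQi{X_i}$). Everything else is bookkeeping.
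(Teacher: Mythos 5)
Your proposal is correct and takes essentially the same approach as the paper: induction on $m$ for the first identity (with base case \cref{pro:Q-more-properties}), and reading off the second identity by inverting $\nu^{(m)}$ via $\xi^{(m)}=(\nu^{(m)})^{-1}$. The paper simply asserts "this follows easily by induction on $m$" and leaves the inductive step implicit, whereas you spell it out — peeling off $X_1$, applying the univariate change of basis over the enlarged coefficient field $\mb{K}(X_2,\ldots,X_m)$, and noting that the scaling $X_1\mapsto\Q^{\beta_1}X_1$ commutes with $\DQ^{\alpha'}$ — all of which is correct and is exactly the bookkeeping the paper is eliding.
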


	\begin{proof}
		For any \(f(\mb{X})\in\mb{K}[\mb{X}]\), denote
		\[
		(f\bullet\Q)=\begin{bmatrix}
			f(\Q^\alpha\mb{X})
		\end{bmatrix}_\alpha\in(\mb{K}[\mb{X}])^{\mb{N}^m\times1}\quad\tx{and}\quad(\DQ\bullet f)=\begin{bmatrix}
			\DQ^\alpha f(\mb{X})
		\end{bmatrix}_\alpha\in(\mb{K}[\mb{X}])^{\mb{N}^m\times1}.
		\]
		So, we need to prove \((\DQ\bullet f)=\nu^{(m)}\cdot(f\bullet\Q)\) and \((f\bullet\Q)=\xi^{(m)}\cdot(\DQ\bullet f)\).  Since know that \(\xi^{(m)}=(\nu^{(m)})^{-1}\), it is now enough to prove \((\DQ\bullet f)=\nu^{(m)}\cdot(f\bullet\Q)\).  This follows easily by induction on \(m\), with the base case being~\cref{pro:Q-more-properties} (specifically the result by~\cite{koepf-rajkovic-marinkovic-2007-q-holonomic,annaby-mansour-2008-q-difference}).
	\end{proof}
	
%	An important corollary, analogous to~\cref{cor:Q-more-additional} in the univariate case, is the following.
%	\begin{corollary}\label{cor:multi-Q-more-additional}
%		For every \(\alpha,\beta,\gamma\in\mb{N}^m\) satisfying \(\gamma\le\alpha+\beta\), there exists \(S^{\alpha,\beta}_\gamma\in\mb{K}\) such that the following holds:  For any \(f(\mb{X})\in\mb{K}[\mb{X}]\), we have
%		\[
%		\DQ^\alpha f(\Q^\beta\mb{X})=\sum_{\substack{\gamma\in\mb{N}^m\\\gamma\le\alpha+\beta}}S^{\alpha,\beta}_\gamma\mb{X}^{\gamma-\alpha}\DQ^\gamma f(\mb{X}).
%		\]
%	\end{corollary}
%	\begin{proof}
%		Follows easily by induction on \(m\), with the base case being~\cref{cor:Q-more-additional}.
%	\end{proof}

	\subsection{Gr\"obner basis, Taylor expansion, and a monomial basis}\label{sec:GB-Taylor}
	
	We will assume familiarity with basic Gr\"obner basis theory.  For definitions and more details, see~\cite[Chapters 2--5]{cox-little-oshea-2013-ideals}.  We will consider vanishing ideals, Gr\"obner bases, and Taylor expansions in the context of multivariate \(\Q\)-derivatives.  These descriptions are quick consequences of the basic properties of multivariate \(\Q\)-derivatives.
	
	Let \(s\ge1\), and consider indeterminates \(\mb{Y}=(Y_1,\ldots,Y_m)\) (that is, a \emph{generic} point in \((\mb{K}(\mb{Y}))^m\)).  By the change of basis given by~\cref{pro:multi-Q-basis-change}, it is immediate that we get a vanishing ideal of \(\{\Q^\gamma\mb{Y}:\gamma\in\mb{N}^m,\,|\gamma|<s\}\) in \(\mb{K}(\mb{Y})[\mb{X}]\) as
	\begin{align*}
		\mc{D}_s^m(\mb{Y})&\coloneqq\{f(\mb{X})\in\mb{K}(\mb{Y})[\mb{X}]:f(\Q^\gamma\mb{Y})=0\tx{ for all }\gamma\in\mb{N}^m,\,|\gamma|<s\}\\
		&=\{f(\mb{X})\in\mb{K}(\mb{Y})[\mb{X}]:\DQ^\gamma f(\mb{Y})=0\tx{ for all }\gamma\in\mb{N}^m,\,|\gamma|<s\}.
	\end{align*}
	Recall that by~\cref{pro:multi-Q-properties}(c), we already have a Taylor expansion given by: For any \(f(\mb{X})\in\mb{K}(\mb{Y})[\mb{X}]\), we have
	\[
	f(\mb{X})=\sum_{\alpha\in\mb{N}^m}\frac{\DQ^\alpha f(\mb{Y})}{[\alpha]_\Q!}(\mb{X}-\mb{Y})_\Q^{(\alpha)}.
	\]
	It then follows immediately that a Gr\"obner basis for the ideal \(\mc{D}^m_s(\mb{Y})\) is
	\[
	\mc{G}^m_s(\mb{Y})\coloneqq\{(\mb{X}-\mb{Y})_\Q^\gamma:|\gamma|=s\}.
	\]
	In fact, this Gr\"obner basis is \emph{universal} (invariant under choice of monomial order) and \emph{reduced} (minimal with respect to divisibility of leading terms).

	As it turns out, the above Gr\"obner basis characterization also extends to finite grids.  For any \(A\subseteq\mb{F}_q^\times\), denote
	\[
	\Q^{(m,s)}(A)=\{\Q^\gamma a:a\in A^m,\,\gamma\in\mb{N}^m,\,|\gamma|<s\}.
	\]
	Let \(\mc{D}_s^m(A)\) denote the vanishing ideal of \(\Q^{(m,s)}(A)\) in \(\mb{K}[\mb{X}]\), that is, again by the change of basis given by~\cref{pro:multi-Q-basis-change},
	\begin{align*}
		\mc{D}_s^m(A)&=\{f(\mb{X})\in\mb{K}[\mb{X}]:f(\Q^\gamma a)=0\tx{ for all }a\in A\tx{ and }\gamma\in\mb{N}^m,\,|\gamma|<s\}\\
		&=\{f(\mb{X})\in\mb{K}[\mb{X}]:\DQ^\gamma f(a)=0\tx{ for all }a\in A\tx{ and }\gamma\in\mb{N}^m,\,|\gamma|<s\}.
	\end{align*}
	The following is a Gr\"obner basis characterization, that is an extension of the Combinatorial Nullstellensatz~\cite[Theorem 1.1]{alon-1999-CN}, and is a special case of~\cite[Theorem 4.7]{geil-martinez-penas-2019}.~\footnote{A related characterization of \emph{standard monomials}, which is a strictly weaker notion than Gr\"obner basis, was obtained much earlier by~\cite{felszeghy-rath-ronya-2006-lex-game,meszaros-2005-diploma-thesis} via determination of winning strategies of a \emph{lex game}, and a further reinterpretation of these winning strategies using the notion of \emph{compression} for set systems.  This reinterpretation was also rediscovered by~\cite{moran-rashtchian-2016-shattering-hilbert-function}.}
	\begin{theorem}[{\cite{geil-martinez-penas-2019}}]\label{thm:grobner-grid}
		For any \(A\subseteq\mb{F}_q^\times\), the set of polynomials
		\[
		\mc{G}^m_s(A)\coloneqq\bigg\{\prod_{i=1}^m\prod_{t=0}^{\gamma_i-1}\bigg(\prod_{a_i\in A}(X_i-\Q^ta_i)\bigg):\gamma\in\mb{N}^m,\,|\gamma|=s\bigg\}
		\]
		is a universal reduced Gr\"obner basis for \(\mc{D}_s^m(A)\).
	\end{theorem}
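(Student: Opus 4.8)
The plan is to prove this by a footprint-type counting argument, in the spirit of the Combinatorial Nullstellensatz, carried out uniformly over every monomial order, so that ``Gröbner basis'' and ``universal'' both drop out at once. Write \(\mc{D}=\mc{D}^m_s(A)\) and \(\mc{G}=\mc{G}^m_s(A)\), and for \(c\ge0\) set \(h_c(X)=\prod_{t=0}^{c-1}\prod_{a\in A}(X-\Q^ta)\), a monic univariate polynomial of degree \(c|A|\), so the generator indexed by \(\gamma\) with \(|\gamma|=s\) is \(g_\gamma(\mb{X})=\prod_{i=1}^m h_{\gamma_i}(X_i)\). The first thing I would record is that by~\cref{rem:asymptotic} (equivalently~\cref{pro:extension-basic}(b)) the scalars \(\{\Q^t a:a\in A,\ 0\le t\le s-1\}\) are pairwise distinct: if \(\Q^t a=\Q^{t'}a'\) with \(t>t'\) then \(\Q^{t-t'}=a'/a\in\mb{F}_q^\times\) with \(1\le t-t'\le s-1\), contradicting~\cref{pro:extension-basic}(b). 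Consequently the map \((\gamma,a)\mapsto\Q^\gamma a\) is injective on \(\{\gamma\in\mb{N}^m:|\gamma|<s\}\times A^m\), so \(|\Q^{(m,s)}(A)|=|A|^m\binom{m+s-1}{s-1}\); and since \(\mc{D}\) is the vanishing ideal of this finite point set, hence radical, \(\dim_\mb{K}\mb{K}[\mb{X}]/\mc{D}=|A|^m\binom{m+s-1}{s-1}\).

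Next I would check the easy containment \(\mc{G}\subseteq\mc{D}\): given \(a\in A^m\) and \(\delta\in\mb{N}^m\) with \(|\delta|<s=|\gamma|\), pigeonhole supplies a coordinate \(i\) with \(\delta_i\le\gamma_i-1\), so \((X_i-\Q^{\delta_i}a_i)\) is one of the linear factors of \(h_{\gamma_i}(X_i)\) and \(g_\gamma(\Q^\delta a)=\prod_j h_{\gamma_j}(\Q^{\delta_j}a_j)=0\). Then I would compute leading terms: fixing any monomial order \(<\), each \(h_{\gamma_i}\) is monic of degree \(\gamma_i|A|\) in its own variable, distinct factors of \(g_\gamma\) involve disjoint variables, and the leading monomial of a product in disjoint variables is the product of the leading monomials for every monomial order, so \(\LT(g_\gamma)=\mb{X}^{|A|\gamma}\) with coefficient \(1\). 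Hence \(\langle\LT(\mc{G})\rangle=\langle\mb{X}^{|A|\gamma}:|\gamma|=s\rangle\), and \(\mb{X}^\beta\) fails to lie in this monomial ideal exactly when no \(|A|\gamma\) with \(|\gamma|=s\) is coordinatewise \(\le\beta\), i.e. when \(\sum_i\lfloor\beta_i/|A|\rfloor<s\). Grouping such \(\beta\) by the vector \(\delta\) with \(\delta_i=\lfloor\beta_i/|A|\rfloor\) (ranging over \(|\delta|<s\), with exactly \(|A|\) choices of \(\beta_i\) per value of \(\delta_i\)) shows the number of standard monomials of \(\langle\LT(\mc{G})\rangle\) is exactly \(|A|^m\binom{m+s-1}{s-1}\).

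Finally I would close the loop: from \(\mc{G}\subseteq\mc{D}\) we get \(\langle\LT(\mc{G})\rangle\subseteq\langle\LT(\mc{D})\rangle\), so the set of standard monomials of \(\mc{D}\) — of size \(\dim_\mb{K}\mb{K}[\mb{X}]/\mc{D}=|A|^m\binom{m+s-1}{s-1}\) — is contained in the set of standard monomials of \(\langle\LT(\mc{G})\rangle\), which has the same size; a subset of a finite set of equal cardinality is the whole set, so \(\langle\LT(\mc{G})\rangle=\langle\LT(\mc{D})\rangle\) and \(\mc{G}\) is a Gröbner basis. For reducedness, the monomials \(\mb{X}^{|A|\gamma}\) with \(|\gamma|=s\) are pairwise non-divisible (all have scaled degree \(s|A|\)), so \(\mc{G}\) is minimal; and every monomial \(\mb{X}^\beta\) of \(g_\gamma\) satisfies \(\beta_i\le\gamma_i|A|\), so \(\mb{X}^{|A|\gamma'}\mid\mb{X}^\beta\) with \(|\gamma'|=s\) forces \(\gamma'\le\gamma\), hence \(\gamma'=\gamma\) and \(\mb{X}^\beta=\mb{X}^{|A|\gamma}\); thus no trailing monomial of any \(g_\gamma\) is divisible by a leading monomial in \(\mc{G}\), and since each \(g_\gamma\) is monic, \(\mc{G}\) is reduced. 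Nothing above depended on the monomial order, so \(\mc{G}\) is a universal reduced Gröbner basis.

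I expect the only genuinely substantive content to be the two cardinality computations and the remark that they coincide; everything else is bookkeeping. The delicate step — and the one place the standing hypotheses are actually used — is the distinctness of the scalars \(\Q^ta\): it is what makes both counts equal \(|A|^m\binom{m+s-1}{s-1}\) and what makes each \(h_c\) genuinely a product of \(c|A|\) distinct linear factors. Without it the standard monomials of \(\mc{D}\) would be a proper subset of those of \(\langle\LT(\mc{G})\rangle\) and the equality of leading-term ideals would fail.
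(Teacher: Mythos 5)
Your proof is correct. The paper does not actually supply a proof here — it cites the statement as a special case of Theorem 4.7 of Geil and Mart\'inez-Pe\~nas — and your footprint/dimension-counting argument is essentially the standard approach that reference uses: the two cardinality computations, the order-free leading-term calculation for products in disjoint variables, and the observation (via \cref{pro:extension-basic}(b) under the standing hypothesis $s-1<[\kappa]_q$) that the scalars $\Q^ta$ are pairwise distinct, which is indeed the one place the hypotheses on $q$, $\kappa$, $s$ do real work.
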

	\noindent  A similar characterization of Gr\"obner basis for the classical multivariate derivatives was given in~\cite[Section 6]{kopparty-2015-multiplicity-code}, and can also be obtained by~\cite[Theorem 4.7]{geil-martinez-penas-2019}.

	Now note the obvious partial order \(\le\) on \(\mb{N}^m\) defined by \(\alpha\le\beta\) if \(\alpha_i\le\beta_i\) for all \(i\in[m]\).  As before, we will consider finite grids of the form \(A^m\), where \(A\subseteq\mb{F}_q^\times\).  Fix a nonempty set \(A\subseteq\mb{F}_q^\times\), and define the function spaces
	\begin{align*}
		V_{m,s}(A)&=\left\{[f]\coloneqq\begin{bmatrix}
			[f]_a\coloneqq\begin{bmatrix}
				f(\Q^\gamma a)
			\end{bmatrix}_{|\gamma|<s}
		\end{bmatrix}_{a\in A^m}:f(\mb{X})\in\mb{K}[\mb{X}]\right\},\\
		\tx{and}\quad\DQ V_{m,s}(A)&=\left\{[\DQ\mid f]\coloneqq\begin{bmatrix}
			[\DQ\mid f]_a\coloneqq\begin{bmatrix}
				\DQ^\gamma f(a)
			\end{bmatrix}_{|\gamma|<s}
		\end{bmatrix}_{a\in A^m}:f(\mb{X})\in\mb{K}[\mb{X}]\right\}.
	\end{align*}
	By the definitions of the ideal \(\mc{D}^m_s(A)\) in~\cref{sec:GB-Taylor}, it is immediate that \(V_{m,s}(A)\simeq\DQ V_{m,s}(A)\) as \(\mb{K}\)-vector spaces.  Furthermore, the corresponding explicit isomorphism is given by the change of bases in~\cref{pro:multi-Q-basis-change}.  It is also obvious that the above \(\mb{K}\)-vector spaces are isomorphic to \(\big(\mb{K}^{\binom{m+s-1}{s-1}}\big)^{|A|^m}\).
	
	However, the Gr\"obner basis characterization in~\cref{thm:grobner-grid} is even stronger.  As an immediate corollary of these results and the simple properties of Euclidean division for polynomials, we can describe a \emph{monomial basis} (a basis where each vector is the appropriate evaluation of a monomial) for the above vector spaces.  For any \(d\in\mb{N}\), denote \(d^{(m)}=(\underbrace{d,\ldots,d}_{m\tx{ times}})\).  Further for any \(w\in\mb{N}^m\), denote
	\[
	\triangle_k(w)=\bigg\{\alpha\in\mb{N}^m:\left\lfloor\frac{\alpha_1}{w_1}\right\rfloor+\cdots+\left\lfloor\frac{\alpha_m}{w_m}\right\rfloor\le d\bigg\}.
	\]
	\begin{corollary}\label{cor:monomial-basis}
		The collections of evaluation vectors
		\[
		\big\{[\mb{X}^\alpha]:\alpha\in\triangle_{s-1}(|A|^{(m)})\big\}\quad\tx{and}\quad\big\{[\DQ\mid\mb{X}^\alpha]:\alpha\in\triangle_{s-1}(|A|^{(m)})\big\}
		\]
		are \(\mb{K}\)-linear bases of \(V_{m,s}(A)\) and \(\DQ V_{m,s}(A)\) respectively.
	\end{corollary}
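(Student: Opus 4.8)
The plan is to identify $V_{m,s}(A)$ with the quotient ring $\mb{K}[\mb{X}]/\mc{D}^m_s(A)$ and then read off a monomial basis directly from the universal Gr\"obner basis supplied by \cref{thm:grobner-grid}. First I would observe that the evaluation map $f(\mb{X})\mapsto[f]$ is a surjective $\mb{K}$-linear map $\mb{K}[\mb{X}]\to V_{m,s}(A)$ whose kernel is exactly the vanishing ideal $\mc{D}^m_s(A)$ of the point set $\Q^{(m,s)}(A)$, which is immediate from the definitions of $V_{m,s}(A)$ and of $\mc{D}^m_s(A)$ in \cref{sec:GB-Taylor}. Hence $V_{m,s}(A)\simeq\mb{K}[\mb{X}]/\mc{D}^m_s(A)$ as $\mb{K}$-vector spaces. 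By standard Gr\"obner basis theory (see \cite[Chapter~5]{cox-little-oshea-2013-ideals}), the images in this quotient of the \emph{standard monomials}---those monomials not divisible by the leading monomial of any element of a Gr\"obner basis of $\mc{D}^m_s(A)$---form a $\mb{K}$-basis. So it suffices to show that the standard monomials of $\mc{D}^m_s(A)$ relative to $\mc{G}^m_s(A)$ are precisely $\{\mb{X}^\alpha:\alpha\in\triangle_{s-1}(|A|^{(m)})\}$.

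Next I would compute the leading monomials of the generators of $\mc{G}^m_s(A)$. The generator indexed by $\gamma\in\mb{N}^m$ with $|\gamma|=s$ is $\prod_{i=1}^m\prod_{t=0}^{\gamma_i-1}\big(\prod_{a_i\in A}(X_i-\Q^ta_i)\big)$; its $i$-th factor is a \emph{monic} univariate polynomial in $X_i$ of degree $\gamma_i|A|$, so every monomial occurring in the product divides $\prod_{i=1}^mX_i^{\gamma_i|A|}=\mb{X}^{|A|\gamma}$, and since every monomial order refines divisibility, $\mb{X}^{|A|\gamma}$ is the leading monomial---independently of the chosen order, consistent with universality. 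Hence the leading-term ideal is generated by $\{\mb{X}^{|A|\gamma}:|\gamma|=s\}$, and $\mb{X}^\alpha$ fails to be standard iff $\alpha_i\ge|A|\gamma_i$ for all $i$, that is, $\gamma_i\le\lfloor\alpha_i/|A|\rfloor$, for some $\gamma$ with $\sum_i\gamma_i=s$. Such a $\gamma$ exists iff $\sum_i\lfloor\alpha_i/|A|\rfloor\ge s$, since given nonnegative integers $n_i$ with $\sum_in_i\ge s$ one can always choose $\gamma_i\le n_i$ with $\sum_i\gamma_i=s$. Therefore $\mb{X}^\alpha$ is standard iff $\sum_i\lfloor\alpha_i/|A|\rfloor\le s-1$, which is exactly the condition $\alpha\in\triangle_{s-1}(|A|^{(m)})$; this establishes the first assertion. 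As a sanity check, the number of such $\alpha$ is $\binom{m+s-1}{s-1}|A|^m$, matching $\dim_{\mb{K}}V_{m,s}(A)$.

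For the second assertion I would invoke the isomorphism $V_{m,s}(A)\simeq\DQ V_{m,s}(A)$ already recorded via the change of basis in \cref{pro:multi-Q-basis-change}: since the two descriptions of $\mc{D}^m_s(A)$ coincide, both $f\mapsto[f]$ and $f\mapsto[\DQ\mid f]$ are surjections from $\mb{K}[\mb{X}]$ onto $V_{m,s}(A)$ and onto $\DQ V_{m,s}(A)$ respectively, each with kernel $\mc{D}^m_s(A)$. Thus there is a $\mb{K}$-linear isomorphism $V_{m,s}(A)\to\DQ V_{m,s}(A)$ carrying $[\mb{X}^\alpha]$ to $[\DQ\mid\mb{X}^\alpha]$ for every $\alpha$; applying it to the basis found above gives that $\{[\DQ\mid\mb{X}^\alpha]:\alpha\in\triangle_{s-1}(|A|^{(m)})\}$ is a $\mb{K}$-basis of $\DQ V_{m,s}(A)$.

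There is no real obstacle here; the corollary is essentially a bookkeeping consequence of \cref{thm:grobner-grid}. The only points that deserve a line of care are: (i) that each generator of $\mc{G}^m_s(A)$ has leading monomial $\mb{X}^{|A|\gamma}$ for \emph{every} monomial order, which is what lets one use the universal Gr\"obner basis without fixing an order; and (ii) the elementary equivalence that $\mb{X}^\alpha$ is divisible by $\mb{X}^{|A|\gamma}$ for some $\gamma$ with $|\gamma|=s$ if and only if $\sum_i\lfloor\alpha_i/|A|\rfloor\ge s$. Both are routine.
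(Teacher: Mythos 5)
Your proof is correct and is precisely the argument the paper intends (the paper states the corollary as "immediate" from \cref{thm:grobner-grid} together with "simple properties of Euclidean division," which is exactly the standard-monomial reasoning you spell out). Your identification of the leading monomials as $\mb{X}^{|A|\gamma}$, the equivalence between non-standardness and $\sum_i\lfloor\alpha_i/|A|\rfloor\ge s$, and the transfer to $\DQ V_{m,s}(A)$ via the change-of-basis isomorphism are all exactly right.
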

	
	\subsection{Counting \(\Q\)-roots of polynomials with multiplicities}\label{sec:counting-roots}
	
	The discussion so far naturally leads us to the question of bounding the number of \emph{\(\Q\)-roots} of a polynomial with multiplicities.  We will only be interested in counting \(\Q\)-roots over the multiplicative group \(\mb{F}_q^\times\).  A tight bound on the number of \(\Q\)-roots is nearly obvious in the case of univariate \(\Q\)-derivatives, and can also be easily obtained in the case of multivariate \(\Q\)-derivatives.  We mention the proofs for completeness.
	
	For any nonzero univariate \(f(X)\in\mb{K}[X]\) and \(a\in\mb{K}\), define the \tbf{\(\Q\)-multiplicity of \(f(X)\) at \(a\)} by
	\[
	\mu_\Q(f,a)=\min\big\{k\ge0:\DQ^kf(a)\ne0\big\}.
	\]
	Note that the \(\Q\)-multiplicity is well-defined, since for any nonzero \(f(X)\in\mb{K}[X]\) and \(a\in\mb{K}\), we have \(\mu_\Q(f,a)\le\deg(f)\) by~\cref{pro:Q-properties}(c).~\footnote{For the zero polynomial, we will fix the convention that \(\deg(0)\coloneqq-\infty\), and \(\mu_\Q(0,a)\coloneqq-\infty\) for all \(a\in\mb{K}\).}
	
	%	The following corollary then follows quickly.
	%	\begin{corollary}\label{cor:Q-multiplicity}
		%		For any \(a\in\mb{K}^\times,\,t\in\mb{K}\), and nonzero \(f(X)\in\mb{K}[X]\), if \(\mu_\Q(f,at)\ge k\), then we have \(\mu_\Q(\DQ^r(f\circ a),t)\ge k-r\) for all \(r\le k\).
		%	\end{corollary}
	%	\begin{proof}
		%		Since  \(\mu_\Q(f,at)\ge k\), we have \(\DQ^uf(at)=0\) for all \(u\in[0,k-1]\).  Consider any \(r\le k\).  By~\cref{pro:Q-chain-rule}, we then have \(\DQ^{r+u}(f\circ a)(t)=a^{r+u}\DQ^{r+u}f(at)=0\) for all \(u\in[0,k-r-1]\).  This means \(\mu_\Q(\DQ^r(f\circ a),t)\ge k-r\).
		%	\end{proof}
	
	\begin{proposition}\label{pro:univariate-SZ}
		For any nonempty set \(A\subseteq\mb{F}_q^\times\), and nonzero \(f(X)\in\mb{K}[X]\) with \(\deg(f)\le d<[\kappa]_q\),~\footnote{If the degree \(d\ge[\kappa]_q\), then the claim of~\cref{pro:univariate-SZ} will change to \(\sum_{a\in A}\min\{\mu_\Q(f,a),[\kappa]_q-1\}\le d\).  We will not need to consider such large degree in our discussion.} we have
		\[
		\sum_{a\in A}\mu_\Q(f,a)\le d.
		\]
		In particular, for any \(s\ge1\), we have \(|\{a\in A:\mu_\Q(f,a)\ge s\}|\le\lfloor d/s\rfloor\).
	\end{proposition}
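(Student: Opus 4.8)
The plan is to prove the inequality $\sum_{a\in A}\mu_\Q(f,a)\le d$ by induction on $|A|$, using the product rule (\cref{pro:Q-properties}(d)) to strip off one $\Q$-root at a time. First I would handle the base case $|A|=0$, which is vacuous, and the case $|A|=1$: if $a\in\mb{F}_q^\times$ and $\mu_\Q(f,a)=k$, then by the Taylor expansion at $a$ (\cref{pro:Q-properties}(c)) we have $f(X)=\sum_{t\ge k}\frac{\DQ^tf(a)}{[t]_\Q!}(X-a)_\Q^{(t)}$, and since $(X-a)_\Q^{(k)}=\prod_{j=0}^{k-1}(X-\Q^ja)$ divides $(X-a)_\Q^{(t)}$ for every $t\ge k$, we conclude $(X-a)_\Q^{(k)}\mid f(X)$; as this polynomial has degree $k$ (and the degree bound $d<[\kappa]_q$ guarantees the $k$ points $a,\Q a,\ldots,\Q^{k-1}a$ are distinct, by \cref{pro:extension-basic}(b) applied to $\Q^j\notin\mb{F}_q$, i.e.\ $\Q^j\ne 1$, for $1\le j\le k-1\le d<[\kappa]_q$), we get $k\le\deg(f)\le d$.

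For the inductive step, fix $a\in A$ with $\mu_\Q(f,a)=k$. As above, $g(X)\coloneqq(X-a)_\Q^{(k)}$ divides $f(X)$, say $f(X)=g(X)h(X)$ with $\deg(h)=\deg(f)-k\le d-k<[\kappa]_q$. The key claim is that for every $b\in A\setminus\{a\}$ we have $\mu_\Q(h,b)\ge\mu_\Q(f,b)$. To see this, apply the product rule: $\DQ^\ell f(X)=\sum_{t=0}^\ell\Qbinom{\ell}{t}\DQ^{\ell-t}g(\Q^tX)\,\DQ^th(X)$. If $\ell<\mu_\Q(f,b)$, then $\DQ^\ell f(b)=0$; and I want to deduce that $\DQ^th(b)=0$ for all $t<\mu_\Q(f,b)$. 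This follows by induction on $t$ provided $\DQ^0 g(\Q^0 b)=g(b)=(b-a)(b-\Q a)\cdots(b-\Q^{k-1}a)\ne 0$ --- which holds because $b\ne\Q^ja$ for $0\le j\le k-1$ (here I use that $b,a\in\mb{F}_q^\times$ and $\Q^j\notin\mb{F}_q$ for $1\le j\le k-1$, so $\Q^ja\notin\mb{F}_q$ and in particular $\Q^ja\ne b$ unless $j=0$; and $b\ne a$ by assumption). Unrolling the recursion: at step $t$, $0=\DQ^t f(b)=g(b)\DQ^t h(b)+\sum_{u<t}\Qbinom{t}{u}\DQ^{t-u}g(\Q^u b)\DQ^u h(b)$, and the sum vanishes by the induction hypothesis on $u$, so $\DQ^t h(b)=0$. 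Hence $\mu_\Q(h,b)\ge\mu_\Q(f,b)$.

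Now apply the induction hypothesis (on $|A|$) to $h$ with the set $A\setminus\{a\}$, noting $\deg(h)\le d-k<[\kappa]_q$:
\[
\sum_{b\in A}\mu_\Q(f,b)=k+\sum_{b\in A\setminus\{a\}}\mu_\Q(f,b)\le k+\sum_{b\in A\setminus\{a\}}\mu_\Q(h,b)\le k+(d-k)=d.
\]
The ``in particular'' statement is then immediate: if $s$ of the $a\in A$ satisfy $\mu_\Q(f,a)\ge s$, summing just over those gives $sN\le\sum_{a\in A}\mu_\Q(f,a)\le d$ where $N=|\{a\in A:\mu_\Q(f,a)\ge s\}|$, so $N\le\lfloor d/s\rfloor$.

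The main obstacle is the divisibility-and-multiplicity transfer in the inductive step --- verifying that dividing out $(X-a)_\Q^{(k)}$ does not decrease the $\Q$-multiplicity at the other points $b$. This is where the product rule and, crucially, the hypothesis $d<[\kappa]_q$ (ensuring $\Q^j\notin\mb{F}_q$ for the relevant $j$, hence $g(b)\ne 0$) do the real work; over a field where $\Q$ had small order this step would genuinely fail, which is the content of the footnote about larger degrees. Everything else is routine bookkeeping with \cref{pro:Q-properties}.
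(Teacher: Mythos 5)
Your proof is correct in spirit but takes a genuinely different route from the paper's. The paper argues in one stroke: by the Taylor expansion, $(X-a)_\Q^{(\mu_\Q(f,a))}$ divides $f$ for each $a\in A$; these polynomials are pairwise coprime because $\Q^t\notin\mb{F}_q$ for $t\in[[\kappa]_q-1]$ (so the roots $\Q^ja$ and $\Q^{j'}a'$ never collide for distinct $a,a'\in\mb{F}_q^\times$); hence their product divides $f$, which immediately yields the degree bound. Your induction on $|A|$, peeling off one factor $(X-a)_\Q^{(k)}$ at a time and using the product rule to transfer multiplicities to the quotient $h$, re-derives this same coprimality fact by hand rather than invoking it abstractly. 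The paper's version is shorter; yours is more self-contained and makes the mechanism of the product rule explicit, which has some pedagogical value.

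One slip worth fixing in your inductive step: using the first form of the product rule, $\DQ^t(gh)(b)=\sum_{u=0}^t\Qbinom{t}{u}\DQ^{t-u}g(\Q^u b)\DQ^u h(b)$, so the term that carries $\DQ^t h(b)$ (the $u=t$ term) has coefficient $g(\Q^t b)$, \emph{not} $g(b)$. You verified $g(b)\ne 0$ but what you actually need is $g(\Q^t b)\ne 0$ for all $t\le\mu_\Q(f,b)-1$. Fortunately the same reasoning applies: $g(\Q^t b)=\prod_{j=0}^{k-1}(\Q^t b-\Q^j a)$, and each factor is nonzero because $\Q^{t-j}\ne a/b$, which follows from $a/b\in\mb{F}_q^\times\setminus\{1\}$ together with $|t-j|\le d-1<[\kappa]_q-1$ and \cref{pro:extension-basic}(b). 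So the argument survives, but the formula $0=g(b)\DQ^t h(b)+\sum_{u<t}\cdots$ as written is an inconsistent mixture of the two forms of the product rule and should be corrected to have $g(\Q^t b)$ in place of $g(b)$.
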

	\begin{proof}
		For any \(a\in\mb{F}_q^\times\), if \(\mu_\Q(f,a)=r\), then by~\cref{pro:Q-properties}(b), we see that \((X-a)_\Q^{(r)}\) divides \(f(X)\).  Further, by~\cref{pro:extension-basic}(b), for any distinct \(a,a'\in\mb{F}_q\) and \(t\in[[\kappa]_q-1]\), the polynomials \((X-a)_\Q^{(t)},(X-a')_\Q^{(t)}\) are coprime.  Therefore, the product polynomial
		\[
		\prod_{\substack{a\in A\\\mu_\Q(f,a)\ge1}}(X-a)_\Q^{(\min\{\mu_\Q(f,a),[\kappa]_q-1\})}
		\]
		divides \(f(X)\).  This completes the proof.
	\end{proof}

	Now, for any nonzero \(f(\mb{X})\in\mb{K}[\mb{X}]\) and \(a\in\mb{K}^m\), define the \tbf{\(\Q\)-multiplicity of \(f(\mb{X})\) at \(a\)} by
	\[
	\mu_\Q(f,a)=\min\big\{k\ge0:\tx{there exists }\gamma\in\mb{N}^m,\,|\gamma|=k\tx{ such that }\DQ^\gamma f(a)\ne0\big\}.
	\]
	Note that the multivariate \(\Q\)-multiplicity is well-defined, since for any nonzero \(f(\mb{X})\in\mb{K}[\mb{X}]\) and \(a\in\mb{K}^m\), we have \(\mu_\Q(f,a)\le\deg(f)\) by~\cref{pro:multi-Q-properties}(c).~\footnote{For the zero polynomial, we will fix the convention that \(\deg(0)\coloneqq-\infty\), and \(\mu_\Q(0,a)\coloneqq-\infty\) for all \(a\in\mb{K}^m\).}  We need a few quick lemmas, which are consequences of the definition of multivariate \(\Q\)-multiplicity.
	\begin{lemma}\label{lem:multi-Q-multiplicity}
		Consider any nonzero \(f(\mb{X})\in\mb{K}[\mb{X}]\).
		\begin{enumerate}[{\normalfont(a)}]
			\item  For any \(a\in\mb{K}^m\), if \(\mu_\Q(f,a)\ge k\), then
			\[
			\mu_\Q(\DQ^\gamma f,a)\ge k-|\gamma|\quad\tx{for all }\gamma\in\mb{N}^m,\,|\gamma|\le k.
			\]
			
			\item  For any \((a_1,\ldots,a_m)\in\mb{K}^m\) and \(\gamma\in\mb{N}^m\), if \(\DQ^\gamma f(a_1,\ldots,a_{m-1},X_m)\ne0\), then
			\[
			\mu_\Q(\DQ^\gamma f,(a_1,\ldots,a_{m-1},a_m))\le\mu_\Q(\DQ^\gamma f(a_1,\ldots,a_{m-1},X_m),a_m).
			\]
		\end{enumerate}
	\end{lemma}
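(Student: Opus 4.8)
The plan is that both parts reduce to unwinding the definition of $\mu_\Q$, once I have two structural facts already available from \cref{sec:multi-Q} and \cref{rem:degree-reducing}: (i) the maps $\DQi{X_1},\ldots,\DQi{X_m}$ commute, so $\DQ^\delta\circ\DQ^\gamma=\DQ^{\delta+\gamma}$ for all $\delta,\gamma\in\mb{N}^m$; and (ii) $\DQi{X_m}$ acts only through the variable $X_m$ (and a quotient by $X_m$ that again yields a polynomial, by \cref{rem:degree-reducing}), hence it commutes with the substitution $X_i\mapsto a_i$ for $i\in[m-1]$, i.e.\ $\big(\DQi{X_m}h\big)(a_1,\ldots,a_{m-1},X_m)=\DQi{X_m}\big[h(a_1,\ldots,a_{m-1},X_m)\big]$ for every $h(\mb{X})\in\mb{K}[\mb{X}]$.

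For part (a), I would start from the observation that the hypothesis $\mu_\Q(f,a)\ge k$ says exactly that $\DQ^\beta f(a)=0$ for every $\beta\in\mb{N}^m$ with $|\beta|<k$. Fixing $\gamma$ with $|\gamma|\le k$ (and noting that if $\DQ^\gamma f=0$ the conclusion holds by the convention for the zero polynomial, so one may assume $\DQ^\gamma f\ne0$), for any $\delta\in\mb{N}^m$ with $|\delta|<k-|\gamma|$ fact (i) gives $\DQ^\delta(\DQ^\gamma f)=\DQ^{\delta+\gamma}f$, and $|\delta+\gamma|=|\delta|+|\gamma|<k$, so $\DQ^\delta(\DQ^\gamma f)(a)=\DQ^{\delta+\gamma}f(a)=0$. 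Since this holds for all such $\delta$, the definition of $\Q$-multiplicity yields $\mu_\Q(\DQ^\gamma f,a)\ge k-|\gamma|$.

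For part (b), I would write $g(\mb{X})=\DQ^\gamma f(\mb{X})$ and set $r=\mu_\Q\big(g(a_1,\ldots,a_{m-1},X_m),a_m\big)$, a well-defined nonnegative integer since $g(a_1,\ldots,a_{m-1},X_m)\ne0$ by hypothesis. By the definition of the univariate $\Q$-multiplicity, $\DQi{X_m}^{r}\big[g(a_1,\ldots,a_{m-1},X_m)\big]$ does not vanish at $X_m=a_m$; applying fact (ii) $r$ times, this is the same as $\big(\DQi{X_m}^{r}g\big)(a_1,\ldots,a_m)\ne0$. Since $\DQ^{\gamma'}=\DQi{X_m}^{r}$ for the multi-index $\gamma'=(0,\ldots,0,r)$ of weight $|\gamma'|=r$, this exhibits a weight-$r$ multi-index $\gamma'$ with $\DQ^{\gamma'}g(a)\ne0$, and the definition of the multivariate $\Q$-multiplicity then forces $\mu_\Q(g,a)\le r$, which is the claim.

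I do not expect any real obstacle: the only points requiring a little care are keeping the conventions for the zero polynomial consistent in part (a) and checking the commutation relations (i) and (ii), both of which are immediate from the way $\DQ^\alpha$ is built as an iterated composition of commuting single-variable operators and from the fact that $\Q$-differentiation preserves polynomiality despite the formal division by a variable.
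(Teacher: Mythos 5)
Your argument is correct and takes essentially the same route as the paper: part (a) is word-for-word the paper's proof (unwinding the definition, using $\DQ^\delta\circ\DQ^\gamma=\DQ^{\delta+\gamma}$ and the triangle inequality on weights), and part (b) is the paper's proof read contrapositively (the paper supposes the left-hand multiplicity is $k$ and shows the right-hand one is $\ge k$ by restricting to $\beta=(0^{m-1},t)$, using the same commutation of $\DQi{X_m}$ with partial substitution that you make explicit as fact (ii)). One small caveat: your parenthetical that ``if $\DQ^\gamma f=0$ the conclusion holds by the convention for the zero polynomial'' is actually backwards under the paper's stated convention $\mu_\Q(0,a)=-\infty$ — in that degenerate case the inequality $\mu_\Q(\DQ^\gamma f,a)\ge k-|\gamma|$ would \emph{fail}, not hold; the paper's own proof also tacitly skips this, and it is harmless because the lemma is only ever invoked (in \cref{thm:multivariate-SZ-grid}) at a $\gamma$ for which $\DQ^\gamma f\ne 0$, but your justification as written is not right.
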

	\begin{proof}
		\begin{enumerate}[(a)]
			\item  Since  \(\mu_\Q(f,a)\ge k\), we have
			\[
			\DQ^\gamma f(a)=0\quad\tx{for all }\gamma\in\mb{N}^m,\,|\gamma|\le k-1.
			\]
			Consider any \(\gamma,\beta\in\mb{N}^m,\,|\gamma|\le k,\,|\beta|\le k-|\gamma|-1\).  So \(|\beta+\gamma|\le k-1\), and therefore \(\DQ^\beta(\DQ^\gamma f)(a)=\DQ^{\beta+\gamma} f(a)=0\).  This means \(\mu_\Q(\DQ^\gamma f,a)\ge k-|\gamma|\).
			
			\item  Suppose \(\mu_\Q(\DQ^\gamma f,(a_1,\ldots,a_{m-1},a_m))=k\).  This means
			\[
			\DQ^{\beta+\gamma}f(a_1,\ldots,a_{m-1},a_m)=0\quad\tx{for all }\beta\in\mb{N}^m,\,|\beta|\le k-1.
			\]
			In particular, restricting our attention to \(\beta\in\mb{N}^m\) of the form \(\beta=(0^{m-1},t)\), we get
			\[
			\DQi{X_m}^t(\DQ^\gamma f)(a_1,\ldots,a_{m-1},a_m)=0\quad\tx{for all }t\le k-1.
			\]
			This means \(\mu_\Q(\DQ^\gamma f(a_1,\ldots,a_{m-1},X_m),a_m)\ge k\).\qedhere
		\end{enumerate}
	\end{proof}
	
	Quite similar to the classical Polynomial Identity Lemma \cite{ore-1922-zeros,erickson-thesis-1974-polynomial-zeros,demillo-lipton-1978-probabilistic-testing,zippel-1979-probabilistic-algorithms,schwartz-1980-PIT} and its extension to multiplicities by~\cite{dvir-kopparty-saraf-sudan-2013-kakeya}, we can obtain a bound on the number of \(\Q\)-roots of a multivariate polynomial.  Again, we will only be interested in counting \(\Q\)-roots over \(\mb{F}_q^\times\).  The following is a multivariate extension of~\cref{pro:univariate-SZ} for counting \(\Q\)-roots in any grid in \((\mb{F}_q^\times)^m\).  ~\footnote{An elegant alternate proof of the Polynomial Identity Lemma was given by~\cite{moshkovitz-2010-alternative-proof-SZ} in the case where the grid is the vector space \(\mb{F}_q^m\).  This was also extended to multiplicities (for degree at most \(q-1\)) by~\cite{srinivasan-sudan-2019-alternative-proof-mult-SZ}.  For further elegant variant proofs of the Polynomial Identity Lemma, see the answers by Per Vognsen and Arnab Bhattacharyya in the CS Theory Stack Exchange post~\url{https://cstheory.stackexchange.com/questions/1772/alternative-proofs-of-schwartz-zippel-lemma}.}
	\begin{theorem}\label{thm:multivariate-SZ-grid}
		For any nonempty set \(A\subseteq\mb{F}_q^\times\), and any nonzero \(f(\mb{X})\in\mb{K}[\mb{X}]\) with \(\deg(f)\le d<[\kappa]_q\),~\footnote{If the degree \(d\ge[\kappa]_q\), then the claim of~\cref{thm:multivariate-SZ-grid} will change to \(\sum_{a\in A^m}\min\{\mu_\Q(f,a),[\kappa]_q-1\}\le d|A|^{m-1}\).  We will not need to consider such large degree in our discussion.} we have
		\[
		\sum_{a\in A^m}\mu_\Q(f,a)\le d|A|^{m-1}.
		\]
		In particular, for any \(s\ge1\), we have \(|\{a\in A^m:\mu_\Q(f,a)\ge s\}|\le\lfloor d\,|A|^{m-1}/s\rfloor\).
	\end{theorem}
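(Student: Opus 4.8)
The plan is to prove the displayed inequality $\sum_{a\in A^m}\mu_\Q(f,a)\le d|A|^{m-1}$ by induction on $m$, peeling off the last variable $X_m$, with \cref{pro:univariate-SZ} serving as the base case $m=1$. For the inductive step write $e=\deg_{X_m}(f)$, $\mb{X}'=(X_1,\ldots,X_{m-1})$, and $f=\sum_{j=0}^{e}f_j(\mb{X}')X_m^j$ with $f_e\ne 0$; note $\deg(f_e)\le\deg(f)-e<[\kappa]_q$, so the inductive hypothesis is applicable to $f_e$. For $a'\in A^{m-1}$ I abbreviate the ``line sum'' $L(a')\coloneqq\sum_{a_m\in A}\mu_\Q(f,(a',a_m))$, so that the total is $\sum_{a'\in A^{m-1}}L(a')$.

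The heart of the argument is the per-line bound
\[
L(a')\ \le\ e\ +\ |A|\cdot k_{a'},\qquad\text{where }k_{a'}\coloneqq\min\{\mu_\Q(f_j,a')\,:\,0\le j\le e,\ f_j\ne 0\}.
\]
To establish it, fix an index $j_0$ attaining the minimum (so $f_{j_0}\ne 0$ and $\mu_\Q(f_{j_0},a')=k_{a'}$), and pick $\beta'\in\mb{N}^{m-1}$ with $|\beta'|=k_{a'}$ and $\DQ^{\beta'}f_{j_0}(a')\ne 0$. Since $\DQi{X_1},\ldots,\DQi{X_{m-1}}$ treat $X_m$ as a scalar, $\DQ^{(\beta',0)}f=\sum_{j}(\DQ^{\beta'}f_j)(\mb{X}')X_m^j$, and restricting $\mb{X}'=a'$ gives a \emph{nonzero} univariate polynomial in $X_m$ (the coefficient of $X_m^{j_0}$ is $\DQ^{\beta'}f_{j_0}(a')\ne 0$) of $X_m$-degree at most $e$. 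Hence \cref{lem:multi-Q-multiplicity}(b) (with $\gamma=(\beta',0)$) followed by \cref{pro:univariate-SZ} gives $\sum_{a_m\in A}\mu_\Q(\DQ^{(\beta',0)}f,(a',a_m))\le e$, while \cref{lem:multi-Q-multiplicity}(a) gives $\mu_\Q(f,(a',a_m))\le k_{a'}+\mu_\Q(\DQ^{(\beta',0)}f,(a',a_m))$ for every $a_m$ (trivially so when $\mu_\Q(f,(a',a_m))<k_{a'}$); summing over $a_m\in A$ yields the per-line bound. Finally, since $f_e\ne 0$ we have $k_{a'}\le\mu_\Q(f_e,a')$ for every $a'$, so summing the per-line bound over $a'$ and invoking the inductive hypothesis for $f_e$,
\[
\sum_{a\in A^m}\mu_\Q(f,a)\ \le\ e|A|^{m-1}+|A|\sum_{a'\in A^{m-1}}\mu_\Q(f_e,a')\ \le\ e|A|^{m-1}+|A|\cdot(\deg(f)-e)|A|^{m-2}\ =\ \deg(f)\,|A|^{m-1}\ \le\ d|A|^{m-1}.
\]
The ``in particular'' statement is immediate, since $s\cdot|\{a\in A^m:\mu_\Q(f,a)\ge s\}|\le\sum_{a\in A^m}\mu_\Q(f,a)\le d|A|^{m-1}$.

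The step that needs the most care is the per-line bound, specifically the reasoning behind the choice of $\beta'$. The naive move is to imitate the univariate proof by differentiating $e$ times in the $X_m$-direction, turning $f$ into $[e]_\Q!\,f_e(\mb{X}')$; but the resulting estimate $\mu_\Q(f,(a',a_m))\le e+\mu_\Q(f_e,a')$ does not involve $a_m$ at all, and on the ``bad'' lines $\{a'\}\times A$ (those along which $f(a',X_m)\equiv 0$, i.e. $k_{a'}\ge 1$) it overcounts by a spurious factor of $|A|$, which breaks the induction. The fix is to differentiate $k_{a'}$ times \emph{within} the $\mb{X}'$-directions instead, just enough to ``wake up'' the restriction of $f$ to the line $\{a'\}\times\mb{K}$ into a nonzero univariate polynomial of $X_m$-degree still at most $e$, after which one genuinely exploits the line and only pays $e$ there — exactly as in the ``good'' lines, where $k_{a'}=0$ and this is nothing but the straightforward restriction. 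Beyond this, the remaining obligations are routine bookkeeping: that $\DQ^{(\beta',0)}f\ne 0$ so its $\Q$-multiplicity is defined, and that every univariate polynomial encountered has degree $<[\kappa]_q$ so that \cref{pro:univariate-SZ} and the inductive hypothesis genuinely apply; and for the $\deg(f)\ge[\kappa]_q$ variant noted in the footnote, one replaces \cref{pro:univariate-SZ} by its capped version throughout, which goes through verbatim with $k_{a'}$ and each univariate multiplicity truncated at $[\kappa]_q-1$.
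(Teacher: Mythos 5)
Your proof is correct and follows essentially the same inductive argument as the paper: peel off the last variable, differentiate in the remaining variables to ``wake up'' the restriction to the line $\{a'\}\times\mb{K}$ into a nonzero univariate polynomial of $X_m$-degree at most $e$, apply \cref{lem:multi-Q-multiplicity} and \cref{pro:univariate-SZ} on the line, and close the induction using the inductive hypothesis applied to the leading $X_m$-coefficient. The only cosmetic difference is that you introduce the intermediate quantity $k_{a'}=\min_j\mu_\Q(f_j,a')$ before bounding it by $\mu_\Q(f_e,a')$, whereas the paper works directly with $\mu_\Q(f_e,a')$; this changes nothing in the end.
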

	\begin{proof}
		We will prove by induction on \(m\).  The base case \(m=1\) is true by~\cref{pro:univariate-SZ}.  Now suppose the assertion is true for some \(m\ge1\), and now consider indeterminates \(\mb{X}=(X_1,\ldots,X_m,X_{m+1})\).  Without loss of generality, assume \(\deg(f)=d\), and write
		\[
		f(X_1,\ldots,X_m,X_{m+1})=\sum_{t=0}^\ell f_t(X_1,\ldots,X_m)\cdot X_{m+1}^t,\quad\tx{where }f_\ell(X_1,\ldots,X_m)\ne0,\,\deg(f_\ell)=d-\ell.
		\]
		For any \(a_1,\ldots,a_m\in A\), denote \(s_{a_1,\ldots,a_m}=\mu_\Q(f_\ell,(a_1,\ldots,a_m))\).  So by induction hypothesis, we have
		\begin{align}
			\sum_{(a_1,\ldots,a_m)\in A^m}s_{a_1,\ldots,a_m}\le(d-\ell)|A|^{m-1}.\label{ineq:SZ-induction}
		\end{align}
		Now fix \(a_1,\ldots,a_m\in A\), and let \(\gamma=(\gamma_1,\ldots,\gamma_m)\in\mb{N}^m\) such that \(|\gamma|=s_{a_1,\ldots,a_m}\) and\\\(\DQ^\gamma f_\ell(a_1,\ldots,a_m)\ne0\).  This means
		\begin{itemize}
			\item  \(\DQ^\gamma f_\ell(X_1,\ldots,X_m)\ne0\), and so
			\[
			\DQ^{(\gamma,0)}f(X_1,\ldots,X_m,X_{m+1})=\sum_{t=0}^\ell\DQ^\gamma f_t(X_1,\ldots,X_m)\cdot X_{m+1}^t\ne0.
			\]
			\item  \(g^{(\gamma)}(X_{m+1})\coloneqq\DQ^{(\gamma,0)}f(a_1,\ldots,a_m,X_{m+1})\ne0\), and \(\deg(g^{(\gamma)})=\ell\).
		\end{itemize}
		\noindent Then, for any \(a_{m+1}\in A\), we get
		\begin{align}
			\mu_\Q(f,(a_1,\ldots,a_m,a_{m+1}))&\le|(\gamma,0)|+\mu_\Q(\DQ^{(\gamma,0)} f,(a_1,\ldots,a_m,a_{m+1}))\tag*{by~\cref{lem:multi-Q-multiplicity}(a)}\\
			&\le s_{a_1,\ldots,a_m}+\mu_\Q(g^{(\gamma)},a_{m+1})\tag*{by~\cref{lem:multi-Q-multiplicity}(b)}
		\end{align}
		So by~\cref{pro:univariate-SZ},
		\[
		\sum_{a_{m+1}\in A}\mu_\Q(f,(a_1,\ldots,a_m,a_{m+1}))\le s_{a_1,\ldots,a_m}|A|+\ell.
		\]
		Then (\ref{ineq:SZ-induction}) implies
		\[
		\sum_{(a_1,\ldots,a_m,a_{m+1})\in A^{m+1}}\mu_\Q(f,(a_1,\ldots,a_m,a_{m+1}))\le(d-\ell)|A|^m+\ell|A|^m=d|A|^m,
		\]
		which completes the induction.
	\end{proof}
	
	\subsection{\(\Q\)-multiplicity codes and folded Reed-Muller codes}\label{sec:folded-RM}
	
	Consider \(m,s\ge1\), and any nonempty \(A\subseteq\mb{F}_q^\times\).  For any \(u\in\big(\mb{K}^{\binom{m+s-1}{s-1}}\big)^{|A|^m}\), we will consider the obvious indexing \(u=\begin{bmatrix}
		u_a\coloneqq\begin{bmatrix}
			u_a^{(\gamma)}
		\end{bmatrix}_{|\gamma|<s}
	\end{bmatrix}_{a\in A^m}\).  We will now consider the \tbf{blockwise Hamming distance} \(\msf{d}=\msf{d}^m_{s,A}\) on \(\mb{K}^{\binom{m+s-1}{s-1}\times\binom{m+s-1}{s-1}}\) defined by
	\[
	\msf{d}(u,v)=\frac{|\{a\in A^m:u_a\ne v_a\}|}{|A|^m}.
	\]
	For any \(k\in[s|A|]\), we define the \tbf{\(m\)-variate multiplicity-\(s\) degree-\(k\) \(\Q\)-multiplicity code} by
	\[
	\Qmult_{m,s}(A;k)=\big\{[\DQ\mid f]\in \DQ V_{m,s}(A):\deg(f)<k\big\}.
	\]
	We also denote \(\Qmult_s(A;k)=\Qmult_{1,s}(A;k)\).
	\begin{proposition}
		For any \(k\in[s|A|]\), the code \(\Qmult_{m,s}(A;k)\) has
		\[
		\tx{rate equal to}\n\frac{\binom{m+k-1}{k-1}}{\binom{m+s-1}{s-1}|A|^m},\quad\tx{and distance at least}\n1-\frac{k-1}{s|A|}.
		\]
	\end{proposition}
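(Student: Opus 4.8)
The plan is to establish the two quantities separately, both as easy consequences of the monomial basis in~\cref{cor:monomial-basis} and the $\Q$-root count in~\cref{thm:multivariate-SZ-grid}.

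For the rate, recall that the code $\Qmult_{m,s}(A;k)$ is the image of the $\mb{K}$-linear map $f \mapsto [\DQ \mid f]$ restricted to polynomials of degree $<k$. Since $\DQ V_{m,s}(A) \simeq V_{m,s}(A)$ (the change of basis in~\cref{pro:multi-Q-basis-change} is invertible), the dimension of the code equals the dimension of $\{[f] : \deg(f) < k\}$. First I would argue that this map is injective on $\{f : \deg(f) < k\}$: if $[\DQ \mid f] = 0$ then $f$ vanishes with $\Q$-multiplicity $\ge s$ at every point of $A^m$, so by~\cref{thm:multivariate-SZ-grid} we'd need $\deg(f) \ge s|A| \ge k$ (here using $k \le s|A|$), forcing $f = 0$. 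Hence the code has dimension exactly $\dim\{f : \deg(f) < k\} = \binom{m+k-1}{k-1}$ (the number of monomials $\mb{X}^\alpha$ with $|\alpha| < k$). The ambient space is $\big(\mb{K}^{\binom{m+s-1}{s-1}}\big)^{|A|^m}$, which has $\mb{K}$-dimension $\binom{m+s-1}{s-1}|A|^m$, so the rate is the stated ratio.

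For the distance, I would use linearity: the minimum distance equals the minimum blockwise Hamming weight of a nonzero codeword $[\DQ \mid f]$, $\deg(f) < k$, $f \ne 0$. The weight counts the number of $a \in A^m$ with $[\DQ \mid f]_a \ne 0$, i.e.\ with $\big(\DQ^\gamma f(a)\big)_{|\gamma|<s} \ne 0$; equivalently, the weight is $|A|^m$ minus the number of $a \in A^m$ with $\mu_\Q(f,a) \ge s$. By~\cref{thm:multivariate-SZ-grid} applied with $d = \deg(f) \le k-1 < [\kappa]_q$ (valid by~\cref{rem:asymptotic}), we have $|\{a \in A^m : \mu_\Q(f,a) \ge s\}| \le \lfloor (k-1)|A|^{m-1}/s \rfloor \le (k-1)|A|^{m-1}/s$. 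Therefore the relative weight is at least $1 - \frac{(k-1)|A|^{m-1}/s}{|A|^m} = 1 - \frac{k-1}{s|A|}$, which is the claimed distance bound.

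The main obstacle, such as it is, will be bookkeeping the identification of the ambient space with $\big(\mb{K}^{\binom{m+s-1}{s-1}}\big)^{|A|^m}$ and making sure the injectivity argument cleanly handles the edge case $k = s|A|$ (where $\deg(f) = k-1 = s|A|-1 < s|A|$ still gives a contradiction). Everything else is a direct application of the already-established $\Q$-analogue of the Schwartz--Zippel / polynomial identity lemma.
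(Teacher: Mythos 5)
Your proof is correct and follows essentially the same route as the paper, which simply cites the monomial basis result (\cref{cor:monomial-basis}) and the $\Q$-root-counting bound (\cref{thm:multivariate-SZ-grid}). The only minor deviation is that for the rate you derive the dimension count directly by establishing injectivity of the encoding map via \cref{thm:multivariate-SZ-grid} rather than invoking \cref{cor:monomial-basis}; this is a harmless and arguably more self-contained variant of the same argument.
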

	\begin{proof}
		Follows immediately from~\cref{cor:monomial-basis} and~\cref{thm:multivariate-SZ-grid}.
	\end{proof}

	Further, for any \(k\in[s|A|]\), we also define the \tbf{\(s\)-folded degree \(k\) folded Reed-Muller (FRM) code} by
	\[
	\FRM_{m,s}(A;k)=\Big\{[\Q\mid f]\coloneqq\begin{bmatrix}
		\begin{bmatrix}
			f(\Q^\gamma a)
		\end{bmatrix}_{|\gamma|<s}
	\end{bmatrix}_{a\in A^m}:f(\mb{X})\in\mb{K}[\mb{X}],\,\deg(f)<k\Big\}.
	\]
	Then, we clearly have the relation
	\[
	\Qmult_{m,s}(A;k)=\diag(U(a):a\in A^m)\cdot\FRM_{m,s}(A;k),
	\]
	where the basis change matrices \(U(a),\,a\in A^m\) are defined by~\cref{pro:multi-Q-basis-change}.  Also note that in the case \(m=1\), \(\FRM_{1,s}(A;k)=\FRS_s(A;k)\) the usual FRS code.

	\section{List decoding of \(\Q\)-multiplicity codes}\label{sec:list-decoding-Q}
	
	We will now move to our main result on list decoding multivariate \(\Q\)-multiplicity codes.  As a warm-up, we will first present the algorithm for the univariate case, which gives an alternate algorithm to list decode the FRS code (after a basis change); this also highlights the similarity with the list decoding algorithm~\cite{guruswami-wang-2013-FRS} for classical univariate multiplicity codes.
	
	\subsection{List decoding of univariate \(\Q\)-multiplicity codes}\label{sec:list-decoding-Q-mult}
	
	There is an obvious efficient list decoding algorithm to list decode the univariate \(\Q\)-multiplicity codes up to capacity, for large folding, which proceeds as follows.  Consider any \(\epsilon>0\), and suppose we have the code \(\Qmult_s(A;k)\) with rate \(R=k/s|A|\), and \(s\sim1/\epsilon^2\).  For any received word \(w\in\mb{K}^{s|A|}\) for the code \(\Qmult_s(A;k)\), consider the tranformed word \(\wt{w}\coloneqq\diag(U(a):a\in A)\cdot w\in\mb{K}^{s|A|}\), which is now a received word for the code \(\FRS_s(A;k)\).  Run the linear algebraic list decoding algorithm~\cite{guruswami-wang-2013-FRS} for the code \(\FRS_s(A;k)\) to get an output solution space of dimension at most \(O(1/\epsilon)\).
	
%	The same strategy can be easily extended to list recovery, as described in~\cite{kopparty-ron-zewi-saraf-wootters-2023-list-decoding,tamo-2023-tighter-list-size}), to get an output list size of \((\ell/\epsilon)^{O((1+\log\ell)/\epsilon)}\), where \(\ell\) is the input list size.
	
	Let us now see that the classical univariate multiplicity code list decoder~\cite{guruswami-wang-2013-FRS} can also be suitably adapted to list decode \(\Qmult_s(A;k)\).
%	We will, in fact, perform list recovery, and the main theorem is as follows.
	\begin{theorem}\label{thm:univariate-LR}
		Consider any \(R\in(0,1),\,\epsilon\in(0,1-R)\).  For any \(A\subseteq\mb{F}_q^\times,\,|A|=n\), and for the choices \(s=\lceil1/\epsilon^2\rceil\) and \(k=\lceil Rsn\rceil\), the code \(\Qmult_s(A;k)\) is efficiently list decodable up to radius \(1-R-\epsilon\) with output list contained in a \(\mb{K}\)-affine space of dimension at most \(O(1/\epsilon)\).
	\end{theorem}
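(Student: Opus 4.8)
The plan is to mimic the linear-algebraic list decoding algorithm of~\cite{guruswami-wang-2013-FRS} for classical univariate multiplicity codes, replacing classical derivatives by the $\Q$-derivative $\DQ$ throughout, and using the $\Q$-analogues of the Taylor expansion (\cref{pro:Q-properties}(c)) and the Polynomial Identity Lemma (\cref{pro:univariate-SZ}) in place of their classical counterparts. Since no Euler-type identity is needed in the univariate setting, this adaptation is essentially mechanical, and the only place where the field characteristic enters in the classical argument --- the Taylor expansion --- is now characteristic-insensitive because $\ord(\Q)=q^3-1 \geq k$.

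First, the interpolation step. Given a received word $w = (w_a)_{a \in A}$ with $w_a = [w_a^{(\gamma)}]_{\gamma < s} \in \mb{K}^s$, I would set a parameter $r \le s$ (to be optimized at the end, with $r = \Theta(1/\epsilon)$) and look for a nonzero polynomial $Q(X,Y_0,\ldots,Y_{r-1}) = P_0(X) + P_1(X)Y_0 + \cdots$ — more precisely, using the $\mbf{Z}$-free univariate instance of the template in the introduction, a polynomial $Q(X, Y_0, \ldots, Y_{r-1}) = \wt{Q}(X) + \sum_{j=0}^{r-1} Q_j(X) Y_j$ with $\deg \wt{Q} < (s-r+1)n$ and $\deg Q_j < (s-r+1)n - (k-1)$ for each $j$, such that for the formal substitution $Q^{[f]}(X) \coloneqq \wt{Q}(X) + \sum_{j} Q_j(X) \DQ^j f(X)$ we impose the vanishing conditions $\DQ^i Q^{[f]}$ evaluated against $(w_a^{(0)}, \ldots, w_a^{(i+ \cdot)})$ vanishes at each $a \in A$ for $i < s-r$. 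As in~\cite{guruswami-wang-2013-FRS}, this is a homogeneous linear system; a nonzero solution exists provided the number of unknowns exceeds the number of constraints, i.e. $(r+1)(s-r+1)n - r(k-1) > (s-r)n$, which holds for the stated parameters.

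Second, the root-finding step. For any codeword-candidate $f$ with $\deg f < k$ that agrees with $w$ on more than a $(R+\epsilon)$-fraction of the $A$-positions, I claim $Q^{[f]}(X) = 0$ identically. Indeed, at each agreement point $a$ the vanishing conditions force $\DQ^i Q^{[f]}(a) = 0$ for all $i < s-r$, i.e. $\mu_\Q(Q^{[f]}, a) \ge s-r$; but $\deg Q^{[f]} < (s-r+1)n$, so \cref{pro:univariate-SZ} forces $Q^{[f]} = 0$ once the number of agreements exceeds $(s-r+1)n / (s-r)$, which is $\le (R+\epsilon)n$ for the chosen $s, r$. Here I must check $\deg Q^{[f]} \le \deg \wt{Q} + \max_j(\deg Q_j + (k-1)) < [\kappa]_q$, which holds since all degrees are $O(sn) = O(sq)$ and $[\kappa]_q = 1 + q + q^2$. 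Then the identity $\wt{Q}(X) + \sum_j Q_j(X) \DQ^j f(X) = 0$ is, by linearity of $\DQ$ (\cref{pro:Q-properties}(a)), an $\mb{K}$-linear ordinary "$\Q$-differential equation" in $f$; writing $f(X) = \sum_{\ell < k} f_\ell X^\ell$ and using $\DQ^j(X^\ell) = [\ell]_\Q [\ell-1]_\Q \cdots [\ell-j+1]_\Q X^{\ell - j}$ (\cref{rem:degree-reducing}) together with the fact that the relevant leading coefficient $Q_{j_0}(X)$ with $j_0$ minimal does not vanish, one recovers the coefficients $f_\ell$ inductively from low degree to high degree: the unknowns $f_0, \ldots, f_{r-1}$ are free, and each subsequent $f_\ell$ is an affine function of the earlier ones. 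Hence the solution set of valid $f$ lies in an $\mb{K}$-affine subspace of dimension $\le r = O(1/\epsilon)$, and one returns a basis for it.

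The main obstacle — really the only subtlety — is verifying that the inductive linear recurrence for the coefficients $f_\ell$ is genuinely solvable, i.e. that at each stage the coefficient of the "new" unknown is a nonzero element of $\mb{K}$. In the classical case this uses that the leading coefficient of $f^{(j_0)}$ is a nonzero integer multiple of a coefficient of $f$, which can vanish in small characteristic; in the $\Q$-setting the corresponding quantity is a product $[\ell]_\Q [\ell-1]_\Q \cdots [\ell - j_0 + 1]_\Q$ times the leading coefficient of $Q_{j_0}$, and by \cref{pro:univariate-SZ}'s underlying observation ($[t]_\Q \ne 0$ for $1 \le t \le q^\kappa - 2$) this product is nonzero as long as $\ell < [\kappa]_q$, which holds throughout since $\ell < k \le sn \le sq < [\kappa]_q$ by \cref{rem:asymptotic}. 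I would also need to confirm the degree bookkeeping makes the interpolation system underdetermined and the Polynomial Identity Lemma applicable with the claimed radius $1 - R - \epsilon$; these are the same arithmetic inequalities as in~\cite{guruswami-wang-2013-FRS} with $s = \lceil 1/\epsilon^2 \rceil$, $r = O(1/\epsilon)$, and I would just transcribe them. Finally, everything is polynomial-time: interpolation is Gaussian elimination over $\mb{K}$, and root-finding is the linear recurrence above.
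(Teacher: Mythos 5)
Your overall plan matches the paper's: mimic the linear-algebraic algorithm of Guruswami--Wang, replacing classical derivatives by $\DQ$, using the $\Q$-Taylor expansion and the $\Q$-Polynomial Identity Lemma (\cref{pro:univariate-SZ}), and noting that the factorial-type factors $[\ell]_\Q[\ell-1]_\Q\cdots[\ell-j+1]_\Q$ are nonzero because $\ell<[\kappa]_q$. That observation is indeed the point where the $\Q$-setting removes the characteristic restriction, and you have it. The interpolation step is also on track (your constraint/unknown counts differ from the paper's by inconsequential off-by-ones, and your bound of $r$ free coefficients versus the paper's $r-1$ is likewise immaterial to the $O(1/\epsilon)$ claim).

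The gap is in the root-finding step. You propose to expand $\wt{Q}(X)+\sum_j Q_j(X)\DQ^j f(X)=0$ in the monomial basis $\{X^\ell\}$ and recover $f_\ell$ from low to high, and you justify the pivot by saying ``the relevant leading coefficient $Q_{j_0}(X)$ \ldots does not vanish.'' But in the monomial basis the pivot for recovering the next unknown coefficient is the \emph{constant term} $Q_{r-1}(0)$ (taking WLOG $r-1$ to be the largest index with $Q_{r-1}\neq0$), not merely ``$Q_{r-1}\neq0$ as a polynomial,'' and there is no reason for $Q_{r-1}(0)\neq0$. In the classical Guruswami--Wang argument this is fixed by a translation $X\mapsto X+\tau$, but translation does not interact well with $\DQ$ (scaling does; see \cref{pro:Q-properties}(b)), so that fix is unavailable here. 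This is precisely the technical content of the paper's \cref{pro:LR-solving}: it passes to the $\Q$-Taylor basis $\{(X-\Q^b)_\Q^{(h)}/[h]_\Q!\}$ around a point $\Q^b$ chosen so that $P_{r-1}(\Q^{h+b})\neq0$ for \emph{all} $h\in[0,d+k-1]$ (such a $b$ exists because $\deg(P_{r-1})\le d<[3]_q$), and then uses the $\Q$-product rule (\cref{pro:Q-properties}(d)) so that the pivot in the recurrence becomes $(P_{r-1})_{0,h+b}=P_{r-1}(\Q^{h+b})$, which is nonzero by construction. Without this basis change --- the $\Q$-analogue of the translation trick --- the inductive coefficient recovery you describe is not guaranteed to have a nonzero pivot at each step, so the proof as written does not go through.
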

	
	Consider additional indeterminates \(\mb{Y}=(Y_0,\ldots,Y_{s-1})\), and the \(\mb{K}\)-linear subspace of the polynomial ring \(\mb{K}[X,\mb{Y}]\) defined by
	\[
	\msf{L}_s(X,\mb{Y})=\big\{\wt{P}(X)+P_0(X)Y_0+\cdots+P_{s-1}(X)Y_{s-1}:\wt{P}(X),P_0(X),\ldots,P_{s-1}(X)\in\mb{K}[X]\big\}.
	\]
	Also, denote \(Y_s=0\).  Define a \(\mb{K}\)-linear operator \(\Delta:\msf{L}_s(X,\mb{Y})\to\msf{L}_s(X,\mb{Y})\) by
	\[
	\Delta\bigg(\wt{P}(X)+\sum_{j=0}^{s-1}P_j(X)Y_j\bigg)=\DQ\wt{P}(X)+\sum_{j=0}^{s-1}\big(\DQ P_j(X)Y_j+P_j(\Q X)Y_{j+1}\big).
	\]
	We will be interested in iterated applications of \(\Delta\), and therefore, denote \(\Delta^0=\mrm{Id}\) and \(\Delta^{r+1}=\Delta\circ\Delta^r\) for all \(r\ge0\).  For any \(P(X,\mb{Y})\in\msf{L}_s(X,\mb{Y})\) and \(f(X)\in\mb{K}[X]\), denote \(P^{[f]}(X)=P(X,f(X),\DQ f(X),\ldots,\DQ^{s-1}f(X))\).  The following is immediate by the definition of \(\Delta\).
	\begin{lemma}\label{lem:iterated-Delta}
		Let \(P(X,\mb{Y})\in\msf{L}_s(X,\mb{Y})\).
		\begin{enumerate}[{\normalfont(a)}]
			\item  If \(j\in[0,s-1]\) such that \(\deg_{Y_{j'}}(P)=0\) for all \(j'\ge j\), then \(\deg_{Y_{j'}}(\Delta P)=0\) for all \(j'\ge j+1\).
			\item  \((\Delta P)^{[f]}(X)=\DQ(P^{[f]})(X)\), for all \(f(X)\in\mb{K}[X]\).
		\end{enumerate}
	\end{lemma}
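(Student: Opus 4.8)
The plan is a straightforward unwinding of the three definitions involved — the space $\msf{L}_s(X,\mb{Y})$, the operator $\Delta$, and the projection $P\mapsto P^{[f]}$ — using only linearity and the first-order product rule for $\DQ$ from \cref{pro:Q-properties}. Write $P(X,\mb{Y})=\wt{P}(X)+\sum_{j=0}^{s-1}P_j(X)Y_j$, so that by definition $\Delta P=\DQ\wt{P}(X)+\sum_{j=0}^{s-1}\big(\DQ P_j(X)Y_j+P_j(\Q X)Y_{j+1}\big)$ (with $Y_s=0$) and $P^{[f]}(X)=\wt{P}(X)+\sum_{j=0}^{s-1}P_j(X)\DQ^jf(X)$.

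For (a): for $j'\ge j+1$ the coefficient of $Y_{j'}$ in $\Delta P$ receives contributions only from the summand $\DQ P_{j'}(X)Y_{j'}$ and from the summand $P_{j'-1}(\Q X)Y_{j'}$ coming from the index $j'-1$ (the cases $j'\ge s$ being automatic since $\Delta P\in\msf{L}_s(X,\mb{Y})$). The hypothesis $\deg_{Y_{j''}}(P)=0$ for all $j''\ge j$ says exactly that $P_{j''}=0$ for $j''\in[j,s-1]$; since $j'\ge j$ and $j'-1\ge j$, both $P_{j'}$ and $P_{j'-1}$ vanish, so the coefficient of $Y_{j'}$ in $\Delta P$ is zero. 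Iterating, if $\deg_{Y_{j'}}(P)=0$ for $j'\ge r$ then $\deg_{Y_{j'}}(\Delta^tP)=0$ for $j'\ge r+t$, which is the form actually invoked in the decoding argument.

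For (b): by $\mb{K}$-linearity of $\DQ$ (\cref{pro:Q-properties}(a)) it suffices to differentiate $P^{[f]}$ term by term and compute $\DQ\big(P_j(X)\DQ^jf(X)\big)$ for each $j$. The $k=1$ instance of the product rule (\cref{pro:Q-properties}(d)), i.e. $\DQ(gh)(X)=\DQ g(X)h(X)+g(\Q X)\DQ h(X)$ with $g=P_j$ and $h=\DQ^jf$, gives $\DQ\big(P_j(X)\DQ^jf(X)\big)=\DQ P_j(X)\DQ^jf(X)+P_j(\Q X)\DQ^{j+1}f(X)$; summing over $j$ and comparing with $(\Delta P)^{[f]}(X)$, obtained by substituting $Y_j\mapsto\DQ^jf(X)$ into $\Delta P$, the two expressions agree term by term. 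The single delicate point — and the only thing resembling an obstacle in this bookkeeping lemma — is the top index $j=s-1$: the product rule produces a term $P_{s-1}(\Q X)\DQ^sf(X)$ in $\DQ(P^{[f]})$, whereas the matching term $P_{s-1}(\Q X)Y_s$ of $(\Delta P)^{[f]}$ has been set to $0$ by the convention $Y_s=0$. So one should record that the identity holds verbatim as long as $\deg_{Y_{s-1}}(P)=0$, which by (a) is exactly the regime arising in the algorithm (the interpolant uses only $Y_0,\ldots,Y_{r-1}$ with $r<s$, and $\Delta$ is iterated fewer than $s-r$ times, so the coefficient of $Y_{s-1}$ never becomes nonzero).
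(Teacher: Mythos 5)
Your argument follows the paper's proof essentially line by line: (a) is the same rearrangement of the sum $\sum_j\big(\DQ P_j(X)Y_j+P_j(\Q X)Y_{j+1}\big)$ by the index of $Y_{j'}$, and (b) is the same term-by-term application of the order-one product rule $\DQ(gh)=\DQ g\cdot h+g(\Q\,\cdot)\cdot\DQ h$ to $P_j\cdot\DQ^j f$.

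One remark is worth keeping: your caveat about the top index $j=s-1$ is a genuine catch, not paranoia. With the convention $Y_s=0$ built into $\msf{L}_s(X,\mb{Y})$, the operator $\Delta$ drops the formal term $P_{s-1}(\Q X)Y_s$, so a direct expansion gives
\[
\DQ\big(P^{[f]}\big)(X)-(\Delta P)^{[f]}(X)=P_{s-1}(\Q X)\,\DQ^{s}f(X),
\]
which does not vanish for arbitrary $P\in\msf{L}_s(X,\mb{Y})$ and $f\in\mb{K}[X]$. The paper's displayed computation for (b) writes the $j=s-1$ summand as $\DQ P_{s-1}(X)\DQ^{s-1}f(X)+P_{s-1}(\Q X)\DQ^{s}f(X)$, silently treating the $Y_s$-substitution as if it produced $\DQ^s f$; strictly speaking the identity in (b) needs the side condition $P_{s-1}=0$. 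You are also right that this is harmless for~\cref{pro:LR-interpolation}: the interpolant has $P_r=\cdots=P_{s-1}=0$, and by part (a) the iterate $\Delta^{j}P$ still has zero $Y_{s-1}$-coefficient for every $j\le s-r-1$, which is exactly the range over which (b) needs to be invoked to pass from $\Delta^{j}P$ to $\Delta^{j+1}P$. So your proof is correct and, on this boundary point, slightly more careful than the paper's.
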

	\begin{proof}
		Let \(P(X,\mb{Y})=\wt{P}(X)+P_0(X)Y_0+\cdots+P_{s-1}(X)Y_{s-1}\in\msf{L}_s(X,\mb{Y})\).
		\begin{enumerate}[(a)]
			\item  If \(j\in[0,s-1]\) such that \(\deg_{Y_j'}(P)=0\) for all \(j'\ge j\), this means \(P_j(X)=\cdots=P_{s-1}(X)=0\).  Now we have
			\begin{align*}
			(\Delta P)(X,\mb{Y})&=\DQ\wt{P}(X)+\sum_{j'=0}^{s-1}\big(\DQ P_{j'}(X)Y_j+P_{j'}(\Q X)Y_{j'+1}\big)\\
			&=\DQ\wt{P}(X)+\sum_{j'=0}^{s-1}\big(P_{j'-1}(\Q X)+\DQ P_{j'}(X)\big)Y_{j'}.
			\end{align*}
			So, \(P_{j'-1}(\Q X)+\DQ P_{j'}(X)=0\) for all \(j'\ge j+1\).  This means \(\deg_{Y_{j'}}(\Delta P)=0\) for all \(j'\ge j+1\).
			
			\item  We have
			\begin{align}
			(\Delta P)^{[f]}(X)&=\DQ\wt{P}(X)+\sum_{j=0}^{s-1}\big(\DQ P_j(X)\DQ^j f(X)+P_j(\Q X)\DQ^{j+1}f(X)\big)\notag\\
			&=\DQ\wt{P}(X)+\sum_{j=0}^{s-1}\DQ\big(P_j\cdot\DQ^jf\big)(X)\tag*{by~\cref{pro:Q-properties}(c)}\\
			&=\DQ P^{[f]}(X).\notag\qedhere
			\end{align}
		\end{enumerate}
	\end{proof}
	
	It is worth noting that a close variant of the operator \(\Delta\) appears in~\cite[Definition 8.1]{goyal-kumar-harsha-shankar-2023-fast} in the context of nearly linear-time list decoding of FRS codes, without the interpretation in terms of \(\Q\)-derivatives.  So it is reasonable to surmise that a nearly linear-time implementation of the list decoding algorithm \`a la~\cite{goyal-kumar-harsha-shankar-2023-fast} is possible for the univariate \(\Q\)-multiplicity codes.  In order to keep our main presentation short, we limit ourselves here to adapting the more conventional polynomial-time algorithm of~\cite{guruswami-wang-2013-FRS}.
	
	The interpolation step of the list recovery algorithm, which will be used multiple times in this discussion, can be captured as follows.
	\begin{proposition}\label{pro:LR-interpolation}
		Consider any \(A=\{\alpha_1,\ldots,\alpha_n\}\subseteq\mb{F}_q^\times\), parameters \(s\ge r\ge1\), and \(k\in[s|A|]\).  Define
		\[
		d=\left\lceil\frac{n(s-r+1)-(r+k)+1}{r+1}\right\rceil.
		\]
		Then for any \(w=(w_1,\ldots,w_n)\in(\mb{K}^s)^n\), there exists a nonzero polynomial \(P(X,\mb{Y})=\wt{P}(X)+\sum_{j=0}^{r-1}P_j(X)Y_j\in\msf{L}_s(X,\mb{Y})\) with
		\[
		\deg(\wt{P})\le d+k-1,\quad\tx{and}\quad\deg(P_j)\le d\quad\tx{for all }j\in[0,r-1],
		\]
		such that for any \(f(X)\in\mb{K}[X],\,\deg(f)<k\),
		\[
		\tx{if}\quad\msf{d}^{(s)}(f,w)\ge\frac{1}{r+1}+\frac{k}{(s-r+1)n}\qquad\tx{then}\quad P(X,f(X),\DQ f(X),\ldots,\DQ^{s-1}f(X))=0.
		\]
	\end{proposition}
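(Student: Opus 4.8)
The plan is to carry out the Guruswami--Wang linear-algebraic interpolation argument, now driven by the operator $\Delta$ in place of the classical derivative. First I would count parameters: a polynomial $P(X,\mb{Y})=\wt{P}(X)+\sum_{j=0}^{r-1}P_j(X)Y_j$ subject to $\deg(\wt{P})\le d+k-1$ and $\deg(P_j)\le d$ carries $(d+k)+r(d+1)=(r+1)d+r+k$ free coefficients over $\K$. Against these I impose, for every $i\in[n]$ and every $\ell\in[0,s-r]$, the single homogeneous $\K$-linear condition $\Delta^\ell(P)(\alpha_i,w_i)=0$ (plug $\alpha_i$ for $X$ and $w_i\in\K^s$ for $\mb{Y}$), for a total of $n(s-r+1)$ equations. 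Since $P$ has $\deg_{Y_{j'}}(P)=0$ for all $j'\ge r$, iterating \cref{lem:iterated-Delta}(a) gives $\deg_{Y_{j'}}(\Delta^{\ell'}P)=0$ for all $j'\ge r+\ell'$; hence each $\Delta^{\ell'}P$ with $\ell'\le s-r$ lies in $\msf{L}_s(X,\mb{Y})$, and moreover is free of $Y_{s-1}$ for $\ell'<s-r$, which is exactly the hypothesis under which iterating \cref{lem:iterated-Delta}(b) remains valid and yields $(\Delta^\ell P)^{[f]}(X)=\DQ^\ell(P^{[f]})(X)$ for all $\ell\in[0,s-r]$. This is precisely why the vanishing order is capped at $s-r$ rather than $s$.

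The existence of a nonzero $P$ is forced by the ceiling defining $d$: from $(r+1)d\ge n(s-r+1)-(r+k)+1$ we get $(r+1)d+r+k\ge n(s-r+1)+1>n(s-r+1)$, so the homogeneous system has strictly more unknowns than equations. I would also record, for any $f$ with $\deg(f)<k$, the bound $\deg(P^{[f]})\le\max\{\deg(\wt{P}),\,\max_j(\deg(P_j)+\deg(\DQ^jf))\}\le d+k-1$, using $\deg(\DQ^jf)\le\deg(f)-j\le k-1$ from \cref{rem:degree-reducing}; and note that in the regime of \cref{rem:asymptotic} ($|A|\le q$, $k\le s|A|$, $s\le q$) a short computation gives $d+k-1<q^2<[\kappa]_q$, so \cref{pro:univariate-SZ} will apply to $P^{[f]}$.

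To conclude, fix $f$ with $\deg(f)<k$ and let $T=\{i\in[n]:\DQ^jf(\alpha_i)=w_i^{(j)}\ \tx{for all}\ j\in[0,s-1]\}$ be its set of agreement positions, so that $\msf{d}^{(s)}(f,w)=|T|/n$. For $i\in T$, substituting $\mb{Y}\mapsto(f(X),\DQ f(X),\ldots,\DQ^{s-1}f(X))$ and then $X\mapsto\alpha_i$ coincides with substituting $X\mapsto\alpha_i$, $\mb{Y}\mapsto w_i$, so for every $\ell\in[0,s-r]$,
\[
\DQ^\ell(P^{[f]})(\alpha_i)=(\Delta^\ell P)^{[f]}(\alpha_i)=\Delta^\ell(P)(\alpha_i,w_i)=0,
\]
whence $\mu_\Q(P^{[f]},\alpha_i)\ge s-r+1$ for every $i\in T$. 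If $P^{[f]}\neq0$, then since $\deg(P^{[f]})\le d+k-1<[\kappa]_q$, \cref{pro:univariate-SZ} forces $|T|(s-r+1)\le\sum_{i\in T}\mu_\Q(P^{[f]},\alpha_i)\le d+k-1$. But the ceiling also gives $d\le\frac{n(s-r+1)-k+1}{r+1}$, so $d+k-1\le\frac{n(s-r+1)+r(k-1)}{r+1}$ and hence $\frac{d+k-1}{s-r+1}<\frac{n}{r+1}+\frac{k}{s-r+1}$; therefore if $\msf{d}^{(s)}(f,w)\ge\frac{1}{r+1}+\frac{k}{(s-r+1)n}$, i.e.\ $|T|(s-r+1)>d+k-1$, we obtain a contradiction. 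Hence $P^{[f]}=0$.

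The argument is essentially routine; the two places needing care are the bookkeeping that keeps each intermediate $\Delta^{\ell'}P$ inside $\msf{L}_s(X,\mb{Y})$ and free of $Y_{s-1}$ for $\ell'<s-r$ (so that $\Delta^\ell$ faithfully computes $\DQ^\ell$ on $P^{[f]}$, which is why the vanishing order is $s-r+1$), and the ceiling arithmetic, which must simultaneously make the dimension count go through and reproduce the stated threshold $\tfrac{1}{r+1}+\tfrac{k}{(s-r+1)n}$.
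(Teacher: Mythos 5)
Your proof is correct and follows essentially the same route as the paper: set up the same $n(s-r+1)$ vanishing constraints $(\Delta^\ell P)(\alpha_i,w_i)=0$, compare against the $(r+1)d+r+k$ coefficients to force a nonzero $P$, then use the commutation $(\Delta^\ell P)^{[f]}=\DQ^\ell(P^{[f]})$ together with \cref{pro:univariate-SZ} and the ceiling arithmetic to conclude $P^{[f]}=0$. You are in fact slightly more careful than the paper at one point: you correctly note that \cref{lem:iterated-Delta}(b) requires the input polynomial to have no $Y_{s-1}$-term (the identity picks up a spurious $P_{s-1}(\Q X)\DQ^s f(X)$ otherwise, since $Y_s$ is defined to be $0$), and you justify via iterated use of \cref{lem:iterated-Delta}(a) that $\Delta^{\ell'}P$ stays free of $Y_{s-1}$ for all $\ell'\le s-r-1$, which is exactly what makes the chain $(\Delta^\ell P)^{[f]}=\DQ^\ell(P^{[f]})$ valid up to $\ell=s-r$ — a point the paper elides.
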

	\begin{proof}
		Assume the notation \(u=(u^{(0)},\ldots,u^{(s-1)})\in\mb{K}^s\).  We will construct a nonzero polynomial \(P(X,\mb{Y})=\wt{P}(X)+\sum_{j=0}^{s-1}P_j(X)Y_j\in\msf{L}_s(X,\mb{Y})\) such that
		\begin{enumerate}[(a)]
			\item  \(P(X,\mb{Y})=\wt{P}(X)+\sum_{j=0}^{r-1}P_j(X)Y_j\in\msf{L}_s(X,\mb{Y})\), that is, \(P_r(X)=\cdots=P_{s-1}(X)=0\).
			\item  \(\deg(\wt{P})\le d+k-1\), and \(\deg(P_j)\le d\) for all \(j\in[0,r-1]\).
			\item  \((\Delta^jP)\big(\alpha_i,w_i^{(0)},\ldots,w_i^{(s-1)}\big)=0\) for all \(j\in[0,s-r],\,i\in[n]\).
		\end{enumerate}
		The number of linear constraints is \(n(s-r+1)\), and the number of coefficients is \(d+k+r(d+1)=d(r+1)+(r+k)\).  So for the choice
		\[
		d=\left\lceil\frac{n(s-r+1)-(r+k)+1}{r+1}\right\rceil,
		\]
		we indeed have \(d(r+1)+(r+k)>n(s-r+1)\), and this ensures a nontrivial solution, that is, \(P(X,\mb{Y})\ne0\).
		
		Now consider any \(f(X)\in\mb{K}[X]\), and suppose \(\msf{d}^{(s)}(f,w)=1-(\nu/n)\).  By~\cref{lem:iterated-Delta}, this implies \(\sum_{a\in A}\mu_\Q(P^{[f]},a)\ge\nu s-r+1\).  But we also have \(\deg(P^{[f]})\le d+k-1\).  Therefore, by~\cref{pro:univariate-SZ}, we can conclude that \(P^{[f]}(X)=0\) if \(\nu>\lceil (d+k-1)/(s-r+1)\rceil\).  This is equivalent to
		\begin{align*}
			\nu>\left\lceil\frac{\left\lceil\frac{n(s-r+1)-(r+k)+1}{r+1}\right\rceil+k-1}{s-r+1}\right\rceil=\left\lceil\frac{n(s-r+1)-(r+k)+(r+1)(k-1)+1}{(r+1)(s-r+1)}\right\rceil.
		\end{align*}
		The above is true if
		\[
		\nu\ge\frac{n}{r+1}+\frac{k}{s-r+1},\quad\tx{that is,}\quad\frac{\nu}{n}\ge\frac{1}{r+1}+\frac{k}{(s-r+1)n}.\qedhere
		\]
	\end{proof}
	
	Once we have an interpolating polynomial that captures the correct codewords, we can extract the output list as follows.
	\begin{proposition}\label{pro:LR-solving}
		Consider parameters \(s\ge r\ge1\), and \(k,d\ge1\) such that \(d+k-1<[3]_q\).  If \(P(X,\mb{Y})=\wt{P}(X)+\sum_{j=0}^{r-1}P_j(X)Y_j\in\msf{L}_s(X,\mb{Y})\) is a nonzero polynomial with
		\[
		\deg(\wt{P})\le d+k-1,\quad\tx{and}\quad\deg(P_j)\le d\quad\tx{for all }j\in[0,r-1],
		\]
		then the solution space
		\[
		\{f(X)\in\mb{K}[X]:P^{[f]}(X)=0\}
		\]
		is a \(\mb{K}\)-affine space with dimension at most \(r-1\).
	\end{proposition}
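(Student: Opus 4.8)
The plan is to turn the equation $P^{[f]}=0$ into a homogeneous linear recursion on the coefficients of $f$, exactly in the spirit of the linear-algebraic decoder of~\cite{guruswami-wang-2013-FRS}. Set $L(g)\coloneqq\sum_{j=0}^{r-1}P_j(X)\,\DQ^{j}g(X)$, so that $P^{[f]}(X)=\wt{P}(X)+L(f)(X)$. If the solution set is non-empty, then for a fixed particular solution $f_0$ it equals $f_0+\ker L$ inside $\mb{K}[X]$, which is an affine $\mb{K}$-space; it therefore suffices to exhibit at most $r-1$ coefficient positions $m_1<\dots<m_{r-1}$ such that $g\mapsto(g_{m_1},\dots,g_{m_{r-1}})$ is injective on $\ker L$. (If every $P_j$ is zero then $P=\wt{P}\ne0$, so the solution set is empty; hence assume some $P_j\ne0$.) Throughout, the hypothesis $d+k-1<[3]_q$ together with the standing assumptions of~\cref{rem:asymptotic} keeps every exponent that occurs strictly below the multiplicative order $q^{3}-1$ of $\Q$: in particular $\DQ$ is strictly degree-reducing (\cref{rem:degree-reducing}), every $[t]_\Q$ we divide by is nonzero, and a short leading-coefficient argument --- of exactly the type carried out below --- shows that any solution $f$ has degree bounded by an explicit quantity below $q^{3}-1$, so we may genuinely work inside a finite-dimensional space and within a fixed interval of exponents (the \emph{admissible range}) on which all the relevant $[t]_\Q$ are nonzero.

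To build the recursion, write $P_j(X)=\sum_{t\ge0}p_{j,t}X^{t}$ and recall $\DQ^{j}(X^{i})=c^{(j)}_{i}X^{i-j}$ with $c^{(j)}_{i}\coloneqq[i]_\Q[i-1]_\Q\cdots[i-j+1]_\Q$ (and $c^{(j)}_{i}=0$ for $i<j$). Then, for $g=\sum_i g_i X^i$, the coefficient of $X^{i}$ in $L(g)$ is
\[
\sum_{j=0}^{r-1}\sum_{t\ge0}p_{j,t}\,c^{(j)}_{\,i+j-t}\,g_{\,i+j-t}.
\]
Put $\tau\coloneqq\max\{\,j-t:p_{j,t}\ne0\,\}$, and let $j_{0}\le j_{1}$ be the least and greatest $j$ with $p_{j,\,j-\tau}\ne0$; since $j-\tau=t\ge0$ we have $j_{0}\ge\max(0,\tau)$, while $j_{1}\le r-1$, so
\[
\max(0,\tau)+(j_{1}-j_{0})\;\le\;r-1.
\]
In this sum the largest index of $g$ that appears is $i+\tau$, with coefficient $\Phi(i)\coloneqq\sum_{j:\,p_{j,\,j-\tau}\ne0}p_{j,\,j-\tau}\,c^{(j)}_{\,i+\tau}$, and every other term involves some $g_{m}$ with $m<i+\tau$. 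Hence whenever $i+\tau\ge0$ lies in the admissible range and $\Phi(i)\ne0$, setting the coefficient of $X^{i}$ in $L(g)$ to zero determines $g_{\,i+\tau}$ as an explicit $\mb{K}$-linear combination of $\{g_{m}:m<i+\tau\}$.

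It remains to bound the set of \emph{free} positions, namely those $m\ge0$ never produced as a pivot $i+\tau$ above: this set lies in $\{0,1,\dots,\tau-1\}$ (non-empty only when $\tau>0$, contributing at most $\max(0,\tau)$ positions) together with $\{\,i+\tau:\Phi(i)=0\,\}$. For the latter, factor $c^{(j)}_{i+\tau}=c^{(j_{0})}_{i+\tau}\prod_{u=j_{0}}^{j-1}[\,i+\tau-u\,]_\Q$ and use $[\,i+\tau-u\,]_\Q=\tfrac{\Q^{\tau-u}\Q^{i}-1}{\Q-1}$ to write $\Phi(i)=c^{(j_{0})}_{i+\tau}\cdot\psi(\Q^{i})$, where $\psi(z)=\sum_{j:\,p_{j,\,j-\tau}\ne0}p_{j,\,j-\tau}\prod_{u=j_{0}}^{j-1}\tfrac{\Q^{\tau-u}z-1}{\Q-1}$ is a polynomial in $z$ of degree exactly $j_{1}-j_{0}$ (only the $j=j_{1}$ summand reaches that degree, with nonzero coefficient $p_{j_{1},\,j_{1}-\tau}\prod_{u=j_{0}}^{j_{1}-1}\tfrac{\Q^{\tau-u}}{\Q-1}$). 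In the admissible range $c^{(j_{0})}_{i+\tau}\ne0$, so $\Phi(i)=0$ forces $\psi(\Q^{i})=0$; since $\Q$ is a multiplicative generator of order $q^{3}-1$ and the relevant $i$ range over an interval of length less than $q^{3}-1$, the values $\Q^{i}$ are pairwise distinct, so $\psi$ vanishes at $\Q^{i}$ for at most $\deg\psi=j_{1}-j_{0}$ of them. Consequently there are at most $\max(0,\tau)+(j_{1}-j_{0})\le r-1$ free positions, and processing $g_{0},g_{1},g_{2},\dots$ in increasing order determines every non-free coefficient from the earlier ones. Thus $g$ is determined by its (at most $r-1$) free coefficients, the map above is injective on $\ker L$, and the solution set of $P^{[f]}=0$ is an affine $\mb{K}$-space of dimension at most $r-1$.

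The delicate point is the bookkeeping in the last two steps: correctly isolating the pivot index $i+\tau$ and its coefficient $\Phi(i)$ when several of the $P_j$ carry low-order terms, and then verifying that $\Phi(i)$, regarded as a function of $i$, is a \emph{nonzero} polynomial of degree $j_{1}-j_{0}$ in $\Q^{i}$ and so vanishes only $O(r)$ times --- this is exactly where the choice of $\Q$ as a \emph{full} multiplicative generator (rather than an arbitrary element) and the degree bound $d+k-1<[3]_q$ are used, in analogy with the step (involving a polynomial in $\gamma^{i}$) for folded Reed--Solomon codes in~\cite{guruswami-wang-2013-FRS}. The elementary inequality $\max(0,\tau)+(j_{1}-j_{0})\le r-1$, which is what makes the final count equal $r-1$ rather than something larger, is easy to overlook but is the heart of why the bound is tight.
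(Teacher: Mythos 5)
Your proof is correct in essence but takes a genuinely different route from the paper. The paper works in a \emph{shifted} basis: it writes the coefficients of every polynomial in the basis \(\{(X-\Q^b)_\Q^{(h)}/[h]_\Q!\}\), invokes the \(\Q\)-product rule, and --- after reducing WLOG to \(P_{r-1}\ne0\) --- \emph{chooses} the base point \(\Q^b\) so that \(P_{r-1}(\Q^{h+b})\ne0\) simultaneously for all \(h\in[0,d+k-1]\); the pivot coefficient is then always the single evaluation \(P_{r-1}(\Q^{h+b})\), and exactly the \(r-1\) coefficients \(f_{0,b},\dots,f_{r-2,b}\) fall out as free. You instead stay in the plain monomial basis centered at \(0\), isolate the top index \(i+\tau\) with pivot coefficient \(\Phi(i)\), factor \(\Phi(i)=c^{(j_0)}_{i+\tau}\,\psi(\Q^i)\) with \(\psi\) nonzero of degree \(j_1-j_0\), and count zeros of \(\psi\). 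This is much closer to the original Guruswami--Wang analysis of FRS codes~\cite{guruswami-wang-2013-FRS}, where one tracks a nonzero polynomial in \(\gamma^i\), than to the paper's choice-of-basepoint argument. The paper's trick buys cleaner bookkeeping (one polynomial whose zeros you dodge by translating the base point); yours avoids having to select a good \(\Q^b\), at the cost of tracking \(\tau\), \(j_0\), \(j_1\).

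There is one accounting slip. You bound \(|\{i+\tau:\Phi(i)=0\}|\le j_1-j_0\) by arguing ``in the admissible range \(c^{(j_0)}_{i+\tau}\ne0\), so \(\Phi(i)=0\) forces \(\psi(\Q^i)=0\).'' But for the \(i\) with \(\max(0,\tau)\le i+\tau<j_0\) --- there are \(j_0-\max(0,\tau)\) of them, and \(j_0\) can exceed \(\max(0,\tau)\) --- the factor \(c^{(j_0)}_{i+\tau}\) vanishes by definition, so \(\Phi(i)=0\) automatically and these positions are also free, yet they fall outside your ``admissible range'' and are not charged to \(\{0,\dots,\tau-1\}\) either. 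The corrected tally is \(\max(0,\tau)+(j_0-\max(0,\tau))+(j_1-j_0)=j_1\le r-1\), so the final conclusion is unharmed; but the displayed inequality \(\max(0,\tau)+(j_1-j_0)\le r-1\), which you single out as the heart of the bound, is not by itself a count of the free positions and should be replaced by \(j_1\le r-1\).

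One further remark that applies equally to the paper's proof: as literally stated the proposition places no degree bound on \(f\), and over \(\mb{K}=\mb{F}_{q^3}\) one has \(\DQ X^{q^3-1}=[q^3-1]_\Q X^{q^3-2}=0\), so for instance \(P=Y_1\) has an infinite-dimensional solution space. Both your argument and the paper's are implicitly confined to \(\deg f\) below the order of \(\Q\) (which is automatic for codewords, \(\deg f<k<[3]_q\)), so this is a shared gloss rather than a defect specific to your write-up, but your sentence claiming a ``leading-coefficient argument'' bounds \(\deg f\) is not quite right in this generality either.
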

	\begin{proof}
		Consider any \(f(X)\in\mb{K}[X]\) satisfying \(P^{[f]}(X)=0\).  Note that we have \(P(X,\mb{Y})\ne0\).  If \(P_j(X)=0\) for all \(j\in[0,r-1]\), then \(P^{[f]}(X)=\wt{P}(X)=0\), which means \(P(X,\mb{Y})=0\), a contradiction.  So there exists \(j\in[0,r-1]\) such that \(P_j(X)\ne0\).  Without loss of generality, we assume \(P_{r-1}(X)\ne0\).  (Otherwise, we work with the largest \(r'\) such that \(P_{r'-1}(X)\ne0\).)
		
		Let us also quickly consider another notation.  For any \(\Q^b\in\mb{K}^\times,\,h\ge0\), and \(g(X)\in\mb{K}[X]\), denote
		\begin{align*}
			g_{h,b}\coloneqq\coeff\bigg(\frac{(X-\Q^b)_\Q^{(h)}}{[h]_\Q!},g\bigg)=\DQ^hg(\Q^b)\tag*{by~\cref{pro:Q-properties}(b)}
		\end{align*}
		Recall that \(\mb{K}=\mb{F}_{q^3}\).  Since \(\deg(P_{r-1})\le d<[3]_q\), there exists \(\Q^b\in\mb{K}^\times\) such that \(P_{r-1}(\Q^{h+b})\ne0\) for all \(h\in[0,d+k-1]\).  Further, since \(\deg(P)\le d+k-1<[3]_q\), we will work with the basis of monomials \(\{(X-\Q^b)_\Q^{(h)}:h\in[0,d+k-1]\}\).  Now we have
		\[
		P^{[f]}(X)\equiv\wt{P}(X)+P_0(X)f(X)+P_1(X)\DQ f(X)+\cdots+P_{r-1}(X)\DQ^{r-1}f(X)=0
		\]
		So by~\cref{pro:Q-properties}(c), for every \(h\in[0,d+k-1]\), we have
		\begin{align*}
			0=P^{[f]}_{h,b}&=(\wt{P})_{h,b}+\sum_{j=0}^{r-1}\sum_{c=0}^h\Qbinom{h}{c}(P_j)_{h-c,c+b}(\DQ^jf)_{c,b}\\
			&=(\wt{P})_{h,b}+\sum_{j=0}^{r-1}\sum_{c=0}^h\Qbinom{h}{c}(P_j)_{h-c,c+b}f_{c+j,b}.
		\end{align*}
		Since \((P_{r-1})_{0,h+b}=P_{r-1}(\Q^{h+b})\ne0\) for all \(h\in[0,d+k-1]\), we can rewrite the above as
		\[
		f_{h+r-1,b}=-\frac{1}{(P_{r-1})_{0,h+b}}\Bigg((\wt{P})_{h,b}+\sum_{\substack{(j,c)\le(r-1,h)\\(j,c)\ne(r-1,h)}}\Qbinom{h}{c}(P_j)_{h-c,c+b}f_{c+j,b}\Bigg)\quad\tx{for all }h\in[0,d+k-1].
		\]
		So we can set the undetermined coefficients among \(f_{0,b},\ldots,f_{r-2,b}\in\mb{K}\) freely, and the remaining coefficients of \(f(X)\) are uniquely determined the above equation.  This means the solution space is a \(\mb{K}\)-affine space having dimension at most \(r-1\).
	\end{proof}
	
	Our main theorem is now immediate.
	\begin{proof}[Proof of Theorem~\ref{thm:univariate-LR}]
		Choosing \(r=O(1/\epsilon)\), and plugging in all the parameters into~\cref{pro:LR-interpolation} and~\cref{pro:LR-solving} immediately implies the claim.
	\end{proof}

	\subsection{List decoding of multivariate \(\Q\)-multiplicity codes}\label{sec:list-decoding-multi-Q-mult}
	
	We will now prove our main result (\cref{thm:multivariate-LD}) on list decoding \(\Qmult_{m,s}(A;k)\).  Even though it is inspired by the classical multivariate multiplicity code list decoder~\cite{bhandari-harsha-kumar-sudan-2024-multiplicity-code}, it turns out that we can give a conceptually simpler algorithm and analysis that is perhaps the correct multivariate extension of our univariate list decoder (\cref{thm:univariate-LR}).
	
	Consider additional indeterminates \(\mb{Y}=(Y_\gamma:|\gamma|<s),\,\mbf{Z}=(Z_1,\ldots,Z_m)\), and the \(\mb{K}(\mbf{Z})\)-linear subspace of the polynomial ring \(\mb{K}(\mbf{Z})[\mb{X},\mb{Y}]\) defined by
	\[
	\msf{L}_s(\mb{X},\mb{Y})=\bigg\{\wt{P}(\mb{X})+\sum_{j=0}^{s-1}P_j(\mb{X})\sum_{|\gamma|=j}Y_\gamma\mbf{Z}^\gamma\in\mb{K}(\mbf{Z})[\mb{X},\mb{Y}]\bigg\}.
	\]
	Also, denote \(Y_\gamma=0\) for all \(\gamma\in\mb{N}^m,\,|\gamma|\ge s\).  For any \(\alpha\in\mb{N}^m\), define a \(\mb{K}(\mbf{Z})\)-linear operator \(\Delta^{(\alpha)}:\msf{L}_s(\mb{X},\mb{Y})\to\msf{L}_s(\mb{X},\mb{Y})\) by
	\[
	\Delta^{(\alpha)}\bigg(\wt{P}(\mb{X})+\sum_{j=0}^{s-1}P_j(\mb{X})\sum_{|\gamma|=j}Y_\gamma\mbf{Z}^\gamma\bigg)=\DQ^\alpha\wt{P}(\mb{X})+\sum_{j=0}^{s-1}\sum_{|\gamma|=j}\bigg(\sum_{\beta\le\alpha}\Qbinom{\alpha}{\beta}\DQ^{\alpha-\beta}P_j(\Q^\beta\mb{X})Y_{\beta+\gamma}\bigg)\mbf{Z}^\gamma.
	\]
	For any \(P(\mb{X},\mb{Y},\mbf{Z})\in\msf{L}_s(\mb{X},\mb{Y})\) and \(f(\mb{X})\in\mb{K}[\mb{X}]\), denote \(P^{[f]}(\mb{X})=P\big(\mb{X},\big(\DQ^\gamma f(\mb{X}):|\gamma|<s\big),\mbf{Z}\big)\in\mb{K}(\mbf{Z})[\mb{X}]\).  The following is then immediate.
	\begin{lemma}\label{lem:-multi-iterated-Delta}
		Let \(P(\mb{X},\mb{Y},\mbf{Z})\in\msf{L}_s(\mb{X},\mb{Y})\).
		\begin{enumerate}[{\normalfont(a)}]
			\item  If \(j\in[0,s-1]\) such that \(\deg_{Y_\gamma}(P)=0\) for all \(\gamma\in\mb{N}^m,\,|\gamma|\ge j\), then \(\deg_{Y_\gamma}(\Delta^{(\alpha)}P)=0\) for all \(\gamma\in\mb{N}^m,\,|\gamma|\ge j+|\alpha|\).
			\item  \((\Delta^{(\alpha)}P)^{[f]}(\mb{X})=\DQ^\alpha(P^{[f]})(\mb{X})\), for all \(f(\mb{X})\in\mb{K}[\mb{X}]\).
		\end{enumerate}
	\end{lemma}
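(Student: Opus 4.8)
The plan is to verify both statements by direct computation, in close parallel to the univariate \cref{lem:iterated-Delta}, simply replacing the univariate product rule \cref{pro:Q-properties}(d) by its multivariate counterpart \cref{pro:multi-Q-properties}(d). Throughout, write $P=\wt{P}(\mb{X})+\sum_{j=0}^{s-1}P_j(\mb{X})\sum_{|\gamma|=j}Y_\gamma\mbf{Z}^\gamma$.

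For part (a): assume $\deg_{Y_\gamma}(P)=0$ for all $|\gamma|\ge j$, i.e.\ $P_{j'}=0$ for every $j'\ge j$. Reading off the definition of $\Delta^{(\alpha)}$, the only $\mb{Y}$-variables occurring in $\Delta^{(\alpha)}P$ are the $Y_{\beta+\gamma}$ with $\beta\le\alpha$ and $|\gamma|=j'$ for some $j'\le j-1$; since $|\beta+\gamma|\le|\alpha|+(j-1)<j+|\alpha|$, we get $\deg_{Y_\delta}(\Delta^{(\alpha)}P)=0$ for all $|\delta|\ge j+|\alpha|$, which is the claim. In particular, this is what guarantees that $\Delta^{(\alpha)}$ really maps $\msf{L}_s(\mb{X},\mb{Y})$ into itself in the range $|\alpha|<s-r$ used by the decoder: no ``overflow'' variable $Y_\delta$ with $|\delta|\ge s$ is ever produced.

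For part (b): apply the substitution $Y_\delta\mapsto\DQ^\delta f(\mb{X})$ to the defining formula for $\Delta^{(\alpha)}P$. The summand attached to $(j,\gamma,\beta)$ becomes $\Qbinom{\alpha}{\beta}\DQ^{\alpha-\beta}P_j(\Q^\beta\mb{X})\,\DQ^{\beta+\gamma}f(\mb{X})\,\mbf{Z}^\gamma$; writing $\DQ^{\beta+\gamma}f=\DQ^\beta(\DQ^\gamma f)$ and invoking the first form of the product rule in \cref{pro:multi-Q-properties}(d) with the two factors $P_j$ and $\DQ^\gamma f$, the inner sum over $\beta\le\alpha$ collapses to $\DQ^\alpha\!\big(P_j\cdot\DQ^\gamma f\big)(\mb{X})$. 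Hence
\[
(\Delta^{(\alpha)}P)^{[f]}(\mb{X})=\DQ^\alpha\wt{P}(\mb{X})+\sum_{j=0}^{s-1}\sum_{|\gamma|=j}\DQ^\alpha\!\big(P_j\cdot\DQ^\gamma f\big)(\mb{X})\,\mbf{Z}^\gamma .
\]
Since $\DQ^\alpha$ acts only on the $\mb{X}$-variables and is $\mb{K}(\mbf{Z})$-linear, the right-hand side equals $\DQ^\alpha\big(\wt{P}(\mb{X})+\sum_{j}\sum_{|\gamma|=j}P_j(\mb{X})\DQ^\gamma f(\mb{X})\,\mbf{Z}^\gamma\big)=\DQ^\alpha(P^{[f]})(\mb{X})$, as required.

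Both parts are thus pure bookkeeping, and I do not expect any genuine obstacle. The one place that needs a moment's attention is lining up the shifted arguments $\Q^\beta\mb{X}$ appearing in the definition of $\Delta^{(\alpha)}$ with the precise shape of the multivariate $\Q$-product rule, together with keeping the convention $Y_\gamma=0$ (for $|\gamma|\ge s$) consistent with the definition of the projection $P^{[f]}$ --- which is exactly why part (a) is recorded alongside (b), since it rules out any overflow term in the parameter regime in which the lemma is applied.
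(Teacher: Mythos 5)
Your proof is correct and follows essentially the same route as the paper: for (a) you observe that the only surviving indices $\beta+\gamma$ have $|\beta+\gamma|\le|\alpha|+(j-1)<j+|\alpha|$, which is the content of the paper's re-indexed coefficient calculation, and for (b) you substitute and collapse the $\beta$-sum using the first form of the product rule \cref{pro:multi-Q-properties}(d), exactly as the paper does.
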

	\begin{remark}
		A consequence of~\cref{lem:-multi-iterated-Delta} is that \(\Delta^{(\alpha+\beta)}=\Delta^{(\alpha)}\circ\Delta^{(\beta)}\) for all \(\alpha,\beta\in\mb{N}^m\).  We do not explicitly use this elsewhere.
	\end{remark}
	\begin{proof}[Proof of~\cref{lem:-multi-iterated-Delta}]
		Let \(P(\mb{X},\mb{Y},\mbf{Z})=\wt{P}(\mb{X})+\sum_{j=0}^{s-1}P_j(\mb{X})\sum_{|\gamma|=j}Y_\gamma\mbf{Z}^\gamma\in\msf{L}_s(\mb{X},\mb{Y})\).
		\begin{enumerate}[(a)]
			\item  If \(j\in[0,s-1]\) such that \(\deg_{Y_\gamma}(P)=0\) for all \(\gamma\in\mb{N}^m,\,|\gamma|\ge j\), this means \(P_{j'}(\mb{X})=0\) for all \(j'\ge j\).  Now we have
			\begin{align*}
			(\Delta^{(\alpha)}P)(\mb{X},\mb{Y},\mbf{Z})&=\DQ^\alpha\wt{P}(\mb{X})+\sum_{j'=0}^{s-1}\sum_{|\gamma|=j'}\bigg(\sum_{\beta\le\alpha}\Qbinom{\alpha}{\beta}\DQ^{\alpha-\beta}P_{j'}(\Q^\beta\mb{X})Y_{\beta+\gamma}\bigg)\mbf{Z}^\gamma\\
			&=\DQ^\alpha\wt{P}(\mb{X})+\sum_{j'=0}^{s-1}\sum_{|\gamma|=j'}\bigg(\sum_{\beta\le\alpha}\Qbinom{\alpha}{\beta}\DQ^{\alpha-\beta}P_{j'-|\beta|}(\Q^\beta\mb{X})\mbf{Z}^{\gamma-\beta}\bigg)Y_\gamma
			\end{align*}
			So, \(\sum_{\beta\le\alpha}\DQ^{\alpha-\beta}P_{j'-|\beta|}(\Q^\beta\mb{X})\mbf{Z}^{\gamma-\beta}=0\) for all \(j'\ge j+|\alpha|\).  This means \(\deg_{Y_{j'}}(\Delta^{(\alpha)}P)=0\) for all \(j'\ge j+|\alpha|\).
			
			\item  We have
			\begin{align*}
				(\Delta^{(\alpha)}P)^{[f]}(\mb{X})&=\DQ^\alpha\wt{P}(\mb{X})+\sum_{j=0}^{s-1}\sum_{|\gamma|=j}\bigg(\sum_{\beta\le\alpha}\Qbinom{\alpha}{\beta}\DQ^{\alpha-\beta}P_{j}(\Q^\beta\mb{X})\DQ^{\beta+\gamma}f(\mb{X})\bigg)\mbf{Z}^\gamma\\
				&=\DQ^\alpha\wt{P}(\mb{X})+\sum_{j=0}^{s-1}\sum_{|\gamma|=j}\DQ^\alpha\big(P_j\cdot\DQ^\gamma f\big)(\mb{X})\mbf{Z}^\gamma\tag*{by~\cref{pro:multi-Q-properties}(d)}\\
				&=\DQ^\alpha(P^{[f]})(\mb{X}).\qedhere
			\end{align*}
		\end{enumerate}
	\end{proof}
	
	The interpolation step of the list recovery algorithm, which will be used multiple times in this discussion, can be captured as follows.
	\begin{proposition}\label{pro:multi-LR-interpolation}
		Consider any \(A\subseteq\mb{F}_q^\times\), parameters \(s\ge r\ge1\), and \(k\in[s|A|]\).  Define
		\[
		d=\left\lceil5\bigg(\frac{1}{r+1}\bigg)^{1/m}(s-r+1)|A|\right\rceil.
		\]
		Then for any \(w=(w_a:a\in A^m)\in\big(\mb{K}^{\binom{m+s-1}{s-1}}\big)^{|A|^m}\), there exists a nonzero polynomial \(P(\mb{X},\mb{Y},\mbf{Z})=\wt{P}(\mb{X})+\sum_{j=0}^{r-1}P_j(\mb{X})\sum_{|\gamma|=j}Y_\gamma\mbf{Z}^\gamma\in\msf{L}_s(\mb{X},\mb{Y})\) with
		\[
		\deg(\wt{P})\le d+k-1,\quad\tx{and}\quad\deg(P_j)\le d\quad\tx{for all }j\in[0,r-1].
		\]
		such that for any \(f(\mb{X})\in\mb{K}[\mb{X}],\,\deg(f)<k\),
		\[
		\tx{if}\quad\msf{d}^{(m,s)}(f,w)\ge5\bigg(\frac{1}{r+1}\bigg)^{1/m}+\frac{k}{(s-r+1)|A|^{m-1}}.\qquad\tx{then}\quad P^{[f]}(\mb{X})=0.
		\]
		Moreover, we can ensure that \(P(\mb{X},\mb{Y},\mbf{Z})\) is a polynomial in \(\mbf{Z}\) with \(\deg_{\mbf{Z}}(P)<ms\).
	\end{proposition}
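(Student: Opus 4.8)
The plan is to run the univariate interpolation argument of \cref{pro:LR-interpolation} almost verbatim, but carrying the extra indeterminates $\mbf{Z}$ and working over the rational function field $\mb{K}(\mbf{Z})$. First I would look for
\[
P(\mb{X},\mb{Y},\mbf{Z})=\wt{P}(\mb{X})+\sum_{j=0}^{r-1}P_j(\mb{X})\sum_{|\gamma|=j}Y_\gamma\mbf{Z}^\gamma\in\msf{L}_s(\mb{X},\mb{Y})
\]
with $\deg(\wt{P})\le d+k-1$ and $\deg(P_j)\le d$, treating the $\mb{X}$-coefficients of $\wt{P},P_0,\dots,P_{r-1}$ as unknowns over $\mb{K}(\mbf{Z})$ and imposing the homogeneous conditions
\[
\Delta^{(\alpha)}(P)(a,w_a,\mbf{Z})=0\qquad\text{for all }a\in A^m,\ \alpha\in\mb{N}^m,\ |\alpha|\le s-r .
\]
The cutoff $|\alpha|\le s-r$ is chosen so that every $Y_\gamma$ occurring in $\Delta^{(\alpha)}(P)$ has $|\gamma|\le(r-1)+|\alpha|<s$ (cf.\ \cref{lem:-multi-iterated-Delta}(a)). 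The key counting observation is that after substituting $\mb{X}=a$ and $\mb{Y}=w_a$, the quantity $\Delta^{(\alpha)}(P)(a,w_a,\mbf{Z})$ is a \emph{single} element of $\mb{K}(\mbf{Z})$, depending $\mb{K}(\mbf{Z})$-linearly on the coefficients of $\wt{P},P_0,\dots,P_{r-1}$; hence over $\mb{K}(\mbf{Z})$ there are only $M=|A|^m\binom{m+s-r}{m}$ equations, against $N=\binom{m+d+k-1}{m}+r\binom{m+d}{m}$ unknowns. Using $\binom{m+d}{m}\ge(d+1)^m/m!$ and $\binom{m+s-r}{m}\le(s-r+m)^m/m!$ one checks that the stated $d=\lceil 5(1/(r+1))^{1/m}(s-r+1)|A|\rceil$ makes $N>M$ comfortably; so a nonzero $P$ exists over $\mb{K}(\mbf{Z})$, and after clearing denominators its coefficients lie in $\mb{K}[\mbf{Z}]$.

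Next I would verify that this $P$ captures every close-enough message. Let $f(\mb{X})\in\mb{K}[\mb{X}]$, $\deg(f)<k$, agree with $w$ on $\nu$ blocks $a\in A^m$, i.e.\ $\DQ^\gamma f(a)=w_a^{(\gamma)}$ for all $|\gamma|<s$ at each such $a$. Since $\Delta^{(\alpha)}(P)$ involves only $Y_\gamma$ with $|\gamma|<s$, \cref{lem:-multi-iterated-Delta}(b) gives $\DQ^\alpha(P^{[f]})(a)=(\Delta^{(\alpha)}P)^{[f]}(a)=\Delta^{(\alpha)}P(a,w_a,\mbf{Z})=0$ for all $|\alpha|\le s-r$; thus $P^{[f]}(\mb{X})\in\mb{K}(\mbf{Z})[\mb{X}]$ has $\Q$-multiplicity $\ge s-r+1$ at each of these $\nu$ points. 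On the other hand $\deg_{\mb{X}}(P^{[f]})\le d+k-1$, which is $<[3]_q$ for $q$ large (\cref{rem:asymptotic}; note $d+k-1=O(sq)$). Now \cref{thm:multivariate-SZ-grid} applies over the extension $\mb{K}(\mbf{Z})$ of $\mb{K}$ — its proof is insensitive to the base field and uses only $A\subseteq\mb{F}_q^\times$ together with \cref{pro:extension-basic}(b) — so unless $P^{[f]}=0$ we get $\nu(s-r+1)\le(d+k-1)|A|^{m-1}$. Plugging in the value of $d$, this forces $P^{[f]}=0$ as soon as $\nu>(d+k-1)|A|^{m-1}/(s-r+1)$, which after dividing by $|A|^m$ is exactly the stated requirement on $\msf{d}^{(m,s)}(f,w)$.

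For the last clause, the point is that every coefficient of the $\mb{K}(\mbf{Z})$-linear system set up above is a polynomial in $\mbf{Z}$ of degree $\le r-1<s$: the coefficients of $\wt{P}$ and of $P_0$ enter with scalars in $\mb{K}$ (values of $\DQ^\alpha$ of $\mb{X}$-monomials at grid points), while the coefficients of $P_j$, $1\le j\le r-1$, enter with polynomials homogeneous of degree $j$ in $\mbf{Z}$. Since the system has a nonzero solution and $N>M$ with slack, \cite[Lemma 3]{bhandari-harsha-kumar-sudan-2024-multiplicity-code} (equivalently \cite{kannan-1985-linear-system-over-polynomials}) yields a nonzero solution in which every $\mb{X}$-coefficient is a polynomial in $\mbf{Z}$ of degree bounded by a constant multiple of $r+m$; adding the $\mbf{Z}$-degree $\le r-1$ carried by $P$ itself gives $\deg_{\mbf{Z}}(P)<ms$.

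The step I expect to be the main obstacle is precisely the tension between these two requirements: the small $d$ claimed in the proposition is only reachable in the $\mb{K}(\mbf{Z})$ formulation, where each grid-and-order pair $(a,\alpha)$ contributes one equation, whereas securing $\deg_{\mbf{Z}}(P)<ms$ needs a Kannan-type degree bound, which in turn only works because the system's coefficients have $\mbf{Z}$-degree $\le r-1$ and there is enough slack in $N>M$. Arranging the single constant $5$ so that it simultaneously absorbs the slack needed for solvability, the slack needed for the degree bound, and the loss in replacing $\binom{m+d}{m}$ and $\binom{m+s-r}{m}$ by pure $m$-th powers is the only estimate that requires care; the multiplicity bookkeeping via \cref{lem:-multi-iterated-Delta} and the $\Q$-root count via \cref{thm:multivariate-SZ-grid} are routine.
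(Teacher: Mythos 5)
Your proposal follows the paper's proof essentially verbatim: the same interpolation system with coefficients in $\mb{K}(\mbf{Z})$, the same vanishing conditions via $\Delta^{(\alpha)}$ for $|\alpha|\le s-r$, the same dimension count over $\mb{K}(\mbf{Z})$ (treating each pair $(a,\alpha)$ as one equation), the same appeal to Kannan/BHKS Lemma 3 for the $\mbf{Z}$-degree bound, and the same finish via \cref{lem:-multi-iterated-Delta} plus \cref{thm:multivariate-SZ-grid}. You add two small but worthwhile clarifications that the paper leaves implicit: that \cref{thm:multivariate-SZ-grid} is being applied to $P^{[f]}\in\mb{K}(\mbf{Z})[\mb{X}]$ (its proof carries over unchanged since it only uses $A\subseteq\mb{F}_q^\times$), and that the system's $\mbf{Z}$-coefficients actually have degree $\le r-1$ rather than the paper's stated $\le s-1$. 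You also correctly state $\deg_{\mb{X}}(P^{[f]})\le d+k-1$, where the paper has an apparent typo writing $(d+k-1)|A|^{m-1}$; the conclusion via the $\Q$-Schwartz--Zippel bound is the same either way.
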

	\begin{proof}
		Assume the notation \(u=(u^{(\gamma)}:|\gamma|<s)\in\mb{K}^{\binom{m+s-1}{s-1}}\).  We will construct a nonzero polynomial \(P(\mb{X},\mb{Y},\mbf{Z})=\wt{P}(\mb{X})+\sum_{j=0}^{s-1}P_j(\mb{X})\sum_{|\gamma|=j}Y_\gamma\mbf{Z}^\gamma\in\msf{L}_s(\mb{X},\mb{Y})\) such that
		\begin{enumerate}[(a)]
			\item  \(P(\mb{X},\mb{Y},\mbf{Z})=\wt{P}(\mb{X})+\sum_{j=0}^{r-1}P_j(\mb{X})\sum_{|\gamma|=j}Y_\gamma\mbf{Z}^\gamma\in\msf{L}_s(\mb{X},\mb{Y})\), that is, \(P_r(\mb{X})=\cdots=P_{s-1}(\mb{X})=0\).
			\item  \(\deg(\wt{P})\le d+k-1\), and \(\deg(P_j)\le d\) for all \(j\in[0,r-1]\).
			\item  \((\Delta^{(\alpha)}P)\big(a,(w_a^{(\gamma)}:|\gamma|<s)\big)=0\) for all \(\alpha\in\mb{N}^m\) with \(|\alpha|\in[0,s-r],\,a\in A^m\).
		\end{enumerate}
		The number of linear constraints is
		\[
		|A|^m\binom{m+s-r}{s-r}\le\bigg(\frac{5(s-r+1)|A|}{m}\bigg)^m,
		\]
		and the number of coefficients is
		\[
		\binom{m+d+k-1}{d+k-1}+r\binom{m+d}{d}\ge(r+1)\binom{m+d}{d}\ge(r+1)\bigg(\frac{d}{m}\bigg)^m.
		\]
		So for the choice
		\[
		d=\left\lceil5\bigg(\frac{1}{r+1}\bigg)^{1/m}(s-r+1)|A|\right\rceil,
		\]
		we indeed get a nontrivial solution, that is, \(P(\mb{X},\mb{Y},\mbf{Z})\ne0\).  Further, since each linear constraint is a polynomial in \(\mbf{Z}=(Z_1,\ldots,Z_m)\) having degree at most \(s-1\), we can ensure that \(P(\mb{X},\mb{Y},\mbf{Z})\) is a polynomial in \(\mbf{Z}\) having \(\deg_{\mbf{Z}}(P)<ms\) (see~\cite{kannan-1985-linear-system-over-polynomials} or~\cite[Lemma 3]{bhandari-harsha-kumar-sudan-2024-multiplicity-code}.)
		
		Now consider any \(f(X)\in\mb{K}[X]\), and suppose \(\msf{d}^{(m,s)}(f,S)=1-(\nu/|A|^{m-1})\).  By~\cref{lem:-multi-iterated-Delta}, this implies \(\sum_{a\in A^m}\mu_\Q(P^{[f]},a)\ge\nu(s-r+1)\).  But we also have \(\deg(P^{[f]})\le(d+k-1)|A|^{m-1}\).  Therefore, by~\cref{thm:multivariate-SZ-grid}, we can conclude that \(P^{[f]}(\mb{X})=0\) if \(\nu>(d+k-1)|A|^{m-1}/(s-r+1)\), which holds if
		\[
		\nu>\left\lceil\frac{5(1/(r+1))^{1/m}(s-r+1)|A|^{m-1}+k-1}{s-r+1}\right\rceil.
		\]
		The above is true if
		\[
		\nu\ge5\bigg(\frac{1}{r+1}\bigg)^{1/m}|A|^{m-1}+\frac{k}{s-r+1},\quad\tx{that is,}\quad\frac{\nu}{|A|^{m-1}}\ge5\bigg(\frac{1}{r+1}\bigg)^{1/m}+\frac{k}{(s-r+1)|A|^{m-1}}.\qedhere
		\]
	\end{proof}

	Once we have an interpolating polynomial that captures the correct codewords, we can extract the output list as follows.
	\begin{proposition}\label{pro:multi-LR-solving}
		Consider parameters \(s\ge r\ge1\), and \(k,d\ge1\) such that \(d+k-1<[3]_q\).  If \(P(\mb{X},\mb{Y},\mbf{Z})=\wt{P}(\mb{X})+\sum_{j=0}^{r-1}P_j(\mb{X})\sum_{|\gamma|=j}Y_\gamma\mbf{Z}^\gamma\in\msf{L}_s(\mb{X},\mb{Y},\mbf{Z})\) is a nonzero polynomial with
		\[
		\deg(\wt{P})\le d+k-1,\quad\tx{and}\quad\deg(P_j)\le d\quad\tx{for all }j\in[0,r-1],
		\]
		then the solution space
		\[
		\{f(\mb{X})\in\mb{K}[\mb{X}]:P^{[f]}(\mb{X})=0\}
		\]
		is a \(\mb{K}\)-affine space with dimension at most \(\binom{m+r-2}{r-2}\).
	\end{proposition}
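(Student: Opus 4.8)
The plan is to mirror the univariate proof of~\cref{pro:LR-solving} as closely as possible, the only genuinely new ingredient being that one must handle the whole ``top-order block'' $\big(\DQ^\gamma f:|\gamma|=r-1\big)$ at once and then separate that block into its individual entries using the extra $\mbf{Z}$-variables. First I would record two reductions. Since $P^{[f]}(\mb{X})=\wt{P}(\mb{X})+\sum_{j=0}^{r-1}\sum_{|\gamma|=j}\mbf{Z}^\gamma P_j(\mb{X})\,\DQ^\gamma f(\mb{X})$ depends $\mb{K}$-affine-linearly on $f$ (through the $\mb{K}$-linear maps $f\mapsto\DQ^\gamma f$), the solution set is a coset of the $\mb{K}$-linear space $W=\{f:\sum_{j=0}^{r-1}\sum_{|\gamma|=j}\mbf{Z}^\gamma P_j(\mb{X})\DQ^\gamma f(\mb{X})=0\}$, so it suffices to show $\dim_\mb{K}W\le\binom{m+r-2}{r-2}$. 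And if all $P_j$ vanished then $P^{[f]}=\wt{P}$, forcing $P=0$; hence $P_{r-1}\neq0$ (otherwise replace $r$ by the largest $r'$ with $P_{r'-1}\neq0$, which only decreases the target dimension).

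Next I would carry out the recovery. Writing the target as $f(\mb{X})=\sum_{|\alpha|<k}f_\alpha\mb{X}^\alpha$, the identity $P^{[f]}(\mb{X})=0$ is equivalent to the vanishing of every $\mb{X}$-coefficient; expanding $\DQ^\gamma f(\mb{X})=\sum_\beta\Big(\prod_{i=1}^m\frac{[\beta_i+\gamma_i]_\Q!}{[\beta_i]_\Q!}\Big)f_{\beta+\gamma}\mb{X}^\beta$ (the scalars here are nonzero in the range $d+k-1<[3]_q$, by~\cref{rem:degree-reducing}) and applying the product rule~\cref{pro:multi-Q-properties}(d), an appropriate $\mb{X}$-coefficient of $P^{[f]}$, indexed by a multi-index $\alpha$, gives an identity in $\mb{K}(\mbf{Z})$ in which the unknowns $f_{\alpha+\gamma}$ with $|\gamma|=r-1$ occur only through the block $\sum_{|\gamma|=r-1}\mbf{Z}^\gamma\,(\text{nonzero scalar})\,f_{\alpha+\gamma}$ multiplied by a fixed ``leading'' coefficient coming from $P_{r-1}$, while every other occurring unknown $f_\nu$ has $\nu$ strictly below some $\alpha+\gamma$ (with $|\gamma|=r-1$) in the componentwise order on $\mb{N}^m$. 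Processing these equations in increasing order of $|\alpha|$, and assuming (see the obstacle below) that the leading coefficient is a unit in $\mb{K}(\mbf{Z})$, the block $\big(f_{\alpha+\gamma}:|\gamma|=r-1\big)$ is determined — as an element of $\mb{K}(\mbf{Z})$ — by already-recovered lower-order coefficients, and the $\binom{m+r-2}{r-2}$ coefficients $\big(f_\eta:|\eta|\le r-2\big)$ are left free.

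It remains to pass from knowing the combination $\sum_{|\gamma|=r-1}\mbf{Z}^\gamma f_{\alpha+\gamma}$ (an element of $\mb{K}(\mbf{Z})$, in fact a polynomial in $\mbf{Z}$) to knowing each $f_{\alpha+\gamma}\in\mb{K}$ individually. Here I would use that $\deg_{\mbf{Z}}(P)<ms$ (from~\cref{pro:multi-LR-interpolation}) to bound the $\mbf{Z}$-degrees of the rational functions involved, instantiate a grid $S^m\subseteq\mb{K}^m$ with $|S|$ exceeding that bound — a hitting set for the relevant space of polynomials by the Combinatorial Nullstellensatz~\cite[Theorem 1.1]{alon-1999-CN} — evaluate the (now determined) equation at each point of $S^m$, and solve the resulting $\mb{K}$-linear system to read off the $f_{\alpha+\gamma}$. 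This is exactly the ``recover one coefficient at a time'' step described in~\cref{sec:results}; it shows that the $\mb{K}$-linear map $f\mapsto\big(f_\eta:|\eta|\le r-2\big)$ is injective on $W$, whence $\dim_\mb{K}W\le\binom{m+r-2}{r-2}$.

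The step I expect to be the main obstacle, and the only one genuinely more delicate than in the univariate setting, is making the leading coefficient a unit: univariately this was achieved by choosing a power $\Q^b$ on which $P_{r-1}$ does not vanish along the relevant window, using $\deg(P_{r-1})\le d<[3]_q$. In the multivariate case one must instead normalize $P$, or pass to a translated monomial basis $(\mb{X}-c)_\Q^{(\alpha)}$ at a carefully chosen $c\in(\mb{K}^\times)^m$, controlling the zero locus of the nonzero polynomial $P_{r-1}$ by~\cref{thm:multivariate-SZ-grid}, so that the coefficient of $P_{r-1}$ multiplying the top-order block is nonzero in $\mb{K}(\mbf{Z})$ for every relevant $\alpha$; this is where the hypothesis $d+k-1<[3]_q$ and the asymptotic regime of~\cref{rem:asymptotic} (all of $d,k,s,r$ being $O(q)$) are used. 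The remaining work — verifying via the product rule that all non-top terms really drop strictly in the partial order, so the induction is well-founded, and that the $\mbf{Z}$-degrees stay within the window where the hitting set separates the block — is routine bookkeeping.
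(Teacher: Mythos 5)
Your proposal is correct and takes essentially the same route as the paper: reduce to the linear solution space, assume $P_{r-1}\ne0$, recover the top-order block $\big(f_{\alpha+\gamma}:|\gamma|=r-1\big)$ inductively in increasing $|\alpha|$ (noting every other unknown lies strictly below some member of the block), separate the entries of the block using a hitting set for the $\mbf{Z}$-variables via $\deg_{\mbf{Z}}(P)<ms$, and count the $\binom{m+r-2}{r-2}$ free coefficients $\big(f_\eta:|\eta|\le r-2\big)$. The only minor misreading is framing the shift to a translated Newton-type basis $(\mb{X}-\Q^\beta)_\Q^{(\alpha)}$ as a genuinely new multivariate obstacle: the paper's proof carries out the entire argument directly in that basis (so the nonzero leading coefficient is simply $(P_{r-1})_{0^m,\alpha+\beta}=P_{r-1}(\Q^{\alpha+\beta})\ne0$ by choice of $\beta$, exactly as you sketch at the end), and that is the same choose-a-good-base-point device already used univariately in~\cref{pro:LR-solving}, with the existence of a suitable $\Q^\beta$ following from $\deg(P_{r-1})\le d<[3]_q$ just as in one variable.
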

	\begin{proof}
		Consider any \(f(\mb{X})\in\mb{K}[\mb{X}]\) satisfying \(P^{[f]}(\mb{X})=0\).  Note that we have \(P(\mb{X},\mb{Y},\mb{Z})\ne0\).  If \(P_j(\mb{X})=0\) for all \(j\in[0,r-1]\), then \(P^{[f]}(\mb{X})=\wt{P}(\mb{X})=0\), which means \(P(\mb{X},\mb{Y},\mbf{Z})=0\), a contradiction.  So there exists \(j\in[0,r-1]\) such that \(P_j(\mb{X})\ne0\).  Without loss of generality, we assume \(P_{r-1}(\mb{X})\ne0\).  (Otherwise, we work with the largest \(r'\) such that \(P_{r'-1}(\mb{X})\ne0\).)
		
		Let us also quickly consider another notation.  For any \(\Q^\beta\in(\mb{K}^\times)^m,\,\alpha\in\mb{N}^m\), and \(g(\mb{X})\in\mb{K}[\mb{X}]\), denote
		\begin{align*}
			g_{\alpha,\beta}\coloneqq\coeff\bigg(\frac{(\mb{X}-\Q^\beta)_\Q^{(\alpha)}}{[\alpha]_\Q!},g\bigg)=\DQ^\alpha g(\Q^\beta)\tag*{by~\cref{pro:multi-Q-properties}(b)}
		\end{align*}
		Recall that \(\mb{K}=\mb{F}_{q^3}\).  Since \(\deg(P_{r-1})\le d<[3]_q\), there exists \(\Q^\beta\in(\mb{K}^\times)^m\) such that \(P_{r-1}(\Q^{\alpha+\beta})\ne0\) for all \(\alpha\in\mb{N}^m,\,|\alpha|\le d+k-1\).  Further, since \(\deg(P)\le d+k-1<[3]_q\), we will work with the basis of monomials \(\{(\mb{X}-\Q^\beta)_\Q^{(\alpha)}:\alpha\in\mb{N}^m,\,|\alpha|\le d+k-1\}\).  Now we have
		\[
		P^{[f]}(\mb{X})\equiv\wt{P}(\mb{X})+\sum_{j=0}^{r-1}P_j(\mb{X})\sum_{|\gamma|=j}\DQ^\gamma f(\mb{X})\mbf{Z}^\gamma=0.
		\]
		So for every \(\alpha\in\mb{N}^m,\,|\alpha|\le d+k-1\), we have
		\begin{align*}
			0=P^{[f]}_{\alpha,\beta}&=\wt{P}_{\alpha,\beta}+\sum_{j=0}^{r-1}\sum_{|\gamma|=j}\bigg(\sum_{\theta\le\alpha}\Qbinom{\alpha}{\theta}(P_j)_{\alpha-\theta,\theta+\beta}\big(\DQ^\gamma f\big)_{\theta,\beta}\mbf{Z}^\gamma\bigg)\\
			&=\wt{P}_{\alpha,\beta}+\sum_{j=0}^{r-1}\sum_{\theta\le\alpha}\Qbinom{\alpha}{\theta}(P_j)_{\alpha-\theta,\theta+\beta}\bigg(\sum_{|\gamma|=j}f_{\theta+\gamma,\beta}\mbf{Z}^\gamma\bigg).
		\end{align*}
		Since \(P_{r-1}(\Q^{\alpha+\beta})=(P_{r-1})_{0^m,\alpha+\beta}\ne0\) for all \(\alpha\in\mb{N}^m,\,|\alpha|\le d+k-1\), we can rewrite the above as
		\begin{align}
			\sum_{|\gamma|=r-1}f_{\alpha+\gamma,\beta}\mbf{Z}^\gamma&=-\frac{1}{(P_{r-1})_{0^m,\alpha+\beta}}\Bigg((\wt{P})_{\alpha,\beta}+\sum_{\substack{(j,\theta)\le(r-1,\alpha)\\(j,\theta)\ne(r-1,\alpha)}}\Qbinom{\alpha}{\theta}(P_j)_{\alpha-\theta,\theta+\beta}\bigg(\sum_{|\gamma|=j}f_{\theta+\gamma,\beta}\mbf{Z}^\gamma\bigg)\Bigg)\label{eq:solve}
		\end{align}
		for all \(\alpha\in\mb{N}^m,\,|\alpha|\le d+k-1\).  So we can set the undetermined coefficients among \(f_{\gamma,\beta}\in\mb{K},\,|\gamma|\le r-2\) freely, and the remaining coefficients of \(f(\mb{X})\) are uniquely determined the above equation.  This means the solution space is a \(\mb{K}\)-affine space having dimension at most \(\binom{m+r-2}{r-2}\).  Further, at any instance of~(\ref{eq:solve}), if we know the R.H.S. and determine the L.H.S., then each coefficient \(f_{\alpha+\gamma,\beta}\) can be recovered by instantiating an arbitrary finite grid \(S^m\) with \(S\subseteq\mb{K},\,|S|\ge ms\) as a hitting set for the \(\mbf{Z}\)-variables, since we have \(\deg_{\mbf{Z}}(P)<ms\).
	\end{proof}

	Our main theorem is now immediate.
	\begin{proof}[Proof of Theorem~\ref{thm:multivariate-LD}]
		Choosing \(r=O(m/\epsilon^m)\), and plugging in all the parameters into~\cref{pro:LR-interpolation} and~\cref{pro:LR-solving} immediately implies that the output list is contained in a \(\mb{K}\)-affine space with dimension at most \(O(m^2/\epsilon)\).
	\end{proof}

		\paragraph*{Acknowledgments.}  The author thanks
		\begin{itemize}[leftmargin=*]
			\item  Dean Doron, Mrinal Kumar, Noga Ron-Zewi, Amnon Ta-Shma, and Mary Wootters for patiently listening to and giving feedback on his undercooked presentations of this work at different times.
			\item  the anonymous reviewers of ITCS 2025 for their critical feedback.
		\end{itemize}
		
	\newpage	
%	\cleardoublepage
%	\newgeometry{margin=1.8cm}
	\raggedright
%	\begin{multicols}{2}[\printbibheading]
%		\printbibliography[heading=none]
%	\end{multicols}
	\printbibliography
%	\cleardoublepage
%	\restoregeometry

\end{document}